\documentclass[ASNA]{USG}
\usepackage{amsmath,amssymb,amsfonts,amsthm}
\usepackage{algorithm}
\usepackage{algpseudocode}
\usepackage{complexity}
\usepackage{xspace}
\usepackage{aliascnt}
\usepackage{array}
\usepackage{adjustbox}
\usepackage{float}

\floatname{algorithm}{Algorithm}

\algtext*{EndIf}
\algtext*{EndWhile}
\algtext*{EndFor}

\DeclareMathOperator{\N}{\mathbb{N}}
\DeclareMathOperator{\FDDS}{\mathbb{D}}
\DeclareMathOperator{\treeProd}{\otimes}

\newcommand{\unroll}[1]{\mathcal{U}(#1)}
\newcommand{\tree}[1]{\mathbf{\lowercase{#1}}}
\newcommand{\forest}[1]{\mathbf{\uppercase{#1}}}
\newcommand{\depth}[2][]{d_{#1}(#2)}
\newcommand{\full}[1]{\mathsf{#1}}
\newcommand{\cut}[2]{\mathcal{C}_{#2}(#1)}
\newcommand{\cutU}[2]{\cut{\unroll{#1}}{#2}}
\newcommand{\cutUn}[1]{\cutU{#1}{n}}
\newcommand{\cutT}[2]{\cut{\tree{#1}}{#2}}
\newcommand{\dt}[1]{\mathcal{D}(#1)}
\newcommand{\ie}{\emph{i.e.}\@\xspace}
\newcommand{\cycle}[1]{C_{{#1}}}

\newcommand{\Semiring}[1]{\mathbb{#1}}
\newcommand{\Partial}{\Semiring{P}}
\newcommand{\Nilpotent}{{\Semiring{F}_{f}}}
\newcommand{\Op}[1]{\mathcal{#1}}
\newcommand{\Di}[2]{\Op{D}^{#1}(#2)}
\newcommand{\unrollP}[1]{\mathfrak{U}(#1)}
\newcommand{\nilPart}[1]{\mathbf{F}(#1)}
\newcommand{\fullPart}[1]{\mathsf{T}(#1)}

\newtheorem{Theorem}{Theorem}

\newtheorem{Lemma}[Theorem]{Lemma}
\newtheorem{Corollary}[Theorem]{Corollary}

\newtheorem{Example}[Theorem]{Example}

\newtheorem{prop}[Theorem]{Proposition}

\makeatletter
\let\c@algorithm\relax
\let\c@figure\relax
\let\c@table\relax
\makeatother

\newaliascnt{algorithm}{Theorem}
\newaliascnt{figure}{Theorem}
\newaliascnt{table}{Theorem}

\newtheorem{Definition}[Theorem]{Definition}

\graphicspath{{images/}}

\articletype{RESEARCH ARTICLE}
\journal{...}
\volume{...}
\copyyear{...}
\startpage{0}
\articledoi{...}

\begin{document}

\title{Solving polynomial equations over partial discrete dynamical systems}
\titlemark{Solving polynomial equations over partial discrete dynamical systems}

\author[1]{Maximilien Gadouleau}[https://orcid.org/0000-0003-4701-738X]
\author[2]{Sara Riva}[https://orcid.org/0000-0003-2133-8089]
\author[3]{Marius Rolland}[https://orcid.org/0009-0006-9073-9803]
\author[2]{Marie-Emile Voge}[https://orcid.org/0000-0002-1361-9094]

\authormark{Gadouleau \textsc{et al.}}

\address[1]{\orgdiv{Department of Computer Science,}\orgname{Durham University,}%
\orgaddress{\state{Durham,}\country{United Kingdom}}}

\address[2]{\orgdiv{}\orgname{Universit\'e de Lille, CNRS, Inria, UMR 9189 CRIStAL,}%
\orgaddress{\state{F-59000 Lille,}\country{France}}}

\address[3]{\orgdiv{Aix-Marseille Universit\'e, CNRS, LIS,}\orgname{}%
\orgaddress{\state{Marseille,}\country{France}}}

\corres{Marius Rolland (\email{marius.rolland@lis-lab.fr})}

\abstract[ABSTRACT]{The analysis of observable phenomena (for instance, in biology or physics) allows the detection of dynamical behaviours. 
	If the conditions are ideal and the number of observations is sufficient, we can represent these phenomena by a dynamical system, also called a functional digraph, that is to say a graph where each node has out-degree exactly one.
	Up to isomorphism, these dynamical systems, equipped with disjoint union as addition and direct product as multiplication, form a commutative semiring. 
	Several previous studies on this semiring have aimed to establish algebraic properties (primality, injectivity) or complexity results (division, factorisation). 
	However, no work has yet been conducted on graphs derived from imperfect observations that result in missing transitions or nodes, in other words, in cases where each node in the graph has an out-degree of at most one. In this case, we say that the system is partial. 
	In this paper, we show that partial dynamical systems, up to isomorphism and equipped with the same addition and multiplication, still form a commutative semiring. 
	We then characterise the prime elements of this semiring, which differ from those of the semiring of dynamical systems. 
	Finally, we highlight two properties shared by both semirings. First, injective univariate polynomials admit the same characterisation in both. 
	Second, division can be computed in polynomial time for partial dynamical systems if and only if it can be for dynamical systems.}

\keywords{Finite dynamical systems | Functional digraphs | Direct product | Polynomial equations | Semiring}

\maketitle
	\section{Preliminaries}

A \textbf{Finite Deterministic Dynamical System} (FDDS) can be viewed in full generality as a pair $X = (C_X, f_X)$ where $C_X$ is a set of states and $f_X$ is a total function from $C_X$ to $C_X$, called the \emph{transition function}.
Such objects can be encoded by their \emph{transition graph}, that is, a graph $G_X$ with vertex set $C_X$ and containing an arc from a vertex $u$ to a vertex $v$ if and only if $f_x(u) = v$.
These graphs are also known as functional digraphs.
We denote by $\FDDS$ the set of FDDS taken up to isomorphism.
In the following, the term FDDS will always refer to an element of $\FDDS$.

In the graph representation, an FDDS can be viewed as a multiset of \emph{weakly connected components} (hereafter simply called connected components), each consisting of a unique limit cycle encoding the periodic states of the system, and of \emph{finite rooted trees} (hereafter simply called trees) attached at cycle vertices.
Among connected FDDS, we distinguish two particular types: cycles and dendrons.
Cycles are connected components with no trees, and the cycle of length $n$ is denoted $\cycle{n}$; FDDS consisting only of cycles are called permutations.
Dendrons are connected components whose cycle has length $1$.

It is shown in~\cite{dorigatti2018polynomial} that $\FDDS$ equipped with the disjoint union (denoted $+$) of transition graphs as addition and the direct product (denoted $\times$) of transition graphs as multiplication forms a commutative semiring.
Since every semiring gives rise to a polynomial semiring, we can introduce polynomial equations over $\FDDS$.
It is proved in~\cite{dorigatti2018polynomial} that solving polynomial equations is undecidable in the general case.
However, if one side is fixed as a constant — by convention the right-hand side — then the associated decision problem is in the class $\NP$, since the size of the right-hand side (its number of states) bounds the size of solutions and there are only a finite number of FDDS having a bounded size.

The literature has focused on equations of the form $P(X) = B$ where $P$ is a univariate polynomial over FDDS and $B$ is an FDDS\footnote{Throughout this paper, all polynomials will be assumed to be univariate. }.
In particular, equations of the form $AX^k = B$ and $AX = B$ have been widely studied since they can be seen as the starting points to be able to study more complex equations.
Indeed, it has been shown that identifying solutions to some particular equations can be done in polynomial time over the size of the inputs ~\cite{article_arbre,poly_inj,automata2024ext,automata2025,Antonio2026,divisions_par_premier}. 
Note that, in the following, for the sake of clarity we will simply say polynomial time instead of polynomial time over the size of the inputs and solving an equation instead of identifying a solution.

An important turning point to tackle these problems has been the introduction of the notion of unroll introduced in~\cite{article_arbre}.

\begin{Definition}[Unroll]
	Let $A$ be an FDDS and $u$ a vertex of $A$.
	The \emph{depth} of $u$ in $A$, simply denoted $\depth{u}$, is the smallest distance between $u$ and a vertex of a cycle of $A$.

	The \emph{unroll} of $A$, denoted $\unroll{A}$, is the graph whose vertex set consists of pairs $(u, \depth{u} + k)$ with $k \ge 0$ an integer, and containing an arc $\big((u,d),(v,d')\big)$ if and only if $f_A(u) = v$ and $d' = d - 1$.
\end{Definition}

From this definition, we see that $\unroll{A}$ consists of several \emph{unroll trees}, one per vertex in the cycles of $A$.
Each such tree contains a unique infinite branch, arising from the cycle vertices, along which finite trees — the trees rooted at cycle vertices in $A$ — are periodically attached.

We denote by $\unroll{\FDDS}$ the set of unrolls taken up to isomorphism; this convention will be maintained throughout unless stated otherwise.
We can then define a sum of unrolls, denoted $+$, which is again the disjoint union, as well as a product of unrolls, denoted $\treeProd$, defined as follows.
For readability, trees are denoted by a bold lowercase letter, e.g., $\tree{t}$, while forests are denoted by a bold uppercase letter, e.g., $\forest{F}$.

\begin{Definition}[Tree product]\label{prodintrees}
	Let $\tree{t}_1 = (V_1, E_1)$ and $\tree{t}_2 = (V_2, E_2)$ be two trees with respective roots $r_1$ and $r_2$.
	The \emph{product} of $\tree{t}_1$ by $\tree{t}_2$ is the tree $\tree{t}_1 \treeProd \tree{t}_2$ whose vertex set is
	$V = \{(u,v) \in V_1 \times V_2 \mid \depth{u} = \depth{v}\}$,
	where $\depth{u}$ denotes the depth of $u$ (its distance to the root of its tree), and whose arc set is:
	\[
	E = \{((u,u'),(v,v')) \mid (u,v) \in E_1,\ (u',v') \in E_2\} \subseteq V^2.
	\]
\end{Definition}

\begin{figure}[t]
	\centering
	\includegraphics[scale=0.88]{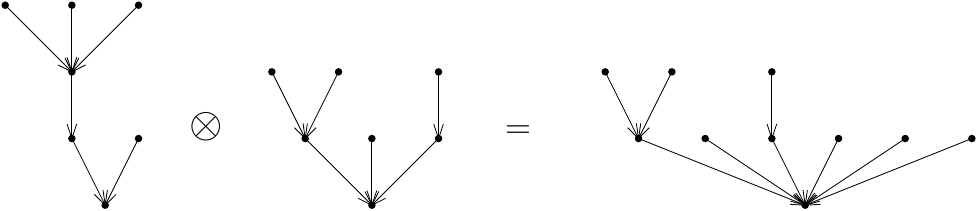}
	\caption{Example of a product of two trees.
		On the left, the two trees being multiplied; on the right, their product.
		Note that the depth of the result equals that of the shallowest factor.}\label{fig:tree_product}
\end{figure}

Figure~\ref{fig:tree_product} shows an example of the above definition. Equipped with these two operations, $\unroll{\FDDS}$ is a commutative semiring and $\unroll{\cdot}$ is a morphism between the semirings $(\FDDS, +, \times)$ and $(\unroll{\FDDS}, +, \treeProd)$.
Moreover, since both operations are also defined for finite trees, the set of forests equipped with these operations is also a semiring, and the set of finite forests of finite trees, denoted $\mathbb{F}_f$, is a pseudo-semiring, \ie, a semiring without a multiplicative identity.

The article~\cite{article_arbre} further defines a total order on trees that is compatible with the product for unroll trees: for all unroll trees $\tree{t}_1, \tree{t}_2$ and $\tree{t}_3$, we have $\tree{t}_1 \treeProd \tree{t}_2 \le \tree{t}_1 \treeProd \tree{t}_3$ if and only if $\tree{t}_2 \le \tree{t}_3$.

Using this order, the authors of~\cite{article_arbre} showed that unroll trees are \emph{cancellable}, \ie, for all unroll trees $\tree{t}_1, \tree{t}_2, \tree{t}_3$, we have $\tree{t}_1 \treeProd \tree{t}_2 = \tree{t}_1 \treeProd \tree{t}_3$ if and only if $\tree{t}_2 = \tree{t}_3$.
Moreover, it is shown that $\unroll{A} \treeProd \forest{X} = \unroll{B}$, with $A$ and $B$ dendrons, can be solved using the \emph{cut} of $\unroll{A}$ and $\unroll{B}$ at a certain depth.
This is of particular importance since~\cite{article_arbre} also proves that the equation $\tree{a} \treeProd \tree{x} = \tree{b}$ can be solved efficiently.

\begin{Definition}[Depth and cut]
	Let $\forest{F}$ be a forest and $u$ a vertex of $\forest{F}$.
	The \emph{depth} of $u$ in $\forest{F}$, denoted $\depth{u}$, is the distance from $u$ to the root of the tree containing it.
	Let $\tree{t}$ be a tree in $\forest{F}$.
	The \emph{depth} of $\tree{t}$, denoted $\depth{\tree{t}}$, is the maximum depth among the vertices of $\tree{t}$.
	Finally, the \emph{depth} of $\forest{F}$, denoted $\depth{ \forest{F} }$, is the maximum depth among its component trees.

	Let $n \ge 0$ be an integer.
	The \emph{cut} of $\forest{F}$ at depth $n$, denoted $\cut{\forest{F}}{n}$, is the forest obtained from $\forest{F}$ by removing all vertices of depth greater than $n$.
\end{Definition}

These results were extended in~\cite{kroot,automata2024ext,poly_inj,Antonio2026} and led to the proof of a property analogous to cancellability.

\begin{Lemma}[\cite{article_arbre, kroot,automata2024ext}]\label{lemme:cancelFiniForest}\label{lemma:casi_annulable}
	Given $\forest{A}$, $\forest{X}$, $\forest{Y} \in \mathbb{F}$, $\forest{A} \treeProd \forest{X} = \forest{A} \treeProd \forest{Y}$ implies $\cut{\forest{X}}{\depth{\forest{A}}} = \cut{\forest{Y}}{\depth{\forest{A}}}$.
\end{Lemma}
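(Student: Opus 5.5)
The plan is to reduce the statement to a cancellation property for forests of bounded depth, and then to prove that property by counting homomorphisms in the style of Lovász.

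\textbf{Step 1: reduce to forests of depth at most $\depth{\forest{A}}$.} Let $n = \depth{\forest{A}}$. A vertex $(u,v)$ of a tree product has $\depth{u} = \depth{v}$, so $\depth{\tree{a} \treeProd \tree{x}} \le \min(\depth{\tree{a}}, \depth{\tree{x}})$. From Definition~\ref{prodintrees} I would check the identity
\[
\tree{a} \treeProd \tree{x} = \tree{a} \treeProd \cut{\tree{x}}{n}
\]
whenever $\depth{\tree{a}} \le n$: vertices of $\tree{x}$ deeper than $n$ have no partner in $\tree{a}$. Since $\treeProd$ distributes over disjoint union, $\forest{A} \treeProd \forest{X} = \forest{A} \treeProd \cut{\forest{X}}{n}$, and likewise for $\forest{Y}$. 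It therefore suffices to prove the following: if $\forest{X}, \forest{Y}$ have depth at most $n$ and $\forest{A} \treeProd \forest{X} = \forest{A} \treeProd \forest{Y}$, then $\forest{X} = \forest{Y}$.

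\textbf{Step 2: a multiplicative count.} For a finite tree $\tree{t}$ and a tree $\tree{s}$, let $\hom(\tree{t},\tree{s})$ be the number of maps sending root to root and arcs to arcs. Such maps automatically preserve depth. For a forest, sum this count over its trees. The tree product is exactly the categorical product of rooted trees with these maps, since its vertices are pairs at equal depth. Hence $\hom(\tree{t}, \tree{a}\treeProd\tree{x}) = \hom(\tree{t},\tree{a})\hom(\tree{t},\tree{x})$, and by distributivity
\[
\hom(\tree{t}, \forest{A}\treeProd\forest{X}) = \hom(\tree{t},\forest{A})\,\hom(\tree{t},\forest{X}).
\]
If $\depth{\tree{t}} \le n$, then $\hom(\tree{t},\forest{A}) > 0$: fold $\tree{t}$ onto a path of length $\depth{\tree{t}}$ by sending each vertex to its depth, and map that path into a deepest branch of $\forest{A}$, which has length $n$. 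So we can divide, and obtain $\hom(\tree{t},\forest{X}) = \hom(\tree{t},\forest{Y})$ for every tree $\tree{t}$ of depth at most $n$.

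\textbf{Step 3: hom counts determine the forest.} This is the step I expect to be the main obstacle. I would prove that a finite forest of depth at most $n$ is determined by its counts $\hom(\tree{t},\cdot)$ over trees $\tree{t}$ of depth at most $n$. First try Lovász's classical argument:
\begin{itemize}
\item split every homomorphism into a surjection onto its image followed by an injection;
\item write $\hom(\tree{t},\forest{F}) = \sum_{\tree{s}} \mathrm{sur}(\tree{t},\tree{s})\,\mathrm{inj}(\tree{s},\forest{F})$, where $\tree{s}$ ranges over trees of depth at most $n$ with at most as many vertices as $\tree{t}$;
\item invert this triangular system, ordered by number of vertices, to recover $\mathrm{inj}(\tree{s},\forest{F})$;
\item recover the multiplicity of each tree $\tree{s}$ in $\forest{F}$ by downward induction on size. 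Counting trees of $\forest{F}$ isomorphic to $\tree{s}$ is $\mathrm{inj}(\tree{s},\forest{F})$ minus contributions from strictly larger trees of $\forest{F}$, and those are already known. Here $\forest{X}$ and $\forest{Y}$ are finite, so only finitely many sizes occur.
\end{itemize}

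If that bookkeeping becomes messy, the fallback is a direct induction on $n$. A tree is a root with a multiset of child subtrees of depth at most $n-1$. Take $\tree{t}$ to be a root carrying $k$ copies of a subtree $\tree{t}'$. Then $\hom(\tree{t},\forest{F}) = \sum_{\tree{x}\in\forest{F}} \hom(\tree{t}',\mathrm{ch}(\tree{x}))^k$, where $\mathrm{ch}(\tree{x})$ is the forest of children of the root of $\tree{x}$. Power sums over all $k$ determine a finite multiset of integers, and one can then combine several subtrees $\tree{t}'$ using mixed products. If $\mathbb{F}$ is meant to allow infinite forests, I would instead work with cardinal-valued counts and restrict to the finite cuts; I would only pursue that if the finite case goes through cleanly.
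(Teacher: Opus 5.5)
Your argument is correct when $\depth{\forest{A}}$ is finite, and it takes a genuinely different route from the one behind this lemma. The paper does not prove the statement; it imports it from the cited works. As the surrounding text indicates, the argument there rests on a total order on trees compatible with $\treeProd$: one isolates an extremal tree of maximal depth in $\forest{A}\treeProd\forest{X}$, identifies it as a product of extremal trees of the factors, cancels it and recurses. The proof of Lemma~\ref{lemma:Partial_withg_full_cancelable_over_nil} imitates this pattern.

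You instead use that $\treeProd$ is the categorical product for root- and arc-preserving maps. Hence $\mathrm{hom}(\tree{t},\cdot)$ is multiplicative, and it is positive on $\forest{A}$ exactly for test trees of depth at most $\depth{\forest{A}}$. You then finish with a Lov\'asz-type inversion. The steps check out: the inversion is unitriangular because the image of $\tree{t}$ has the same depth as $\tree{t}$, and a surjective homomorphism between trees of equal size is an isomorphism.

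One bookkeeping fix is needed. If $\tree{s}$ ranges over isomorphism types, the summand must be $\mathrm{sur}(\tree{t},\tree{s})\,\mathrm{inj}(\tree{s},\forest{F})/|\mathrm{Aut}(\tree{s})|$. For $\tree{t}$ a root with two leaves and $\forest{F}=\tree{t}$, your formula gives $6$ instead of $4$. Likewise, the multiplicity of $\tree{s}$ in $\forest{F}$ is the corrected difference divided by $|\mathrm{Aut}(\tree{s})|$. Triangularity is unaffected.

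As for what each route buys: the order-based proof is constructive, recovering the trees of $\forest{X}$ one at a time, which is what underlies the polynomial-time division algorithms the paper relies on. Yours needs no compatible order and explains conceptually why $\depth{\forest{A}}$ is the right threshold, but it is non-constructive.

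What is missing is the case $\depth{\forest{A}}=\infty$, which the statement covers. $\mathbb{F}$ contains infinite trees (its multiplicative identity is the infinite path), and the cancellability of unroll trees is an instance of this case. There Step~1 is vacuous, and Step~3 does not apply when $\forest{X}$ or $\forest{Y}$ contain infinite trees.

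Cardinal-valued counts are the wrong repair, since infinite counts cannot be divided. Without finiteness hypotheses the statement is in fact false. With $\forest{A}$ made of countably many infinite paths, $\forest{A}\treeProd\forest{X}=\forest{A}\treeProd\forest{Y}$ for $\forest{X}$ one infinite path and $\forest{Y}$ two. With $\forest{A}$ a root carrying infinitely many leaves, a root with one child and a root with two children become indistinguishable.

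Use instead the implicit hypothesis that forests are finite multisets of locally finite trees, which all unrolls in the paper are. The argument then goes as follows:
\begin{enumerate}
\item $\mathrm{hom}(\tree{t},\forest{A})$ is finite and positive for every finite $\tree{t}$, so Step~2 gives equal counts for all finite test trees.
\item Step~3 applied to the finite cuts gives $\cut{\forest{X}}{k}=\cut{\forest{Y}}{k}$ for every finite $k$.
\item K\"onig's lemma, applied to the finite nonempty sets of isomorphisms $\cut{\forest{X}}{k}\to\cut{\forest{Y}}{k}$ (compatible under restriction), yields an isomorphism $\forest{X}\to\forest{Y}$.
\end{enumerate}
If $\forest{X}$ and $\forest{Y}$ are finite, as in Lemma~\ref{lemma:Partial_withg_full_cancelable_over_nil}, no compactness is needed: run Step~3 with $n$ the larger of their depths. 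With this addition your proof covers the lemma in full.
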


Another important result is that any polynomial over finite trees with a sufficiently deep non-constant coefficient is injective over a certain semiring.

\begin{prop}[\cite{poly_inj,Antonio2026}]\label{prop:injDesFinis}
	Let $P = \sum_{i=0}^{m} \forest{A}_i \treeProd \forest{X}^i$ with $d_{max} = \max_{i>0}(\depth{\forest{A}_i})$ and $\depth{\forest{A}_0} \le d_{max}$.
	Then $P$ is injective over the semiring of forests of finite trees of depth at most $d_{max}$.
\end{prop}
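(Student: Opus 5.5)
We argue by contradiction using the multiset structure of forests and a maximality argument on the trees of the solutions. Let $d = d_{max}$, and let $\forest{X} \neq \forest{Y}$ be two forests of finite trees of depth at most $d$ with $P(\forest{X}) = P(\forest{Y})$. For each $i \ge 1$ the coefficient $\forest{A}_i$ has depth at most $d$, and $\forest{X}^i, \forest{Y}^i$ also have depth at most $d$, since the depth of a tree product is the minimum of the depths of its factors.

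The main tool is an induction on a total order of trees (the order of \cite{article_arbre}, which we would extend to finite trees so that it is compatible with $\treeProd$ whenever the multiplier is deep enough). Compare $\forest{X}$ and $\forest{Y}$ through their largest trees. Remove the common part greedily: write $\forest{X} = \forest{Z} + \forest{X}'$ and $\forest{Y} = \forest{Z} + \forest{Y}'$ with $\forest{X}'$ and $\forest{Y}'$ sharing no tree. Let $\tree{t}$ be the largest tree occurring in $\forest{X}' + \forest{Y}'$, and suppose without loss of generality that it occurs in $\forest{X}'$.

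Now choose an index $i>0$ with $\depth{\forest{A}_i} = d$, and a tree $\tree{a}$ of $\forest{A}_i$ of depth $d$ that is maximal. The plan is to show that the largest tree of $P(\forest{X})$ that is ``detectable'' at full depth $d$ differs from the corresponding tree on the $\forest{Y}$ side. Concretely, consider only trees of depth exactly $d$ in $P(\cdot)$: such a tree is a product of trees of depth exactly $d$. Hence this depth-$d$ part equals $\sum_{i} \forest{A}_i^{(d)} \treeProd (\forest{X}^{(d)})^i$, where $^{(d)}$ denotes the subforest of depth-$d$ trees ($\forest{A}_0^{(d)}$ included). This reduces injectivity at depth $d$ to the full-depth subforests, where Lemma~\ref{lemma:casi_annulable} applies. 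In particular, from the highest-degree terms we can peel off the top monomials by comparing maximal trees, and so conclude $\forest{X}^{(d)} = \forest{Y}^{(d)}$.

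Having the depth-$d$ parts equal, we subtract their contributions. The step here is to show that the cross terms are determined, using cancellability of sums in forests as multisets. We then recurse: trees of depth $d-1$ in $P(\forest{X})$ arise from products in which the minimum depth is $d-1$. We would proceed by downward induction on the depth $k$, showing $\forest{X}^{(k)} = \forest{Y}^{(k)}$. At each stage, the trees of $P$ of depth exactly $k$ equal a known quantity (built from deeper parts, already shown equal) plus an expression in $\forest{X}^{(k)}$ that is monotone, leading to a comparison of largest trees.

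The main obstacle is this monotonicity/cancellation step at depth $k<d$: we need that $\tree{a} \treeProd \tree{t}$ determines $\tree{t}$ when $\depth{\tree{a}} \ge \depth{\tree{t}}$, i.e. cancellation for finite trees under the depth condition. This is exactly what Lemma~\ref{lemma:casi_annulable} gives, since the cut at $\depth{\tree{a}}$ of a tree of depth at most $d$ is the tree itself. I would try first to make the largest-tree comparison rigorous via an order compatible with $\treeProd$ for factors of depth $\ge$ the other factor.
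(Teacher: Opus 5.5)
The paper does not prove this proposition; it imports it from \cite{poly_inj,Antonio2026}, so your argument has to stand on its own. Your depth stratification is sound and can be made cleaner. For $k\le d$, let $\forest{F}^{(\ge k)}$ denote the trees of $\forest{F}$ of depth at least $k$, and set $\pi_k(\forest{F})=\cut{\forest{F}^{(\ge k)}}{k}$. Then $\pi_k$ is a semiring homomorphism onto forests whose trees all have depth exactly $k$: a product involving a tree shallower than $k$ is killed, and cutting commutes with the levelwise product. So $P(\forest{X})=P(\forest{Y})$ gives $\sum_i\pi_k(\forest{A}_i)\treeProd\pi_k(\forest{X})^i=\sum_i\pi_k(\forest{A}_i)\treeProd\pi_k(\forest{Y})^i$ for every $k$. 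Some $\pi_k(\forest{A}_i)$ with $i>0$ is nonzero, and the $\pi_k(\forest{X})$ determine $\forest{X}$ by downward induction on $k$. Everything therefore reduces to one claim: a polynomial with a nonzero nonconstant coefficient is injective on forests all of whose trees have depth exactly $k$.

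That layer claim is the whole content, and your argument for it does not work. Lemma~\ref{lemma:casi_annulable} only cancels a common factor in a single product $\forest{A}\treeProd\forest{X}=\forest{A}\treeProd\forest{Y}$. It says nothing about sums of monomials of different degrees, and cancellation does not imply polynomial injectivity. For example, in the monoid semiring of the two-element group $\{1,g\}$ every element is cancellable, yet $Z\mapsto Z^2$ identifies $1$ and $g$. Concretely, take $P(\forest{X})=\tree{p}_k\treeProd\forest{X}^n$ with $\tree{p}_k$ the path of depth $k$, which is the identity on trees of depth at most $k$. This instance already contains uniqueness of roots, $\tree{x}^n=\tree{y}^n\Rightarrow\tree{x}=\tree{y}$ for trees of depth $k$, and nothing in your proposal provides it. So the obstacle you single out, that $\tree{a}\treeProd\tree{t}$ determines $\tree{t}$, is the easy part.

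The missing ingredient is a total order on finite trees of each fixed depth that is strictly compatible with $\treeProd$. For this cancellative monoid, that is equivalent to uniqueness of roots. The paper only quotes such an order for unroll trees, so for finite trees you would have to prove it. Given it, ``peeling off top monomials'' becomes a real argument:
\begin{itemize}
\item In formal $\mathbb{Z}$-combinations of depth-$k$ trees, the largest tree of a product is the product of the largest trees, so these combinations form an integral domain.
\item On each layer, $P(\forest{X})-P(\forest{Y})=(\forest{X}-\forest{Y})\sum_{i\ge1}\forest{A}_i\sum_{j<i}\forest{X}^j\forest{Y}^{i-1-j}$.
\item The second factor has nonnegative coefficients and is nonzero unless both layers are empty, so $\forest{X}-\forest{Y}=0$.
\end{itemize}

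Finally, drop your opening step of stripping the common part $\forest{Z}$ and looking at the largest tree of $\forest{X}'+\forest{Y}'$. Since $P(\forest{Z}+\forest{X}')\neq P(\forest{Z})+P(\forest{X}')$, that tree has no direct trace in $P(\forest{X})$.
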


Exploiting the previous proposition, it is possible to prove that any polynomial over unrolls is injective.

\begin{Theorem}[\cite{poly_inj,Antonio2026}]\label{th:injPolyUnrolls}
	All univariate polynomials over unrolls are injective.
\end{Theorem}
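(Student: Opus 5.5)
Let $P = \sum_{i=0}^{m} \forest{A}_i \treeProd \forest{X}^i$ be a polynomial over unrolls with at least one non-constant coefficient $\forest{A}_i \neq 0$ for some $i>0$; a constant polynomial is excluded or treated trivially. Let $\forest{X},\forest{Y}$ be unrolls with $P(\forest{X}) = P(\forest{Y})$; the plan is to show $\forest{X} = \forest{Y}$ by passing to finite cuts. The tree product preserves depths: a vertex $(u,v)$ of $\tree{t}_1\treeProd\tree{t}_2$ has depth $\depth{u}=\depth{v}$. Hence cutting commutes with both operations, that is, $\cut{\forest{F}\treeProd\forest{G}}{n} = \cut{\forest{F}}{n}\treeProd\cut{\forest{G}}{n}$ and $\cut{\forest{F}+\forest{G}}{n}=\cut{\forest{F}}{n}+\cut{\forest{G}}{n}$. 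I would check this first. It gives, for every $n\ge 0$,
\[
\sum_{i=0}^m \cut{\forest{A}_i}{n}\treeProd \cut{\forest{X}}{n}^i = \sum_{i=0}^m \cut{\forest{A}_i}{n}\treeProd \cut{\forest{Y}}{n}^i .
\]
Every unroll tree has an infinite branch, so each nonzero $\forest{A}_i$ has depth at least $n$ after cutting. Therefore $\depth{\cut{\forest{A}_i}{n}} = n$ for every nonzero coefficient. In particular $d_{max} = n$ and $\depth{\cut{\forest{A}_0}{n}} \le n$. Moreover $\cut{\forest{X}}{n}$ and $\cut{\forest{Y}}{n}$ are forests of finite trees of depth at most $n$.

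Next I would make these forests finite. Unrolls may have infinitely many components, so I would first restrict attention to unrolls of FDDS, which have finitely many trees. Then every cut lies in the finite-forest setting, and Proposition~\ref{prop:injDesFinis} applies to the cut polynomial $P_n = \sum_i \cut{\forest{A}_i}{n}\treeProd \forest{Z}^i$. It yields $\cut{\forest{X}}{n} = \cut{\forest{Y}}{n}$ for every $n$.

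The remaining step, which I expect to be the main obstacle, is to recover $\forest{X} = \forest{Y}$ from equality of all cuts. For unrolls of FDDS this should follow from periodicity. Each unroll tree is determined by its cycle length and the periodic sequence of finite trees hung along its infinite branch. All of these data are bounded by the size of the FDDS, so a cut at depth $n$ larger than the number of states plus the maximal tree depth determines each unroll tree up to isomorphism.

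The subtle point is that the isomorphism between the cuts must match components correctly. Two distinct unroll trees could still have isomorphic cuts at some fixed depth. I would handle this by choosing $n$ beyond twice the total number of states of $\forest{X}+\forest{Y}$. At that depth, two unroll trees of FDDS with isomorphic cuts must be equal, because their branch sequences agree on a window longer than the sum of their periods, and Fine–Wilf type reasoning then forces the sequences to coincide. This yields a bijection between the components of $\forest{X}$ and $\forest{Y}$, hence $\forest{X}=\forest{Y}$. If general (non-FDDS) unrolls must be covered, I would instead use a König-style compactness argument over the increasing sequence of cut isomorphisms.
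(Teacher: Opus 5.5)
Your proposal is correct and follows the route the paper indicates ("exploiting the previous proposition"): cuts commute with $+$ and $\treeProd$, Proposition~\ref{prop:injDesFinis} applied to the cut polynomial gives $\cut{\forest{X}}{n} = \cut{\forest{Y}}{n}$, and a sufficiently deep cut then determines an unroll of an FDDS. Your Fine--Wilf lifting step is sound once you add one justification: if $h$ is the maximal transient depth, then at every depth $k \le n-h$ the branch vertex is the only vertex with a descendant at depth $n$, so any isomorphism of cuts maps the branch prefix to the branch prefix and preserves the hanging trees there.
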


Furthermore, it is shown that solving $P(\forest{X}) = \forest{B}$ can be done in polynomial time, where $P$ is a polynomial over finite forests.
Moreover, $Q(\forest{X}) = \unroll{B}$, with $Q = \sum_{i=0}^{m} \unroll{A_i} \treeProd \forest{X}^i$, can also be solved in polynomial time with respect to the sum of the sizes of $B$ and the $A_i$.
Additionally,~\cite{poly_inj,Antonio2026} provide a characterization of injective polynomials over FDDS.

\begin{Theorem}[\cite{poly_inj, Antonio2026}]\label{th:cara_poly_inj}
	Let $P = \sum_{i=0}^{m} A_i X^i$ be a polynomial over FDDS.
	Then, $P$ is injective if and only if at least one of its nonconstant coefficients, namely some $A_i$ with $i \neq 0$, contains a dendron.
\end{Theorem}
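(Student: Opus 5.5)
We prove the two directions separately; the sufficiency direction goes through the unroll morphism $\unroll{\cdot}$ together with Theorem~\ref{th:injPolyUnrolls}, and the necessity direction by an explicit counterexample.

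\textbf{Sufficiency.} Assume some $A_i$ with $i>0$ contains a dendron, and suppose $P(X)=P(Y)$. Since $\unroll{\cdot}$ is a semiring morphism, we get $\sum_i \unroll{A_i}\treeProd \unroll{X}^i=\sum_i \unroll{A_i}\treeProd\unroll{Y}^i$. Theorem~\ref{th:injPolyUnrolls} then gives $\unroll{X}=\unroll{Y}$. This is not yet $X=Y$, because the unroll forgets how the unroll trees are grouped into cycles. To recover the cycle structure, we use the fact that the product by a dendron preserves cycle lengths. Let $D$ be the dendron components of $A_i$. Decompose $X=\sum_p X_p$ and $Y=\sum_p Y_p$ by cycle length $p$. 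For each $p$, the components of $P(X)$ with cycle length $p$ are determined by the $X_q$ with $q\mid p$ (via lcm). We then proceed by induction on $p$, in increasing divisibility order, assuming $X_q=Y_q$ for all $q<p$ with $q\mid p$. Subtracting the contributions already known to coincide (the semiring is cancellative for equal summands, since it is a multiset of components), what remains at length $p$ involves $D X_p$ against $D Y_p$ plus terms that are monotone in $X_p$. Applying the unroll argument restricted to length-$p$ components yields $\unroll{X_p}=\unroll{Y_p}$. Since a connected FDDS of cycle length $p$ is determined by its cyclic sequence of unroll trees, we still have to show that the rotation classes agree. The main obstacle is here. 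I would handle it by noting that $\unroll{X_p}$ fixes the multiset of unroll trees, and that multiplying by $C_1$-rooted dendron parts of $A_i$ keeps each component of $X_p$ as a distinct component of the product with the same cyclic arrangement of trees. Comparing the components of $P(X)$ and $P(Y)$ of length exactly $p$ whose unrolls are divisible in the appropriate way, and taking them in decreasing order with respect to the total order of~\cite{article_arbre}, lets us peel off the components of $X_p$ one by one.

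\textbf{Necessity.} If no nonconstant coefficient contains a dendron, every component of every $A_i$ with $i>0$ has cycle length at least $2$. We take $X=C_2$ and $Y=2C_1$. Then $X^i=C_2^i=2^{i-1}C_2$ and $Y^i=2^iC_1$. A connected $A$ with cycle length $\ell\ge2$ satisfies $A\times C_2$ equal to $2A$ when $\ell$ is even, and equal to $A\times 2C_1$ when $\ell$ is odd only if the lcm/gcd structure matches, which is not always the case. So we adjust the choice of witness: take $X=C_L$ and $Y=L\,C_1$ with $L=\operatorname{lcm}$ of all cycle lengths appearing in the $A_i$, $i>0$. For a component $A$ of cycle length $\ell\mid L$, the product $A\times C_L$ consists of $\gcd(\ell,L)=\ell$ components of cycle length $L$. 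Checking carefully, these equal $A\times C_L$, whereas $A\times LC_1=LA$. These differ, so instead we choose $X=C_\ell$-type witnesses satisfying $A\times C_\ell=\ell A$, namely $X=C_1\cdot$ tuned permutations. The cleanest choice is the standard one from~\cite{poly_inj}: pick $X=C_L$ and $Y$ obtained from $C_L$ by rotating the structure. Since $X$ and $Y$ are permutations, $X^i$ and $Y^i$ are sums of $C_L$ copies, so it suffices to find distinct permutations $X\neq Y$ with $A\times X=A\times Y$ for every $A$ whose cycle lengths are all at least $2$. The witness $X=C_1+C_{\ell'}\cdots$ requires $A_0$ cancellation, which is fine because $A_0$ appears identically on both sides.
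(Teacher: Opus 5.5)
Your sufficiency direction can be completed, but the necessity direction has a genuine gap: you never exhibit a valid pair of witnesses. Your two concrete attempts ($X=\cycle{2}$, $Y=2\cycle{1}$ and $X=\cycle{L}$, $Y=L\cycle{1}$) fail, as you note yourself. The reduction you end with is to find permutations $X\neq Y$ with $A\times X=A\times Y$ for \emph{every} $A$ whose cycle lengths are all at least $2$, and no such pair exists. Take a prime $p$ coprime to all cycle lengths of $X$ and $Y$. Then $\cycle{p}\times\cycle{k}=\cycle{pk}$ and $k\mapsto pk$ is injective, so $\cycle{p}\times X=\cycle{p}\times Y$ already forces $X=Y$. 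Two side remarks: ``rotating the structure'' of $\cycle{L}$ just gives $\cycle{L}$ back, and powers of an arbitrary permutation are not sums of copies of $\cycle{L}$.

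The witnesses must depend on the set $S$ of cycle lengths occurring in the $A_i$ with $i>0$. The paper gives no proof of this statement. Wherever it uses this direction, it imports such coefficient-dependent permutations from \cite{article_arbre} and Section~3.3 of \cite{Antonio2026}. Here is one construction that works.
\begin{itemize}
\item A connected component $A$ of cycle length $\ell$, multiplied by $\cycle{k}$, gives $\gcd(\ell,k)$ copies of $A$ unwound to cycle length $\operatorname{lcm}(\ell,k)$. So $A\times X$ is read off from $\cycle{\ell}\times X$, and it suffices to have $\cycle{\ell}\times X^i=\cycle{\ell}\times Y^i$ for all $\ell\in S$ and $i\ge1$.
\item Encode a permutation $X$ by $f_X(n)=\sum_{k\mid n}k\,x_k$, the number of fixed points of its $n$-th iterate. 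This determines $X$ by M\"obius inversion, and $f_{X\times Y}=f_Xf_Y$.
\item Pick a prime $p_\ell\mid\ell$ for each $\ell\in S$ (possible since $\ell\ge2$), and let $Q$ be the set of these primes.
\item Expand $\prod_{p\in Q}(p\cycle{1}-\cycle{p})$. Put the terms with an even number of factors $\cycle{p}$ into $X$ and the others into $Y$.
\end{itemize}
Then $f_X(n)-f_Y(n)$ equals $\prod_{p\in Q}p$ if no $p\in Q$ divides $n$, and $0$ otherwise. Hence $f_X(n)=f_Y(n)$ whenever some $\ell\in S$ divides $n$, which gives $\cycle{\ell}\times X^i=\cycle{\ell}\times Y^i$. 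Moreover $X\neq Y$, since only $X$ has fixed points. For $S=\{2,3\}$ this gives $X=6\cycle{1}+\cycle{6}$ and $Y=3\cycle{2}+2\cycle{3}$.

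For sufficiency, the step you call the main obstacle is not one, and your peeling argument for it is too vague to check. First make the per-length step explicit. By induction, $X'=\sum_{q\mid p,\,q<p}X_q$ equals the corresponding $Y'$. Cancelling the common length-$p$ components of $P(X')$ leaves $\sum_{k\ge1}B_kX_p^k=\sum_{k\ge1}B_kY_p^k$, where $B_k=\sum_{i\ge k}\binom{i}{k}A_i^{(p)}X'^{\,i-k}$ and $A_i^{(p)}$ collects the components of $A_i$ whose cycle length divides $p$. The dendron guarantees some $B_k\neq\emptyset$ for every $p$; this is exactly where the hypothesis is used. Theorem~\ref{th:injPolyUnrolls} then yields $\unroll{X_p}=\unroll{Y_p}$.

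From there the conclusion is immediate. Every component of $X_p$ has cycle length exactly $p$, so it contributes exactly $p$ unroll trees. The finite trees hanging from the first $p$ vertices of the infinite branch of any one of them recover the component up to rotation. Hence each component's multiplicity in $X_p$ is the number of matching unroll trees divided by $p$, and $X_p=Y_p$ follows.
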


	\section{Algebraic properties of partial systems} \label{section:partel_introduction}

    A Partial Discrete Dynamical System (PDDS) is a pair $(C_X, f_X)$ where the transition function $f_X$ may be partial (some elements may have no image).
    Taken up to isomorphism, these objects form the set $\Partial$.
    Note that throughout this paper, partial dynamical systems are always considered up to isomorphism unless stated otherwise.
    Equipped with disjoint union as addition and direct product as multiplication, $\Partial$ forms a super-semiring of $(\FDDS, +, \times)$.
    The aim of this paper is to shed light on the algebraic structure of the semiring $(\Partial, +, \times)$.

	Let $A$ be a partial system.
	Remark that we can separate the connected components of $A$ into two categories (or submultisets): those having a cycle, denoted $\fullPart{A}$, and those that do not, denoted $\nilPart{A}$.
    Let us remark that  $\fullPart{\cdot}$ is a homomorphism between the semiring $(\Partial,+,\times)$ and $(\FDDS,+,\times)$. 
	The latter part is simply a finite forest.
	From there, we distinguish two remarkable subsets of partial systems: those where $\nilPart{X}$ is empty (the FDDS) and those where $\fullPart{X}$ is empty (the finite forests) (see Figure~\ref{fig:sys_partiel} for an example).

	\begin{figure}[t]
		\centering
		\includegraphics[page=1]{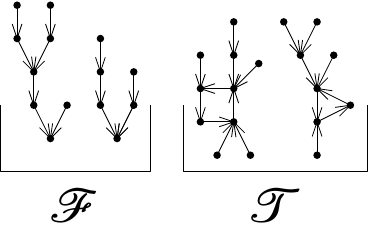}
		\caption{Example of a partial system $A$ highlighting $\nilPart{A}$ and $\fullPart{A}$.}\label{fig:sys_partiel}
	\end{figure}

	Remark that the set of finite forests is closed under addition and also under multiplication.
	Indeed, the direct product of two forests is a forest since each vertex is either derived from the product of two vertices with out-degree $1$ and thus has out-degree $1$, or from the product of a vertex with out-degree at most $1$ and a vertex with out-degree $0$ and thus has out-degree $0$.
	Thus, the set of finite forests, which, recall, is denoted $\mathbb{F}_f$, equipped with disjoint union as addition and direct product as multiplication, forms a \emph{pseudo-semiring}, \ie, a semiring without a multiplicative identity.
                 
	\begin{Lemma}
		The set $\mathbb{F}_f$ is an \emph{ideal} of $(\Partial,+,\times)$, \ie, $(\mathbb{F}_f,+)$ is a submonoid of $(\Partial,+)$ and for every element $A$ of $\Partial$ and every element $B$ of $\mathbb{F}_f$, we have that $A \times B$ and $B \times A$ are elements of $\mathbb{F}_f$.
	\end{Lemma}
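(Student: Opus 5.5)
The plan is to check the two defining conditions of an ideal directly from the definitions of disjoint union and direct product. The key observation is that a partial system lies in $\mathbb{F}_f$ exactly when its transition graph is acyclic, \ie, when $\fullPart{\cdot}$ is empty.

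\emph{Additive submonoid.} The empty system is the additive identity of $(\Partial,+)$. It has no vertices, hence no cycle, so it lies in $\mathbb{F}_f$. A disjoint union of two acyclic graphs is again acyclic, since every cycle lies inside one connected component and every component comes from one of the summands. So $\mathbb{F}_f$ is closed under $+$. Finiteness is clearly preserved as well.

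\emph{Absorption.} Let $A \in \Partial$ and $B \in \mathbb{F}_f$. Since $\times$ is commutative on $\Partial$ up to isomorphism, it suffices to treat $A \times B$.

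First, $A \times B$ is a partial system. A vertex $(u,v)$ has image $(f_A(u), f_B(v))$ when both images exist and no image otherwise, so its out-degree is at most one.

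Second, it remains to show that $A \times B$ has no cycle. Suppose $(u_0,v_0) \to (u_1,v_1) \to \dots \to (u_k,v_k) = (u_0,v_0)$ is a cycle with $k \ge 1$. Each arc of the product projects onto an arc of each factor, so $v_0 \to v_1 \to \dots \to v_k = v_0$ is a closed walk of length $k \ge 1$ in $B$. In a graph of out-degree at most one, such a closed walk contains a cycle. This contradicts the acyclicity of $B$.

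Equivalently, one can argue by depth. By induction, every vertex of a finite forest reaches a sink in at most $\depth{B}$ steps. Following the trajectory of $(u,v)$ in $A\times B$, the second coordinate reaches a sink of $B$ within $\depth{B}$ steps, and at that point the pair has no image. So every trajectory terminates and no cycle exists. Hence $A \times B$ is a finite acyclic partial system, \ie, an element of $\mathbb{F}_f$.

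There is no real obstacle here. The only point needing care is the projection argument: arcs of the direct product project onto arcs of each factor, which follows directly from the definition of the direct product on transition graphs. This is the same reasoning the paper used just above to show that $\mathbb{F}_f$ is closed under multiplication.
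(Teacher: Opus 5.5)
Your proof is correct. The paper gives no standalone proof of this lemma; it rests on the remark just before it, which argues only through out-degrees: a vertex $(u,v)$ of a product has out-degree $1$ exactly when both $u$ and $v$ do. Taken literally, that count only shows that the product is again a partial system, and it is phrased for a product of two forests.

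Your projection step is genuinely different. A cycle of $A \times B$ maps to a closed walk of positive length in $B$, and this supplies the acyclicity that membership in $\mathbb{F}_f$ actually requires. It also handles an arbitrary $A \in \Partial$ uniformly, since it uses nothing about $A$. So your closing sentence undersells your own contribution: the cycle projection is not what the paper's remark does. Your depth variant additionally gives the bound $\depth{A \times B} \le \depth{B}$.

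The paper's later machinery offers a third route. Write $A = \nilPart{A} + \fullPart{A}$. Lemma~\ref{lemma:caracterisation_nilpotent_product} expresses $\nilPart{A} \times B$ as a sum of tree products. Lemma~\ref{lemma:caracterisation_nilpotent_full_product} gives $\fullPart{A} \times B = B \treeProd \unrollP{\fullPart{A}}$. All of these are finite forests of depth at most $\depth{B}$. That route describes the product explicitly, whereas yours establishes the ideal property with the least effort.
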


	In this section, we seek to characterize the prime elements of $(\Partial,+,\times)$, if they exist, and to determine which polynomials induced by $(\Partial,+,\times)$ are injective.
	For the latter, we will extend the definition of unroll to the case of partial systems.
	Indeed, remark that the definition of unroll given for FDDS does not directly apply to partial systems since some connected components do not have a cycle, implying that the function $\unroll{X}$ is not total, which contradicts the definition of a morphism.

	\subsection{Prime elements of partial systems}\label{subsec:premier_partiel}

In this section, we show that the isolated point without a loop, simply denoted $\varrho$, is the unique prime element of $(\Partial,+,\times)$.
Recall that an element $p$ is prime in a semiring $(M,\diamond,\star)$ if it is nonzero, non-invertible and if for all $a,b$ in $M$, whenever there exists $k$ in $M$ such that $p\star k = a \star b$, there exists $k'$ such that $p \star k' = a$ or $p\star k' = b$.
In order to show that $\varrho$ satisfies this property, we characterize the form of the elements that $\varrho$ divides, and then conclude.

\begin{Lemma}\label{lemma:forme_div}
	Let $A$ and $K$ be partial systems.
	Then, $\varrho \times K = A$ if and only if $A$ is equal to $|K| \varrho$.
\end{Lemma}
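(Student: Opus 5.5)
The plan is to compute the direct product $\varrho \times K$ explicitly from the definition and read off its structure. Write $\varrho = (\{r\}, f_\varrho)$, where $f_\varrho$ is undefined at $r$, and let $K = (C_K, f_K)$ be any partial system. The product $\varrho \times K$ has state set $\{r\} \times C_K$. Its transition function is defined at $(r,k)$ exactly when both $f_\varrho(r)$ and $f_K(k)$ are defined. Since $f_\varrho(r)$ is never defined, no state of $\varrho \times K$ has an image. Hence $\varrho \times K$ consists of $|C_K| = |K|$ isolated vertices without loops, that is, $\varrho \times K = |K|\varrho$.

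For the forward implication, suppose $\varrho \times K = A$. By the computation above, $A \cong |K|\varrho$; equality in $\Partial$ is isomorphism, so $A = |K|\varrho$.

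For the converse, the statement should be read as: if $A = |K|\varrho$, then $\varrho \times K = A$. This follows from the same computation, since $\varrho \times K = |K|\varrho = A$. It is also worth noting the consequence the paper will use when characterizing primality: the elements divisible by $\varrho$ are exactly the systems $n\varrho$ with $n \ge 0$. For $n \ge 1$ take $K = n\varrho$, or any system with $n$ states. For $n = 0$ take $K$ to be the empty system, which gives the empty system.

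I do not expect a real obstacle. The only care needed is to handle the degenerate cases consistently with the semiring conventions: the empty system $K$, whose product with $\varrho$ is empty, equal to $0\cdot\varrho$; and the fact that a loop in $K$ does not produce a loop in the product, because the $\varrho$ coordinate has no transition at all.
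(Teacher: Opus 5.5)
Your proof is correct and matches the paper's: both rest on the observation that every vertex of $\varrho \times K$ has out-degree $0$ (the paper phrases this as out-degrees multiplying in a direct product) and that the product has $|K|$ vertices, so $\varrho \times K = |K|\varrho$. Your handling of the converse is slightly cleaner, since the paper's argument specializes $K$ to $|K|$ fixed points, while your identity holds for arbitrary $K$ and so covers the literal statement directly.
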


\begin{proof}
	$(\Leftarrow)$ Suppose that $A$ is equal to $|K| \varrho$.
	Then, $\varrho$ divides $A$ directly.
	Indeed, if we consider $K$ as $|K|$ fixed points, then $K \times \varrho = |K| \varrho$ by definition.

	$(\Rightarrow)$ Suppose that $\varrho \times K = A$.
	Then, $A$ consists of $|K|$ vertices.
	Furthermore, since the out-degree of a vertex $(u,v)$ of a product is equal to the out-degree of $u$ times that of $v$, it follows that all vertices of $\varrho \times K$ have out-degree $0$.
	This implies that $A$ consists of $|K|$ copies of $\varrho$.
\end{proof}

\begin{prop}\label{prop:point_prime_in_partial}
	Let $A$ and $B$ be partial systems.
	Then, there exists a partial system $K$ such that $\varrho \times K = A \times B$ if and only if $A$ or $B$ consists only of copies of $\varrho$.
\end{prop}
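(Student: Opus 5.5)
The plan is to reduce everything to Lemma~\ref{lemma:forme_div}. By that lemma, the existence of $K$ with $\varrho \times K = A \times B$ is equivalent to $A \times B$ being of the form $|K|\varrho$, i.e.\ to every vertex of $A \times B$ having out-degree $0$. So the statement becomes a purely combinatorial claim: every vertex of $A\times B$ is a sink if and only if $A$ or $B$ consists only of copies of $\varrho$. Here ``consists only of copies of $\varrho$'' means every vertex has out-degree $0$; this includes the empty system, for which both sides hold trivially with $K$ empty.

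For the direction $(\Leftarrow)$, suppose without loss of generality that $A = n\varrho$. Every vertex $(u,v)$ of $A \times B$ has out-degree equal to the product of the out-degrees of $u$ and $v$. As $u$ has out-degree $0$, so does $(u,v)$. Hence $A \times B = |A||B|\varrho$, and Lemma~\ref{lemma:forme_div} with $K$ taken as $|A||B|$ fixed points (or directly $K = A\times B$ viewed as isolated points) gives the required $K$.

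For $(\Rightarrow)$, I argue by contraposition. Suppose neither $A$ nor $B$ consists only of copies of $\varrho$. Then there is a vertex $u$ of $A$ with an image $f_A(u)$, and a vertex $v$ of $B$ with an image $f_B(v)$. In the direct product, $(u,v)$ has the image $(f_A(u), f_B(v))$, so it has out-degree $1$. But by Lemma~\ref{lemma:forme_div}, any $\varrho \times K$ has all vertices of out-degree $0$, so $A \times B \neq \varrho \times K$ for every $K$.

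There is no real obstacle. The only care needed is in the bookkeeping: the out-degree of a product vertex is the product of the out-degrees, and the empty system and the number of copies in $|K|\varrho$ must be handled consistently with Lemma~\ref{lemma:forme_div}.
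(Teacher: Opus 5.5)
Your proof is correct and follows essentially the same route as the paper. Both directions reduce, via Lemma~\ref{lemma:forme_div}, to the fact that a vertex $(u,v)$ of $A \times B$ has out-degree equal to the product of the out-degrees of $u$ and $v$. Hence every vertex of $A \times B$ is a sink exactly when all vertices of $A$, or all vertices of $B$, are sinks.
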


\begin{proof}
	$(\Leftarrow)$ Suppose that $A$ or $B$ consists only of copies of $\varrho$.
	Without loss of generality, suppose that $A = |A| \varrho$.
	Then, we deduce from Lemma~\ref{lemma:forme_div} that $A \times B$ consists of $|A| \times |B|$ copies of $\varrho$.
	And from the same lemma, it follows that $\varrho$ divides $A \times B$ and $A$.

	$(\Rightarrow)$ Suppose that there exists a partial system $K$ such that $\varrho \times K = A \times B$.
	Then, by Lemma~\ref{lemma:forme_div}, it follows that $A \times B$ consists only of copies of $\varrho$.
	Now, since the out-degree of a vertex $(u,v)$ of a product is equal to the out-degree of $u$ times that of $v$, it follows that if $A$ and $B$ each contain a vertex of out-degree $1$, then so does $A \times B$.
	We deduce that $A$ or $B$ consists only of copies of $\varrho$.
\end{proof}

\begin{Corollary}\label{cor:varrho_prime}
	$\varrho$ is prime in $(\Partial,+,\times)$.
\end{Corollary}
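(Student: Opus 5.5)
The plan is to check the three conditions in the definition of a prime element one after another, relying on Proposition~\ref{prop:point_prime_in_partial} for the divisibility condition.

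\emph{Nonzero.} The zero of $(\Partial,+,\times)$ is the empty system, since it is the identity for disjoint union. The system $\varrho$ has one vertex, so $\varrho \neq 0$.

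\emph{Non-invertible.} The multiplicative identity is the fixed point $\cycle{1}$. Suppose $\varrho \times K = \cycle{1}$ for some $K$. By Lemma~\ref{lemma:forme_div}, $\varrho \times K = |K|\varrho$, and every vertex of this system has out-degree $0$. But $\cycle{1}$ has a vertex of out-degree $1$, a contradiction. So $\varrho$ has no inverse.

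\emph{Divisibility.} Assume $\varrho \times K = A \times B$ for some $K$. Proposition~\ref{prop:point_prime_in_partial} (direction $\Rightarrow$) shows that $A$ or $B$ consists only of copies of $\varrho$. Without loss of generality, take $A = |A|\varrho$. Let $K'$ be the system formed by $|A|$ fixed points. Then $\varrho \times K' = |K'|\varrho = |A|\varrho = A$, either by Lemma~\ref{lemma:forme_div} or directly, since $\varrho \times \cycle{1} = \varrho$ and the product distributes over sums. Hence $\varrho$ divides $A$.

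The argument has no real obstacle. The only step needing care is making sure $\varrho$ is not a unit. This follows from the out-degree observation: a product containing $\varrho$ has only out-degree-$0$ vertices, so it can never equal $\cycle{1}$.
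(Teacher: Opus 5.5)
Your proof is correct and follows the same route as the paper: you check nonzero and non-invertibility directly, and you derive the divisibility condition from Proposition~\ref{prop:point_prime_in_partial}. One small difference is in non-invertibility: you use the out-degree observation on $\varrho \times K$, whereas the paper uses a vertex count plus the need for an outgoing edge to show the fixed point is the only unit. You also make explicit the witness $K'$ (the $|A|$ fixed points) that the paper leaves inside the proposition.
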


\begin{proof}
	By definition, no element of $(\Partial,+,\times)$ other than the fixed point is invertible.
	Indeed, if we consider two partial systems $A$ and $B$, then the product $A \times B$ has $|A| \times |B|$ vertices.
	Thus, for $A \times B$ to be the fixed point, both $A$ and $B$ must consist of a single state.
	Moreover, this state must have an outgoing edge.
	Therefore, $A$ and $B$ are the fixed point.

	Thus, $\varrho$ is nonzero, non-invertible and, by Proposition~\ref{prop:point_prime_in_partial}, it satisfies the last condition for being prime.
\end{proof}

We now want to prove that $\varrho$ is the unique prime element of $(\Partial,+,\times)$.
For this, it suffices to prove that all partial systems with at least two vertices are not prime, since the multiplicative identity is by definition not prime.
This is the purpose of the following theorem.

\begin{Theorem}\label{th:partiel_prime}
	$\varrho$ is the unique prime element of $(\Partial,+,\times)$.
\end{Theorem}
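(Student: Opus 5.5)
The plan is to show that no partial system $P$ other than $\varrho$ is prime. The empty system is zero, and the fixed point $\cycle{1}$ is the unit, so neither is prime. The only other one-vertex system is $\varrho$ itself, so it remains to treat every $P$ with $n = |P| \ge 2$ vertices. For each such $P$ I would exhibit partial systems $A$ and $B$ such that $P$ divides $A \times B$ but divides neither $A$ nor $B$. Together with Corollary~\ref{cor:varrho_prime}, this gives the theorem.

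To certify non-divisibility I would use two invariants of a partial system $X$: its size $|X|$, and the number $e(X)$ of vertices of out-degree $1$. As in the proof of Lemma~\ref{lemma:forme_div}, out-degrees multiply in a product, so $|X\times Y| = |X|\,|Y|$ and $e(X \times Y) = e(X)\,e(Y)$. Hence if $P$ divides $A$, then $n$ divides $|A|$ and $e(P)$ divides $e(A)$. In particular, if $e(P) \ge 1$ then $e(P) \le e(A)$ whenever $e(A) > 0$, and $P$ cannot divide an $A$ with $e(A)=0$ unless $e(P)=0$. I would split into two cases according to whether $P$ has an arc.

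\emph{Case 1: $P$ has no arc,} that is, $P = n\varrho$. Take $A = \varrho$ and $B = n\,\cycle{1}$, the sum of $n$ fixed points. Then $A \times B = n\varrho = P \times \cycle{1}$, so $P$ divides $A\times B$. However, $P$ does not divide $A$, since $|A| = 1 < n$. Nor does $P$ divide $B$: by Lemma~\ref{lemma:forme_div}, every multiple of $P = n\varrho$ (which is itself a multiple of $\varrho$) consists only of copies of $\varrho$, while $B$ contains arcs.

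\emph{Case 2: $P$ has at least one arc,} so $e(P) \ge 1$. This is the step I expect to be the main obstacle. Here I would try to factor a well-chosen multiple of $P$ in two ways. A natural candidate is $P \times (\varrho + \cycle{1}) = n\varrho + P$, together with analogous multiples $P \times (k\varrho + \cycle{1})$. The aim is to write such a multiple as $A\times B$, where each of $A$ and $B$ fails one of the invariants: either its size is not a multiple of $n$, or its count of out-degree-one vertices is positive but not a multiple of $e(P)$. The first attempt I would make is to exploit the fact that multiplying by $\varrho$ destroys all arcs. For instance, one could look for $A = \varrho + Q$ and $B = R$ with $\varrho \times R = |R|\varrho$ and $Q \times R$ chosen to reproduce $P$ up to isolated points. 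This requires $P$ itself to decompose suitably, and in particular to factor through the connected-component structure given by $\fullPart{\cdot}$ and $\nilPart{\cdot}$.

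If no uniform construction works, I would fall back on a finer case analysis. One subcase is $e(P) = n$, that is, $P$ is an FDDS. Here $\varrho$-multiples in one factor and arc-bearing systems in the other should separate divisibility, since any multiple of an FDDS with an isolated-point-free structure has controlled $e$. The other subcase is $0 < e(P) < n$. Here the leaves of out-degree $0$ in $\nilPart{P}$ should allow the isolated-point count to be shifted between the two factors.
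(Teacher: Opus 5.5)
Your Case 1 ($P = n\varrho$) is correct. Case 2, every $P$ with at least one arc, is most of the theorem, and it is never proved. You list candidate multiples such as $P\times(\varrho+\cycle{1}) = n\varrho + P$ and speculate about decomposing $P$ through $\fullPart{\cdot}$ and $\nilPart{\cdot}$. Your fallback subcases ($e(P)=n$ versus $0<e(P)<n$) end with statements like ``should separate divisibility'' rather than an explicit pair $A,B$ with a non-divisibility argument. Your two invariants also cannot certify the non-divisibility you need. For instance, if $P$ is an FDDS with $n$ vertices, then $n$ divides $|n\cycle{1}|=n$ and $e(P)=n$ divides $e(n\cycle{1})=n$, so size and $e$ alone do not rule out that $P$ divides $n\cycle{1}$.

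The missing idea is that your Case 1 construction works almost uniformly if you use $\varrho$ as the cofactor instead of $\cycle{1}$. For every $P$ with $n\ge 2$ vertices, $P\times\varrho = n\varrho = \varrho\times D$ for any system $D$ with $n$ vertices. Take $D$ to be an FDDS with $n$ vertices different from $P$: $D=n\cycle{1}$ if $P\neq n\cycle{1}$, and $D=\cycle{n}$ otherwise (these differ because $n\ge 2$).
\begin{itemize}
\item $P$ does not divide $\varrho$, since $|P\times K| = n|K|\neq 1$.
\item If $P\times K = D$, equal sizes force $|K|=1$, so $K\in\{\cycle{1},\varrho\}$. The choice $K=\varrho$ leaves no arc, while $D$ has arcs. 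The choice $K=\cycle{1}$ gives $P=D$, which is excluded.
\end{itemize}
This is exactly the paper's proof. The paper phrases it with $n=n_1n_2$, $B=n_1\varrho$, $D$ equal to $n_2$ fixed points, and $\cycle{n}$ in the exceptional case $P=D$. With this construction, no case distinction on whether $P$ has arcs is needed.
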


\begin{proof}
	Let $A$ be a partial system with $n \ge 2$ vertices.
	We will show that $A$ is not prime.
	For this, it suffices to construct three partial systems $X,B,D$ such that $AX = BD$ but $A$ divides neither $B$ nor $D$.

	Let $n_1$ and $n_2$ be two divisors of $n$ such that $n = n_1 \cdot n_2$ and $n_1 < n_2$.
	Let $D$ be the partial system consisting of $n_2$ fixed points.
	We distinguish two cases depending on the value of $A$.

	The first case is that $A \neq D$.
	In this case, let $B = n_1 \varrho$.
	Observe that $A \times \varrho = B \times D$.
	Furthermore, $A$ divides neither $B$ nor $D$.
	Indeed, by construction $|A|> |B| >0$ and since the size of $AX$ is either $0$ (if $X = \emptyset$) or greater than or equal to the size of $A$ (if $X \neq \emptyset$), it follows that $A$ does not divide $B$.
	Moreover, remark that $|A| \ge |D|$.
	Thus, either $|A| > |D|$ and it follows that $A$ does not divide $D$.
	Or $|A| = |D|$.
	But in this case, for the equation $A \times X = D$ to have a solution, we need $|X| = 1$.
	However, $X$ cannot be the fixed point, since $A \neq D$.
	Therefore, the only possible value for $X$ is $\varrho$.
	But in this case, $A \times X$ contains no edge and is thus different from $D$.
	It follows that $A$ is not prime.

	The second possible case is that $A = D$.
	Then, we have $A \times \varrho = \varrho \times \cycle{n}$.
	But $A$ divides neither $B$ nor $\cycle{n}$ since a product with a nonempty FDDS does not reduce the number of connected components.
	This means that $A$ is not prime.
\end{proof}

However, remark that $\varrho$ is not irreducible.
Indeed, recall that an element is irreducible if it is neither zero, nor invertible, nor the product of two non-invertible elements.
Now, $\varrho$ is non-invertible but is the product of two non-invertible elements, since it is idempotent. 

	\subsection{Injectivity of polynomials over partial systems}\label{subsec:annu_partiel}

Let $A$ be a partial system.
In this section, we will prove that $A$ is cancelable in $(\Partial,+,\times)$ if and only if $A$ contains a dendron.
For this, recall that $A$, like every partial system, can be viewed as the sum of $\nilPart{A}$ and $\fullPart{A}$.
Thus, the product of $A$ with a partial system $B$ is:
\[
A \times B = \big(\nilPart{A} \times \nilPart{B}\big) + \big(\nilPart{A} \times \fullPart{B}\big) + \big(\fullPart{A} \times \nilPart{B}\big) + \big(\fullPart{A} \times \fullPart{B}\big).
\]

In order to prove our claim, we would like to describe exactly the product between two elements of $\Nilpotent$ as well as between an element of $\FDDS$ and an element of $\Nilpotent$.
Furthermore, since the elements of $\Nilpotent$ are forests, we would like this characterization to use $\treeProd$ instead of $\times$, \ie, to express $\times$ in terms of $\treeProd$.

For this purpose, we will use the function $\dt{\forest{X}}$ introduced in~\cite{article_arbre}.
Recall that if we consider a forest $\forest{f}$, then $\dt{\forest{f}}$ is the forest of trees rooted in the nodes at depth $0$ of $\forest{f}$. Moreover, we generalise this concept to a FDDS $\full{A}$ defining the $\dt{\fullPart{A}}$ as the forest of trees rooted in transient pre-images of cyclic nodes. 

We can generalize this by defining $\Di{i}{\forest{f}}$ as the multiset of trees of $\forest{f}$ rooted in a vertex of depth $i$.
Equivalently, the multiset $\Di{i}{\forest{f}}$ is obtained by applying the function $\dt{\forest{X}}$ to $\forest{f}$ a total of $i$ times.
We then have $\Di{i}{\Di{j}{\forest{f}}} = \Di{i+j}{\forest{f}} = \Di{j}{\Di{i}{\forest{f}}}$ for all integers $i,j \ge 0$ and all forests $\forest{f}$.
Furthermore, as shown by the following lemma, the function $\Di{i}{\forest{f}}$ is an endomorphism of $(\Nilpotent,+,\treeProd)$ for every integer $i\ge 0$.

\begin{Lemma} \label{lemma:Di_distribut_with_tree_product}
	Let $\forest{A}, \forest{B}$ be two forests and $i\ge0$ an integer.
	Then, $\Di{i}{\forest{A} \treeProd \forest{B}} = \Di{i}{\forest{A}} \treeProd \Di{i}{\forest{B}}$.
\end{Lemma}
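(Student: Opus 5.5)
Since $\Di{i}{\cdot}$ is the $i$-fold iterate of $\dt{\cdot}$ (so $\Di{i+1}{\forest{f}} = \dt{\Di{i}{\forest{f}}}$), I will argue by induction on $i$. The base case $i=0$ is immediate: $\Di{0}{\forest{f}}$ is the multiset of trees rooted at depth-$0$ vertices, i.e.\ $\forest{f}$ itself. For the inductive step, $\Di{i+1}{\forest{A}\treeProd\forest{B}} = \dt{\Di{i}{\forest{A}\treeProd\forest{B}}} = \dt{\Di{i}{\forest{A}}\treeProd\Di{i}{\forest{B}}}$. It therefore suffices to prove the case $i=1$ for arbitrary forests, namely $\dt{\forest{F}\treeProd\forest{G}} = \dt{\forest{F}}\treeProd\dt{\forest{G}}$. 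Alternatively, I can prove the statement for general $i$ directly with the bijection below, which avoids the induction altogether.

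The core is a local description of subtrees of a product. Since $\treeProd$ distributes over $+$ and $\dt{\cdot}$ (resp.\ $\Di{i}{\cdot}$) is additive over disjoint unions, I first reduce to single trees $\tree{a}\in\forest{A}$ and $\tree{b}\in\forest{B}$. In $\tree{a}\treeProd\tree{b}$, the vertices at depth $i$ are exactly the pairs $(u,v)$ with $\depth{u}=\depth{v}=i$. I then claim that the subtree of $\tree{a}\treeProd\tree{b}$ rooted at $(u,v)$ is isomorphic to $\tree{a}_u\treeProd\tree{b}_v$, where $\tree{a}_u$ and $\tree{b}_v$ denote the subtrees rooted at $u$ and $v$. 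The key step is that a vertex $(u',v')$ is a descendant of $(u,v)$ if and only if $u'$ is a descendant of $u$, $v'$ is a descendant of $v$, and $\depth{u'}=\depth{v'}$. The forward direction follows because arcs of the product project to arcs in each factor. The converse follows because $(u',v')$ lies at the same depth and its path to the root projects to the root paths of $u'$ and $v'$, which pass through $u$ and $v$ at the same level. Depths measured relative to $(u,v)$ then coincide with depths relative to $u$ and $v$, so the vertex and arc sets match Definition~\ref{prodintrees} applied to $\tree{a}_u$ and $\tree{b}_v$.

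Summing over all depth-$i$ pairs gives
\[
\Di{i}{\tree{a}\treeProd\tree{b}} = \sum_{\depth{u}=i}\sum_{\depth{v}=i} \tree{a}_u\treeProd\tree{b}_v = \Big(\sum_{\depth{u}=i}\tree{a}_u\Big)\treeProd\Big(\sum_{\depth{v}=i}\tree{b}_v\Big) = \Di{i}{\tree{a}}\treeProd\Di{i}{\tree{b}},
\]
using distributivity of $\treeProd$ over $+$. Summing over the trees of $\forest{A}$ and $\forest{B}$ then concludes the proof.

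I expect the main obstacle to be the descendant characterization, specifically the care needed with multisets and isomorphism classes. The equality is an equality of multisets up to isomorphism, so I must make sure the correspondence $(u,v)\mapsto$ subtree is a bijection between depth-$i$ vertices of the product and pairs of depth-$i$ vertices. This holds because the vertex set of the product is defined as exactly such pairs. I must also check that no extra vertices are created, i.e.\ that every pair in $\tree{a}_u\treeProd\tree{b}_v$ is indeed a vertex of $\tree{a}\treeProd\tree{b}$. This holds since equal relative depths plus equal base depth $i$ give equal absolute depths.
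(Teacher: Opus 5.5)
Your proof is correct and follows the same route as the paper. You reduce to a pair of trees, identify the subtree of $\tree{a}\treeProd\tree{b}$ rooted at a depth-$i$ vertex $(u,v)$ with $\tree{a}_u\treeProd\tree{b}_v$, and sum using distributivity of $\treeProd$ over $+$; where the paper only says the product is levelwise, you make the descendant characterization explicit, and the induction on $i$ you sketch first is unnecessary given the direct argument.
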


\begin{proof}
	Let $\tree{t}$ be a tree of $\forest{A} \treeProd \forest{B}$.
	Then, there exist $\tree{t}_\forest{A}$ and $\tree{t}_\forest{B}$ in $\forest{A}$ and $\forest{B}$ respectively such that $\tree{t}_\forest{A} \treeProd \tree{t}_\forest{B} = \tree{t}$.
	Since the tree product is levelwise, each tree rooted in a vertex of depth $i$ in $\tree{t}$ is the product of two subtrees of $\tree{t}_\forest{A}$ and $\tree{t}_\forest{B}$ respectively, each having a root at depth $i$.
	It follows that $\Di{i}{\tree{t}_\forest{A}} \treeProd \Di{i}{\tree{t}_\forest{B}} = \Di{i}{\tree{t}}$.
	By summing over all trees, the lemma follows.
\end{proof}

\begin{Lemma} \label{lemma:caracterisation_nilpotent_product}
	Let $d\ge 0$ be an integer and let $\forest{A}, \forest{B}$ be two finite forests of depth at most $d$.
	Then, $\forest{A} \times \forest{B} = \forest{A} \treeProd \forest{B} + \sum_{i = 1}^{d} \big(\Di{i}{\forest{A}} \treeProd \forest{B} + \Di{i}{\forest{B}} \treeProd \forest{A}\big)$.
\end{Lemma}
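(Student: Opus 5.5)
The plan is to describe the connected components of $\forest{A} \times \forest{B}$ directly, by locating their roots and identifying each component with a tree product. View each forest as a partial system in which every non-root vertex $u$ has the single out-neighbour $f(u)$ (its parent) and roots have out-degree $0$. In the direct product, a vertex $(u,v)$ has out-degree $1$ exactly when both $u$ and $v$ are non-roots, and then its image is $(f(u),f(v))$.

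\emph{Step 1: the product is a forest.} Iterating $f$ on $(u,v)$ strictly decreases both depths. So after exactly $k=\min(\depth{u},\depth{v})$ steps we reach $(f^k(u),f^k(v))$, where at least one coordinate is a root, and this vertex has out-degree $0$. Hence $\forest{A} \times \forest{B}$ is acyclic. Each component is a tree whose root is a pair $(u',v')$ with $u'$ or $v'$ a root. Moreover, the depth of $(u,v)$ in its component is $k$.

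\emph{Step 2: classify the roots.} The possible roots fall into three classes.
\begin{itemize}
\item $(r_A,r_B)$ with both coordinates roots.
\item $(r_A,v)$ with $r_A$ a root of $\forest{A}$ and $\depth{v}=i\ge 1$.
\item $(u,r_B)$ with $\depth{u}=i\ge1$.
\end{itemize}
Since both forests have depth at most $d$, only $1\le i\le d$ occurs.

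\emph{Step 3: identify each component with a tree product.} Fix a root $(x,y)$ and let $\tree{t}_x$, $\tree{t}_y$ be the subtrees rooted at $x$ and $y$ in $\forest{A}$ and $\forest{B}$ respectively. A vertex $(u,v)$ lies in the component of $(x,y)$ if and only if $u\in\tree{t}_x$, $v\in\tree{t}_y$, and the depths of $u$ and $v$ relative to $x$ and $y$ agree. Indeed, by Step 1 both relative depths equal $k$, and conversely equal relative depths send $(u,v)$ to $(x,y)$ in the same number of steps. Arcs correspond exactly to pairs of arcs. So the component is $\tree{t}_x \treeProd \tree{t}_y$ by Definition~\ref{prodintrees}.

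Summing over the three classes gives the result.
\begin{itemize}
\item The first class gives $\forest{A}\treeProd\forest{B}$.
\item The second class gives, for each $i$, the product of every tree of $\forest{A}$ with every tree of $\Di{i}{\forest{B}}$, that is, $\Di{i}{\forest{B}}\treeProd\forest{A}$.
\item The third class gives $\Di{i}{\forest{A}}\treeProd\forest{B}$.
\end{itemize}
Commutativity of $\treeProd$ and distributivity over $+$ let us write these sums in the stated form.

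The main obstacle is the bookkeeping in Step 3. We must make sure every vertex is counted in exactly one component, that is, the map sending $(u,v)$ to its root is well defined. We also need the relative-depth condition to coincide with the equal-depth condition in the tree-product definition. Both follow from the observation $k=\min(\depth{u},\depth{v})$ in Step 1.
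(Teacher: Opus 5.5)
Your proof is correct, but it is organised differently from the paper's. The paper treats the right-hand side as a concrete graph on pairs $(a,b)$ and checks that it is literally the same graph as $\forest{A}\times\forest{B}$, in four steps:
\begin{enumerate}
\item It proves equality of vertex sets by double inclusion, placing $(a,b)$ in $\forest{A}\treeProd\forest{B}$ when $d_a=d_b$, and in $\tree{t}_{\forest{A}}\treeProd\Di{d_b-d_a}{\tree{t}_{\forest{B}}}$ (or the symmetric term) otherwise.
\item It observes that in both graphs the roots are exactly the pairs with a root coordinate.
\item It deduces $|E|=|E'|$, because every other vertex has out-degree one.
\item It concludes with $E'\subseteq E$.
\end{enumerate}

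You decompose by connected components instead. You use the same root characterisation. You then identify the whole component of a root $(x,y)$ with $\tree{t}_x\treeProd\tree{t}_y$, via the observation that $(u,v)$ reaches $(f^k(u),f^k(v))$ with $k=\min(\depth{u},\depth{v})$. Finally, you match the three classes of roots bijectively with the summands of the right-hand side.

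Your route handles multiplicities automatically. Components partition the vertex set and the root classes are disjoint, so each pair lies in exactly one summand. The paper's set-level double inclusion leaves implicit that no pair occurs in two different summands; this is true, since the summand is determined by $d_a-d_b$, and its edge-counting step tacitly relies on it.

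The paper's route, in exchange, never has to reason about connected components. It also replaces your arc-by-arc verification inside each component with a short counting argument on out-degrees.
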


An example of a direct product of finite forests is given in Figure~\ref{fig:prod_direct_tree}.

\begin{proof}
	By definition of the product of FDDS, we have that $\forest{A} \times \forest{B}$ is a forest with vertex set $V = \{(a,b) \mid a \in V(\forest{A}), b \in V(\forest{B})\}$ and edge set $E = \{((a,b) , (a',b')) \mid (a,a') \in E(\forest{A}), (b,b') \in E(\forest{B})\}$.
	Consider the forest $\forest{f} = \forest{A} \treeProd \forest{B} + \sum_{i = 1}^{d} \Di{i}{\forest{A}} \treeProd \forest{B} + \Di{i}{\forest{B}} \treeProd \forest{A}$ with vertex set $V'$ and edge set $E'$.
	We show that $\forest{A} \times \forest{B}$ and $\forest{f}$ are the same graph (exactly the same and not merely isomorphic).

	We first show that $V = V'$.
	For this, we prove the double inclusion ($V \subseteq V'$ and $V' \subseteq V$).
	Observe that all vertices of $V'$ are pairs $(a,b)$ with $a$ and $b$ vertices of $V(\forest{A})$ and $V(\forest{B})$ respectively.
	Therefore, since $V$ contains all such pairs, it follows that $V' \subseteq V$.

	Consider two trees $\tree{t}_\forest{A}$ and $\tree{t}_\forest{B}$ of $\forest{A}$ and $\forest{B}$ respectively, and vertices $a$ and $b$ of these trees at depths $d_{\tree{a}}$ and $d_{\tree{b}}$ respectively.
	Three cases are possible.
	Either $d_{\tree{a}} = d_{\tree{b}}$.
	Then, since $\forest{f}$ contains $\forest{A} \treeProd \forest{B}$, it follows that $V'$ contains the pair $(a,b)$.
	Or $d_{\tree{a}} < d_{\tree{b}}$.
	Then, since $\forest{f}$ contains $\tree{t}_\forest{A} \treeProd \Di{d_{\tree{b}}-d_{\tree{a}}}{\tree{t}_\forest{B}}$, it follows that $V'$ contains the pair $(a,b)$.
	Symmetrically, if $d_{\tree{a}} > d_{\tree{b}}$, then $V'$ contains the pair $(a,b)$.
	From this reasoning, we infer that each possible pair $(a,b)$ with $a$ and $b$ vertices of $\forest{A}$ and $\forest{B}$ respectively is an element of $V'$, implying that $V \subseteq V'$.

	It remains to show that $E = E'$.
	For this, we first show that $\forest{A} \times \forest{B}$ and $\forest{f}$ have the same roots.
	Directly, a vertex $(a,b)$ of $\forest{f}$ is a root if and only if $a$ or $b$ is a root.
	We prove that the same holds in $A \times B$.
	Suppose that $a$ or $b$ is a root of $\forest{A}$ or $\forest{B}$; then $(a,b)$ is a root of $A \times B$ since $a$ or $b$ has no successor.
	Conversely, if $a$ and $b$ are not roots of $\forest{A}$ and $\forest{B}$, then $a$ and $b$ have a successor in $\forest{A}$ and $\forest{B}$ respectively, denoted $a'$ and $b'$.
	Therefore, in $\forest{A} \times \forest{B}$, the vertex $(a,b)$ has $(a',b')$ as a successor.
	Thus, a vertex $(a,b)$ is a root of $\forest{A} \times \forest{B}$ if and only if $a$ or $b$ is a root of $\forest{A}$ or $\forest{B}$ respectively.
	We deduce that $\forest{f}$ and $\forest{A} \times \forest{B}$ have the same roots.

	From there, since $V = V'$, $\forest{F}$ and $\forest{A} \times \forest{B}$ have the same roots, and all other vertices of $\forest{F}$ and $\forest{A} \times \forest{B}$ have out-degree exactly $1$, it follows that $E$ and $E'$ have the same size.
	Therefore, to conclude that $E = E'$, it suffices to prove that $E' \subseteq E$.
	Let $((a,b),(a',b'))$ be an edge of $E'$.
	Three cases are possible:
	\begin{enumerate}
		\item The edge $((a,b),(a',b'))$ is an edge of $\forest{A} \treeProd \forest{B}$, hence $(a,a')$ is an edge of $\forest{A}$ and $(b,b')$ is an edge of $\forest{B}$.
		Therefore, $((a,b),(a',b'))$ is an element of $E$.
		\item There exists $i > 0$ such that $((a,b),(a',b'))$ is an edge of $\forest{A} \treeProd \Di{i}{\forest{B}}$.
		This implies that there exists an edge $(a,a')$ in $\forest{A}$ and an edge $(b,b')$ in $\Di{i}{\forest{B}}$ and hence in $\forest{B}$.
		Consequently, the edge $((a,b),(a',b'))$ is an element of $E$.
		\item There exists an integer $j > 0$ such that $((a,b),(a',b'))$ is an edge of $\Di{j}{\forest{A}} \treeProd \forest{B}$.
		This case is symmetric to the previous one, and therefore the edge $((a,b),(a',b'))$ is an element of $E$.
	\end{enumerate}
\end{proof}

\begin{figure}[t]
	\centering
	\includegraphics[page=12,width=\textwidth]{figs}
	\caption{Example of a direct product of forests.}\label{fig:prod_direct_tree}
\end{figure}

It seems difficult to deduce properties about cancelable finite forests from the equality given in the statement of Lemma~\ref{lemma:caracterisation_nilpotent_product}.
Indeed, we cannot further factorize this equality, and it does not establish a clear link between the products $\times$ and $\treeProd$.
To address this, we introduce the following result.

\begin{Lemma}\label{lemma:reformule}
	Let $d\ge 0$ be an integer and let $\forest{A}$ and $\forest{B}$ be two finite forests of depth at most $d$.
	Then, $\sum_{i=0}^{d} \Di{i}{\forest{A} \times \forest{B}} = (\sum_{i=0}^{d} \Di{i}{\forest{A}}) \treeProd (\sum_{j=0}^{d} \Di{j}{\forest{B}})$.
\end{Lemma}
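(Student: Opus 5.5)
Substituting Lemma~\ref{lemma:caracterisation_nilpotent_product} into the left-hand side and using that each $\Op{D}^k$ is additive and, by Lemma~\ref{lemma:Di_distribut_with_tree_product}, multiplicative with respect to $\treeProd$, should reduce the statement to bookkeeping. Write $\forest{A}_k = \Di{k}{\forest{A}}$ and $\forest{B}_k = \Di{k}{\forest{B}}$. Since both forests have depth at most $d$, $\forest{A}_k = \forest{B}_k = \emptyset$ for $k>d$. Also, a tree product with an empty factor is empty.

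First, apply $\Op{D}^k$ to the identity of Lemma~\ref{lemma:caracterisation_nilpotent_product}:
\[
\Di{k}{\forest{A}\times\forest{B}} = \forest{A}_k\treeProd\forest{B}_k + \sum_{i=1}^{d}\big(\forest{A}_{i+k}\treeProd\forest{B}_k + \forest{B}_{i+k}\treeProd\forest{A}_k\big).
\]
This uses $\Di{k}{\Di{i}{\forest{A}}}=\Di{i+k}{\forest{A}}$. Next, sum over $k=0,\dots,d$. The terms $\forest{A}_k\treeProd\forest{B}_k$ give the diagonal pairs $(k,k)$ of the expanded right-hand side $\sum_{i,j=0}^{d}\forest{A}_i\treeProd\forest{B}_j$. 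The terms $\forest{A}_{i+k}\treeProd\forest{B}_k$ with $i\ge1$ give exactly the pairs $(p,q)$ with $p>q$, where $p=i+k$ and $q=k$. Indeed, for $p\le d$ the pair $(i,k)=(p-q,q)$ lies in the range $1\le i\le d$, $0\le k\le d$. For $p>d$ the term vanishes, so no extra terms appear. Symmetrically, the terms $\forest{B}_{i+k}\treeProd\forest{A}_k$ give the pairs with $q>p$, using commutativity of $\treeProd$. Each pair $(p,q)\in\{0,\dots,d\}^2$ therefore appears exactly once, and distributivity of $\treeProd$ over $+$ finishes the proof.

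The main point needing care is the first step. It requires that $\Op{D}^k$ commutes with the sum and with $\treeProd$ at the level of multisets of trees. Lemma~\ref{lemma:Di_distribut_with_tree_product} gives the product part. Additivity is immediate, since $\Op{D}^k$ acts tree by tree. I also need that applying $\Op{D}^k$ to the equality of forests in Lemma~\ref{lemma:caracterisation_nilpotent_product} is legitimate. It is, because that lemma is an equality of graphs, hence of multisets of trees.

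Finally, the index reindexing must be checked, making sure that no pair with $i+k>d$ contributes. This follows from the depth bound: a forest of depth at most $d$ has no vertex at depth greater than $d$, so $\Di{p}{\cdot}$ is empty for $p>d$.
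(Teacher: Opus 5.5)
Your proposal is correct and follows essentially the same route as the paper. The paper also applies $\Op{D}^k$ to the identity of Lemma~\ref{lemma:caracterisation_nilpotent_product}, uses Lemma~\ref{lemma:Di_distribut_with_tree_product} together with $\Op{D}^k\circ\Op{D}^i=\Op{D}^{i+k}$, and regroups the terms into diagonal pairs, pairs with $p>q$, and pairs with $p<q$ before factoring; your explicit check that terms with an index exceeding $d$ vanish is the step the paper leaves implicit in its reindexing.
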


\begin{proof}
    We note that since $\forest{A}$ and $\forest{B}$ have depth at most $d$, we have $\Di{i}{\forest{A}} = \Di{i}{\forest{B}} = 0$ for all $i \ge d+1$. Let $S = \{ (i,j) : 0 \le i, j \le d, i+j \le 2d \}$. By Lemmas~ \ref{lemma:Di_distribut_with_tree_product},~\ref{lemma:caracterisation_nilpotent_product} and applying the function $\Di{i}{\forest{X}}$, we have
	\begin{align*}
	\sum_{i=0}^d \Di{i}{\forest{A} \times \forest{B}} &= \sum_{i=0}^{d} \Di{i}{\forest{A} \treeProd \forest{B}} + \sum_{i=0}^{d} \sum_{j=1}^{d} \Di{i}{\Di{j}{\forest{A}} \treeProd \forest{B}} + \sum_{i=0}^{d} \sum_{j=1}^{d} \Di{i}{\Di{j}{\forest{B}} \treeProd \forest{A}} \\
    &= \sum_{i = 0}^d \Di{i}{\forest{A}} \treeProd \Di{i}{\forest{B}} 
    + \sum_{i = 0}^d \sum_{j = 1}^d \Di{i+j}{\forest{A}} \treeProd \Di{i}{\forest{B}} 
    + \sum_{i = 0}^d \sum_{j = 1}^d \Di{i}{\forest{A}} \treeProd \Di{i+j}{\forest{B}} \\
    &= \sum_{i,j \in S: i = j} \Di{i}{\forest{A}} \treeProd \Di{j}{\forest{B}} 
    + \sum_{i,j \in S: i > j} \Di{i}{\forest{A}} \treeProd \Di{j}{\forest{B}} 
    + \sum_{i,j \in S: i < j} \Di{i}{\forest{A}} \treeProd \Di{j}{\forest{B}} \\
    &= \sum_{i,j \in S} \Di{i}{\forest{A}} \treeProd \Di{j}{\forest{B}} \\
    &= \sum_{i = 0}^d \Di{i}{\forest{A}} \treeProd \sum_{j = 0}^d \Di{j}{\forest{B}}.
	\end{align*}
\end{proof}

Remark that $\sum_{i=0}^{d} \Di{i}{\tree{t}}$ amounts to applying the unroll function to all vertices of $\tree{t}$ if $d$ is sufficiently large.
Building on this observation, we redefine the unroll of a system so that it also covers the partial case.
Figure~\ref{fig:unroll_partiel} gives two examples of the unroll of a partial system.

\begin{Definition}[Partial-system unroll]
	Let $A$ be a PDDS. The unroll of $A$, denoted $\unrollP{A}$ is the forest $\sum_{i=0}^{d} \Di{i}{A}$ where $d = \max(\depth{\nilPart{A}},\depth{\fullPart{A}})$ and $\Di{0}{A} = \nilPart{A} + \unroll{\fullPart{A}}$.
\end{Definition}
Intuitively, $\unrollP{A}$ is the forest obtained as the sum of $\unroll{\fullPart{A}}$ plus $\nilPart{A}$ plus the set of trees of $A$ rooted in vertices of depth at least $1$.
In the following, we will simply denote $\unroll{A}$ instead of $\unroll{\fullPart{A}}$ since the $\unroll{\cdot}$ operation is defined over FDDS.

\begin{figure}[t]
	\centering
	\includegraphics[page=2,scale=0.9]{figs}
	\caption{Example of the unroll of a forest and of a partial system with the new definition of unroll.}\label{fig:unroll_partiel}
\end{figure}

From the previous definition, we can rewrite Lemma~\ref{lemma:reformule} by using unroll notation, and we obtain the following result.

\begin{Lemma}\label{lemma:unroll_distributivity_over_product_nil}
	Let $\forest{A}, \forest{B}$ be finite forests.
	Then, $\unrollP{\forest{A} \times \forest{B}} = \unrollP{\forest{A}} \treeProd \unrollP{\forest{B}}$.
\end{Lemma}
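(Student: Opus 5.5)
The plan is to reduce the statement directly to Lemma~\ref{lemma:reformule} by unfolding the definition of $\unrollP{\cdot}$ for finite forests. The work lies in checking that the truncation depths agree and that the terms beyond those depths vanish.

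First, specialise the definition. For a finite forest $\forest{F}$ we have $\fullPart{\forest{F}} = 0$, hence $\unroll{\fullPart{\forest{F}}} = 0$ and $\Di{0}{\forest{F}} = \nilPart{\forest{F}} = \forest{F}$. Therefore $\unrollP{\forest{F}} = \sum_{i=0}^{\depth{\forest{F}}} \Di{i}{\forest{F}}$, with the depth of the empty $\fullPart{\forest{F}}$ taken to be $0$ so that the maximum in the definition equals $\depth{\forest{F}}$.

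Next, observe that $\Di{i}{\forest{F}} = 0$ whenever $i > \depth{\forest{F}}$, since there are no vertices at that depth. So for every $d \ge \depth{\forest{F}}$ we also have $\unrollP{\forest{F}} = \sum_{i=0}^{d} \Di{i}{\forest{F}}$, and the truncation index may be enlarged freely.

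Now set $d = \max(\depth{\forest{A}}, \depth{\forest{B}})$. Both forests have depth at most $d$, so Lemma~\ref{lemma:reformule} applies and gives
\[
\sum_{i=0}^{d} \Di{i}{\forest{A} \times \forest{B}} = \Big(\sum_{i=0}^{d} \Di{i}{\forest{A}}\Big) \treeProd \Big(\sum_{j=0}^{d} \Di{j}{\forest{B}}\Big).
\]
By the previous remark, the right-hand side is $\unrollP{\forest{A}} \treeProd \unrollP{\forest{B}}$.

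For the left-hand side, note that $\forest{A} \times \forest{B}$ is a finite forest by the closure remark. Its depth is at most $d$: a vertex $(a,b)$ of depth $k$ in the product yields a path of length $k$ from $a$ in $\forest{A}$ and from $b$ in $\forest{B}$, so $k \le \min(\depth{\forest{A}},\depth{\forest{B}}) \le d$. This can also be read off Lemma~\ref{lemma:caracterisation_nilpotent_product}, since every summand there has depth at most $d$. Hence the left-hand side equals $\unrollP{\forest{A} \times \forest{B}}$, which concludes the proof.

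The only delicate point is this bookkeeping of truncation depths together with the convention for the depth of an empty $\fullPart{\cdot}$. The algebraic content is already carried by Lemma~\ref{lemma:reformule}.
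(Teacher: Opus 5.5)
Your proposal is correct and takes the same route as the paper, which obtains this lemma simply by rewriting Lemma~\ref{lemma:reformule} in unroll notation. Your extra bookkeeping makes explicit what the paper leaves implicit: you note that the truncation index can be enlarged past the depth without changing the sum, and you bound the depth of $\forest{A} \times \forest{B}$ by $\min(\depth{\forest{A}},\depth{\forest{B}})$.
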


In what follows, we will prove that $\unrollP{X}$ is a morphism between the semiring $(\Partial,+,\times)$ and $(\unrollP{\Partial},+,\treeProd)$ where $\unrollP{\Partial}$ denotes the set of unrolls obtainable from elements of $\Partial$.
But for now, we continue to focus on the pseudo-semiring $(\Nilpotent,+,\times)$ and obtain the following result.

\begin{prop}\label{proposition:unroll_is_isomorphism}
	The function $\unrollP{X}$ is an isomorphism between $(\Nilpotent, +, \times)$ and $(\unrollP{\Nilpotent}, +, \treeProd)$, where $\unrollP{\Nilpotent}$ denotes the set of unrolls obtainable from elements of $\Nilpotent$.
\end{prop}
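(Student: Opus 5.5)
The plan is to check the four ingredients of an isomorphism of pseudo-semirings: $\unrollP{\cdot}$ preserves addition, preserves multiplication, is surjective onto $\unrollP{\Nilpotent}$, and is injective. Surjectivity holds by definition of $\unrollP{\Nilpotent}$. Multiplicativity is exactly Lemma~\ref{lemma:unroll_distributivity_over_product_nil}. For additivity, we use that $\Di{i}{\cdot}$ is additive, because the trees rooted at depth $i$ in a disjoint union are the disjoint union of those in each summand. Then, with $d$ large enough for both forests (extra terms $\Di{i}{\cdot}$ with $i$ beyond the depth are empty, so $d$ may be taken as the maximum depth), we get $\unrollP{\forest{A}+\forest{B}} = \sum_{i=0}^{d}\Di{i}{\forest{A}} + \sum_{i=0}^{d}\Di{i}{\forest{B}} = \unrollP{\forest{A}}+\unrollP{\forest{B}}$.

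The main work is injectivity. We show that $\forest{A}$ can be recovered from $\forest{U} = \unrollP{\forest{A}} = \sum_{i\ge0}\Di{i}{\forest{A}}$, proceeding tree by tree. The idea is to sort the trees by depth: deep trees of $\forest{U}$ can only come from deep trees of $\forest{A}$. For a forest $\forest{F}$, let $N_{\forest{F}}(\tree{t})$ be the multiplicity of the tree $\tree{t}$ in $\forest{F}$. Then
\[
N_{\forest{U}}(\tree{t}) = \sum_{\tree{s}} N_{\forest{A}}(\tree{s})\, m(\tree{s},\tree{t}),
\]
where $m(\tree{s},\tree{t})$ is the number of vertices $v$ of $\tree{s}$ whose subtree is isomorphic to $\tree{t}$, with the root of $\tree{s}$ counted. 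Here $m(\tree{t},\tree{t})=1$, since a proper subtree has strictly fewer vertices. Also $m(\tree{s},\tree{t})=0$ unless $|\tree{s}|>|\tree{t}|$ or $\tree{s}=\tree{t}$. The system is therefore unitriangular once trees are ordered by number of vertices.

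We then induct downward on tree size. Let $n$ be the largest size of a tree in $\forest{U}$, which equals the largest size in $\forest{A}$. For a tree $\tree{t}$ of that size, $N_{\forest{A}}(\tree{t}) = N_{\forest{U}}(\tree{t})$. For smaller $\tree{t}$, we get $N_{\forest{A}}(\tree{t}) = N_{\forest{U}}(\tree{t}) - \sum_{|\tree{s}|>|\tree{t}|} N_{\forest{A}}(\tree{s})\,m(\tree{s},\tree{t})$, and the right-hand side is already determined. Hence $\unrollP{\forest{A}}=\unrollP{\forest{B}}$ forces $N_{\forest{A}}=N_{\forest{B}}$, that is $\forest{A}=\forest{B}$ up to isomorphism.

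The step needing the most care is this triangularity argument. In particular, the subtraction is legitimate because we compute multiplicities in $\N$ and the true values are known to exist, so no cancellation issue arises. The remaining steps are routine. Finally, $\unrollP{\cdot}$ maps $\Nilpotent$ into forests (no infinite branch, since $\fullPart{\forest{A}}$ is empty), so the codomain is indeed the pseudo-semiring $(\unrollP{\Nilpotent},+,\treeProd)$, which is closed under both operations by the additivity and multiplicativity shown above.
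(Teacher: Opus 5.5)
Your proof is correct and follows the paper's route. Additivity plus Lemma~\ref{lemma:unroll_distributivity_over_product_nil} give the morphism, surjectivity holds by definition of the codomain, and injectivity comes from a triangular peeling argument: a tree's multiplicity in the unroll determines its multiplicity in the forest once all larger trees are known. You order trees by number of vertices and run an explicit downward induction on the multiplicity system, while the paper orders by depth and argues contrapositively that a discrepancy propagates to strictly deeper trees; the idea is the same, and your version is the more rigorous of the two.
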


\begin{proof}
	Since by definition $\unrollP{A + B} = \unrollP{A} + \unrollP{B}$ for all partial systems $A$ and $B$, it follows that Lemma~\ref{lemma:unroll_distributivity_over_product_nil} implies that the unroll is a morphism between $(\Nilpotent, +, \times)$ and $(\unrollP{\Nilpotent}, +, \treeProd)$.
	We still need to show that the function $\unrollP{X}$ is a bijection.
	Since, by definition of the codomain of $\unrollP{X}$, the partial-system unroll is surjective, it suffices to prove that it is also injective.

	Let $\forest{A}, \forest{B}$ be two finite forests such that $\forest{A} \neq \forest{B}$.
	Without loss of generality, we suppose that $\forest{A} = \tree{a}_1 + \cdots + \tree{a}_{m_\forest{A}}$ and $\forest{B} = \tree{b}_1 + \cdots + \tree{b}_{m_\forest{B}}$ and that these two forests are sorted according to the total order on trees.

	Since $\forest{A} \neq \forest{B}$, there exists an integer $i$ such that $\tree{a}_i \neq \tree{b}_i$.
	Two cases are then possible.
	Either the number of occurrences of $\tree{a}_i$ differs between $\unrollP{\forest{A}}$ and $\unrollP{\forest{B}}$, meaning that $\unrollP{\forest{A}} \neq \unrollP{\forest{B}}$.
	Or they have the same number of occurrences.
	Then, since the number of occurrences of $\tree{a}_i$ differs between $\forest{A}$ and $\forest{B}$, we deduce that the number of trees producing $\tree{a}_i$ via application of the function $\Di{j}{\forest{X}}$ differs between $\forest{A}$ and $\forest{B}$.
	Consequently, there exist two other trees $\tree{a}_{i'}$ and $\tree{b}_{i'}$ which are different and have depth strictly greater than that of $\tree{a}_i$.

	By applying this reasoning up to the maximum depth, we deduce that there exist two trees that do not have the same number of occurrences between $\unrollP{\forest{A}}$ and $\unrollP{\forest{B}}$, hence $\unrollP{\forest{A}} \neq \unrollP{\forest{B}}$.
	This means that $\unrollP{X}$ is injective and thus bijective.
\end{proof}

Since $\unrollP{X}$ is a bijective function between $\Nilpotent$ and $\unrollP{\Nilpotent}$, there exists an inverse function that allows us to reconstruct an element of $\Nilpotent$ from its unroll.
To this end, we also need a method for verifying whether a finite forest is an unroll.
In fact, we can do both simultaneously.
Indeed, if we consider a finite forest $\forest{A}$, we can compute the unroll of its tallest trees.
Then we can add these tallest trees to a solution (they cannot come from the unroll of a taller tree) and verify that their unroll is a submultiset of $\forest{A}$.
If this is the case, then we remove this unroll from $\forest{A}$ and reapply this procedure to the resulting system.
Otherwise, $\forest{A}$ is not an unroll and we return false.
The correctness proof of this procedure is straightforward and it runs in polynomial time with respect to the size of the tested forest.
This procedure is formalized in Algorithm~\ref{algo:inverse_unroll}.

\begin{algorithm}[t]
	\caption{\texttt{InverseUnroll}$(\forest{F})$}\label{algo:inverse_unroll}
	\begin{algorithmic}[1]
		\Require a finite forest $\forest{F}$
		\Ensure a finite forest $\forest{X}$ such that $\unrollP{\forest{X}} = \forest{F}$, or $\perp$
		\State $\forest{X} \gets \emptyset$,
		\State $\forest{R} \gets \forest{F}$,
		\While{$\forest{R} \neq \emptyset$}
		\State let $d_{max}$ be the maximum depth of a tree in $\forest{R}$,
		\State let $\forest{T}$ be the sub-multiset of trees of $\forest{R}$ having depth $d_{max}$,
		\State $\forest{X} \gets \forest{X} + \forest{T}$,
		\For{each tree $\tree{t} \in \forest{T}$}
		\If{$\unrollP{\tree{t}} \not\subseteq \forest{R}$}
		\State \Return $\perp$,
		\EndIf
		\State $\forest{R} \gets \forest{R} - \unrollP{\tree{t}}$,
		\EndFor
		\EndWhile
		\State \Return $\forest{X}$.
	\end{algorithmic}
\end{algorithm}

We now characterize which $\forest{X}$ and $\forest{Y}$ are possible such that $\forest{A} \times \forest{X} = \forest{A} \times \forest{Y}$.

\begin{prop} \label{theorem:caracterization_of_counter-example_of_cancelation}
	Let $\forest{A}$, $\forest{X}$ and $\forest{Y}$ be three finite forests.
	Then $\forest{A} \times \forest{X} = \forest{A} \times \forest{Y}$ if and only if the cuts of $\forest{X}$ and $\forest{Y}$ at depth $\depth{\forest{A}}$ are equal.
\end{prop}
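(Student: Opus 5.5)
The plan is to move the whole question into the tree-product world through the partial-system unroll, where cancellation is already understood. I will read the cuts of $\forest{X}$ and $\forest{Y}$ through their unrolls. That is, I identify a finite forest with its image under $\unrollP{\cdot}$, which Proposition~\ref{proposition:unroll_is_isomorphism} makes legitimate: it is an isomorphism of $(\Nilpotent,+,\times)$ onto $(\unrollP{\Nilpotent},+,\treeProd)$. So the statement to prove is
\[
\forest{A}\times\forest{X}=\forest{A}\times\forest{Y}
\iff
\cut{\unrollP{\forest{X}}}{\depth{\forest{A}}}=\cut{\unrollP{\forest{Y}}}{\depth{\forest{A}}}.
\]
The reading matters. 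If one cut literally the forests $\forest{X}$ and $\forest{Y}$, then already for $\forest{A}=\varrho$ the product $\varrho\times\forest{X}$ only records $|\forest{X}|$ (Lemma~\ref{lemma:forme_div}), and the unroll is exactly what makes the number of vertices visible at depth $0$. I also note that $\depth{\unrollP{\forest{A}}}=\depth{\forest{A}}$, since $\unrollP{\forest{A}}$ contains $\forest{A}$ itself and otherwise only subtrees of $\forest{A}$.

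\emph{Forward direction.} Apply $\unrollP{\cdot}$ to $\forest{A}\times\forest{X}=\forest{A}\times\forest{Y}$. By Lemma~\ref{lemma:unroll_distributivity_over_product_nil} this gives
\[
\unrollP{\forest{A}}\treeProd\unrollP{\forest{X}}=\unrollP{\forest{A}}\treeProd\unrollP{\forest{Y}}.
\]
Lemma~\ref{lemma:casi_annulable} then yields equality of the cuts of $\unrollP{\forest{X}}$ and $\unrollP{\forest{Y}}$ at depth $\depth{\unrollP{\forest{A}}}=\depth{\forest{A}}$.

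\emph{Backward direction.} First I show that for any tree $\tree{a}$ of depth at most $d$ and any tree $\tree{t}$, we have $\tree{a}\treeProd\tree{t}=\tree{a}\treeProd\cut{\tree{t}}{d}$. This holds because the vertex set of the product only pairs vertices of equal depth, and $\tree{a}$ has no vertices below depth $d$. Summing over the trees of the forests gives
\[
\unrollP{\forest{A}}\treeProd\unrollP{\forest{X}}=\unrollP{\forest{A}}\treeProd\cut{\unrollP{\forest{X}}}{\depth{\forest{A}}},
\]
and the same with $\forest{Y}$. Hence equal cuts give $\unrollP{\forest{A}\times\forest{X}}=\unrollP{\forest{A}\times\forest{Y}}$, again by Lemma~\ref{lemma:unroll_distributivity_over_product_nil}. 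Injectivity of $\unrollP{\cdot}$ from Proposition~\ref{proposition:unroll_is_isomorphism} then gives $\forest{A}\times\forest{X}=\forest{A}\times\forest{Y}$.

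The main obstacle is the bookkeeping around the cut: making sure that cutting commutes correctly with the tree product tree by tree, and that the depth used in Lemma~\ref{lemma:casi_annulable} is exactly $\depth{\forest{A}}$ and not something larger. The first point is the levelwise argument of Lemma~\ref{lemma:Di_distribut_with_tree_product}. The second point is handled by the depth identity $\depth{\unrollP{\forest{A}}}=\depth{\forest{A}}$ noted above.
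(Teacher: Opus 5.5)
Your proposal is correct and follows the paper's own route: the paper also reads the cuts as cuts of $\unrollP{\forest{X}}$ and $\unrollP{\forest{Y}}$, transfers the equation through the isomorphism of Proposition~\ref{proposition:unroll_is_isomorphism}, and invokes Lemma~\ref{lemma:casi_annulable}. The paper, however, uses that lemma as an equivalence without comment, although it is stated only as an implication; you explicitly supply the easy converse ($\tree{a}\treeProd\tree{t}=\tree{a}\treeProd\cut{\tree{t}}{d}$ whenever $\depth{\tree{a}}\le d$) and the identity $\depth{\unrollP{\forest{A}}}=\depth{\forest{A}}$, so your version is the more complete of the two.
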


\begin{proof}
	We proceed by equivalence.
	By Proposition~\ref{proposition:unroll_is_isomorphism}, we have $\forest{A} \times \forest{X} = \forest{A} \times \forest{Y}$ if and only if $\unrollP{A} \treeProd \unrollP{X} = \unrollP{A} \treeProd \unrollP{Y}$.
	From there, by Lemma~\ref{lemme:cancelFiniForest}, it follows that $\unrollP{A} \treeProd \unrollP{X} = \unrollP{A} \treeProd \unrollP{Y}$ if and only if $\cut{\unrollP{X}}{\depth{\forest{A}}} =\cut{\unrollP{Y}}{\depth{\forest{A}}}$.
\end{proof}

From Proposition~\ref{theorem:caracterization_of_counter-example_of_cancelation}, for every finite forest $\forest{A}$ there exist at least two distinct finite forests $\forest{X}$ and $\forest{Y}$ such that $\forest{A} \times \forest{X} = \forest{A} \times \forest{Y}$.
Indeed, for every finite forest $\forest{A}$, we can always construct a finite forest $\forest{X}$, consisting of a path $\tree{x}_1$ of depth $2 \depth{\forest{A}} + 2$ and a path $\tree{x}_2$ of depth $2 \depth{\forest{A}}$, and another finite forest $\forest{Y}$, consisting of two paths $\tree{y}_1,\tree{y}_2$ of depth $2 \depth{\forest{A}} + 1$.
We then have $\cut{\unrollP{X}}{\depth{\forest{A}}} = \cut{\unrollP{Y}}{\depth{\forest{A}}}$.
Indeed, the cut of $\unrollP{\tree{x}_1}$ at depth $\depth{\forest{A}}$ produces $\depth{\forest{A}} + 3$ copies of the path of depth $\depth{\forest{A}}$ plus one copy of each path having depth between $\depth{\forest{A}}-1$ and $0$.
The cut of $\unrollP{\tree{x}_2}$ at depth $\depth{\forest{A}}$ produces $\depth{\forest{A}} + 1$ copies of the path of depth $\depth{\forest{A}}$ plus one copy of each path having depth between $\depth{\forest{A}}-1$ and $0$.
Finally, the cuts of $\unrollP{\tree{y}_1}$ and $\unrollP{\tree{y}_2}$ at depth $\depth{\forest{A}}$ produce $\depth{\forest{A}} + 2$ copies of the path of depth $\depth{\forest{A}}$ plus one copy of each path having depth between $\depth{\forest{A}}-1$ and $0$.
This concludes the proof of the following theorem.

\begin{Theorem} \label{theorem:none_nilpotent_are_cancelable}
	No finite forest is cancelable in the pseudo-semiring $(\Nilpotent,+,\times)$.
\end{Theorem}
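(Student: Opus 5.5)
Fix an arbitrary finite forest $\forest{A}$ and let $d = \depth{\forest{A}}$. It suffices to exhibit two distinct finite forests $\forest{X} \neq \forest{Y}$ with $\forest{A} \times \forest{X} = \forest{A} \times \forest{Y}$. Proposition~\ref{theorem:caracterization_of_counter-example_of_cancelation} reduces this to a purely combinatorial task: find $\forest{X} \neq \forest{Y}$ such that $\cut{\unrollP{\forest{X}}}{d} = \cut{\unrollP{\forest{Y}}}{d}$. (If $\forest{A}$ is empty the product is always empty, so cancellability fails trivially.)

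We use forests of paths. Write $\tree{p}_k$ for the path of depth $k$, that is, with $k+1$ vertices. The unroll of $\tree{p}_k$ is $\sum_{j=0}^{k} \tree{p}_j$, since the subtree rooted at the vertex of depth $j$ is $\tree{p}_{k-j}$. Cutting at depth $d$ sends $\tree{p}_j$ to $\tree{p}_{\min(j,d)}$. Hence, for $k \ge d$,
\[
\cut{\unrollP{\tree{p}_k}}{d} = (k-d+1)\,\tree{p}_d + \sum_{j=0}^{d-1} \tree{p}_j .
\]
This depends on $k$ only through the coefficient $k-d+1$, and that coefficient is affine in $k$.

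Now take $\forest{X} = \tree{p}_{k+1} + \tree{p}_{k-1}$ and $\forest{Y} = 2\,\tree{p}_k$ for some $k \ge d+1$; the paper's choice $k = 2d+1$ works. Both cuts equal $2(k-d+1)\,\tree{p}_d + 2\sum_{j<d}\tree{p}_j$, by the affine formula above. Since $\unrollP{\cdot}$ and the cut are additive, the cuts of $\unrollP{\forest{X}}$ and $\unrollP{\forest{Y}}$ coincide. Clearly $\forest{X} \neq \forest{Y}$, so Proposition~\ref{theorem:caracterization_of_counter-example_of_cancelation} gives $\forest{A}\times\forest{X} = \forest{A}\times\forest{Y}$, and $\forest{A}$ is not cancelable.

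The only delicate point is the direction of Proposition~\ref{theorem:caracterization_of_counter-example_of_cancelation} used here, namely that equal cuts imply equal products. Its stated proof goes through Lemma~\ref{lemme:cancelFiniForest}, which gives only the converse implication. So I would justify this direction directly. Every vertex of $\unrollP{\forest{A}}\treeProd \unrollP{\forest{X}}$ pairs vertices at equal depth. Because $\unrollP{\forest{A}}$ has depth at most $d$, only vertices of $\unrollP{\forest{X}}$ at depth at most $d$ occur. Therefore $\unrollP{\forest{A}}\treeProd\unrollP{\forest{X}} = \unrollP{\forest{A}}\treeProd\cut{\unrollP{\forest{X}}}{d}$, and the same holds for $\forest{Y}$. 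The two products are then equal, and Proposition~\ref{proposition:unroll_is_isomorphism}, via injectivity and the morphism property, transfers this equality back to $\forest{A}\times\forest{X} = \forest{A}\times\forest{Y}$.
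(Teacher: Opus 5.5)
Your proof is correct and follows the paper's argument: you reduce, via Proposition~\ref{theorem:caracterization_of_counter-example_of_cancelation}, to finding distinct forests whose unrolls have equal cuts at depth $d$, and you use path forests, with your affine formula for $\cut{\unrollP{\tree{p}_k}}{d}$ recovering the paper's counts $d+3$, $d+1$ and $d+2$ when $k = 2d+1$. Your direct check of the direction ``equal cuts of unrolls imply equal products'', namely $\unrollP{\forest{A}}\treeProd\unrollP{\forest{X}} = \unrollP{\forest{A}}\treeProd\cut{\unrollP{\forest{X}}}{d}$ followed by the morphism property and injectivity of $\unrollP{\cdot}$, is sound. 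It supplies the step that the paper's proof of that proposition attributes to Lemma~\ref{lemme:cancelFiniForest}, which in fact only gives the converse direction.
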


Thanks to Theorem~\ref{theorem:none_nilpotent_are_cancelable}, we know that if a partial system $A$ is cancelable, then $\fullPart{A}$ is nonempty.
We still need to show that $\fullPart{A}$ must contain a dendron for $A$ to be cancelable.
For this, as stated previously, we characterize the behavior of $\forest{A} \times B$ with $B$ an FDDS.

\begin{Lemma}\label{lemma:caracterisation_nilpotent_full_product}
	Let $\forest{A}$ be a finite forest and $B$ an FDDS.
	Then $\forest{A} \times B$ is equal to $\forest{A} \treeProd \unrollP{B}$.
\end{Lemma}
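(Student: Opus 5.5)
The plan is to exhibit an explicit vertex-preserving isomorphism between $\forest{A} \times B$ and $\forest{A} \treeProd \unrollP{B}$. This follows the proof of Lemma~\ref{lemma:caracterisation_nilpotent_product}, except that the second factor is now an FDDS and hence has no roots. Since $B$ is an FDDS, $\nilPart{B}$ is empty. So $\unrollP{B}$ is $\unroll{B}$ plus the trees $\Di{i}{B}$ for $i \ge 1$, namely the trees of $B$ rooted at transient vertices of depth at least $1$.

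First, I would describe $\forest{A} \times B$ directly. A vertex $(a,b)$ has out-degree $0$ exactly when $a$ is a root of $\forest{A}$, because every vertex of $B$ has out-degree $1$. Hence $\forest{A}\times B$ is a forest whose roots are the pairs $(r,b)$ with $r$ a root of $\forest{A}$ and $b$ any vertex of $B$. It therefore suffices to show that, for each root $r$ of a tree $\tree{t}$ of $\forest{A}$ and each vertex $b$ of $B$, the tree of $\forest{A}\times B$ rooted at $(r,b)$ is isomorphic to $\tree{t} \treeProd \tree{u}_b$. Here $\tree{u}_b$ is a tree of $\unrollP{B}$ canonically attached to $b$, and $b\mapsto \tree{u}_b$ must be a bijection between the vertices of $B$ and the trees of $\unrollP{B}$.

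Next, I would set up that bijection. If $b$ is a cyclic vertex, let $\tree{u}_b$ be the unroll tree of $\unroll{B}$ rooted at $(b,0)$. If $b$ is transient, it lies at depth $\ge 1$ in the forest $\dt{\fullPart{B}}$ obtained by detaching at the cycle, and I let $\tree{u}_b$ be the copy of the subtree of $B$ rooted at $b$ appearing in $\Di{i}{B}$, where $i$ is the depth of $b$ there. The definition of $\unrollP{B}$ as $\sum_i \Di{i}{B}$ with $\Di{0}{B} = \unroll{\fullPart{B}}$ says precisely that each vertex of $B$ contributes one tree. I would check this indexing carefully against the convention for $\Di{0}{B}$.

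The core step is the following. The in-tree of $(r,b)$ in $\forest{A}\times B$ consists of the pairs $(a,b')$ with $a$ in $\tree{t}$ at depth $k$ and $f_B^k(b') = b$. Its arcs are $(a,b')\to(f(a),f(b'))$. This is exactly the set of pairs of equal depth in $\tree{t}$ and in the preimage tree of $b$ in $B$, with predecessors followed arbitrarily deep. When $b$ is cyclic, that preimage tree is $\tree{u}_b$, the unroll tree, since going backwards around the cycle is exactly what the unroll records through the pairs $(v, \depth{v}+k)$. When $b$ is transient, the preimage tree is the subtree rooted at $b$. Because $\tree{t}$ is finite, the product is determined levelwise up to depth $\depth{\tree{t}}$, so the tree rooted at $(r,b)$ matches Definition~\ref{prodintrees}. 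Summing over all roots $r$ and all $b$ then gives $\forest{A}\times B = \forest{A}\treeProd \unrollP{B}$.

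The main obstacle I expect is the bookkeeping in the bijection $b \mapsto \tree{u}_b$: ensuring that $\unrollP{B}$, as defined, contains exactly one tree per vertex of $B$, with no double counting between $\unroll{B}$ and $\Di{1}{\cdot}$. The identification of the preimage tree of a cyclic vertex with its unroll tree is standard from~\cite{article_arbre}.
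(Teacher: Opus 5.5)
Your proposal is correct and is essentially the paper's proof. The paper labels the vertices of $\unrollP{B}$ as pairs $(b,k)$ of depth $k$, so that $\unrollP{B}$ has exactly one tree per vertex of $B$, namely its in-tree; it then checks that $(a,b)\mapsto (a,(b,k))$, with $k$ the depth of $a$ in $\forest{A}$, is an arc-preserving bijection, which is exactly your componentwise matching of the tree rooted at $(r,b)$ with $\tree{t}\treeProd\tree{u}_b$. The only blemish is an off-by-one in your indexing (a transient vertex whose successor is cyclic is a root of $\dt{\fullPart{B}}$, not at depth $\ge 1$), which is harmless since $\tree{u}_b$ is in every case just the in-tree of $b$.
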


\begin{proof}
	In this proof $\unrollP{B}$ is not taken up to isomorphism.
	In other words, each vertex has a label of the form $(b,k)$ with $b$ a vertex of $B$ and $k\ge0$ an integer.
	Let $V$ and $E$ be the vertex and edge sets of $\forest{A} \times B$ and let $V'$ and $E'$ be those of $\unrollP{B} \treeProd \forest{A}$.
	In order to show that $\forest{A} \times B$ is isomorphic to $\unrollP{B} \treeProd \forest{A}$, we start by proving that these two graphs have the same number of vertices.

	By definition, $V$ contains the Cartesian product of the vertices of $\forest{A}$ with those of $B$.
	Therefore $|V| = |\forest{A}| \cdot |B|$.
	We thus need to show that $V'$ has the same size.
	Consider $a$ a vertex of $\forest{A}$ of depth $d$.
	By definition of the tree product, we have that $V'$ contains a vertex $(a,(b,k))$ for every vertex $(b,k)$ of $\unrollP{B}$ whose depth is $d$.
	Since the depth of $(b,k)$ in $\unrollP{B}$ is $k$, we deduce that $V'$ contains all vertices of the form $(a,(b,d))$.
	Furthermore, remark that $V'$ does not contain any vertex of the form $(a,(b,k))$ if $k \neq d$.
	Now, by definition of the vertices of $\unrollP{B}$, we deduce that all vertices of the form $(b,d)$ with $b$ a vertex of $B$ belong to $\unrollP{B}$.
	Therefore, a vertex $(a,(b,k))$ belongs to $V'$ if and only if $a$ and $b$ are respectively vertices of $\forest{A}$ and $B$ and $k$ is the depth of $a$ in $\forest{A}$.
	Consequently, we can define two surjective functions $f: V' \to V$ and $g : V \to V'$ such that $f\bigg(\big(a,(b,k)\big)\bigg) = (a,b)$ and $g\big((a,b)\big) = \big(a,(b,k)\big)$ with $k$ the depth of $a$ in $\forest{A}$.
	We thus deduce that there exists a bijection between $V$ and $V'$, implying that these two sets have the same size.
	Furthermore, observe that $g$ is the inverse function of $f$.

	We show that $g$ is an isomorphism between $\forest{A} \times B$ and $\unrollP{B} \treeProd \forest{A}$.
	Consider an edge $\bigg(\big(a,(b,k)\big), \big(a',(b',k-1)\big)\bigg)$ of $E'$.
	By definition of the tree product, we deduce that the edge $(a,a')$ belongs to $\forest{A}$ and that the edge $\big((b,k), (b',k-1)\big)$ belongs to $\unrollP{B}$.
	Since the definition of partial-system unrolls implies that the edge $(b,b')$ belongs to $B$, we infer that the edge $\big((a,b),(a',b')\big)$ belongs to $E$.
	It follows that if $\bigg(\big(a,(b,k)\big), \big(a',(b',k-1)\big)\bigg)$ is an edge of $E'$ then $\biggl(g\bigg(\big(a,(b,k)\big)\bigg),g\bigg(\big(a',(b',k-1)\big)\bigg)\biggr)$ is an edge of $E$.
	By a similar reasoning, we prove that if $\big((a,b),(a',b')\big)$ is an edge of $E$, then $\bigg(f\big((a,b)\big),f\big((a',b')\big)\bigg)$ is an edge of $E'$.
	We conclude that the two graphs are isomorphic.
\end{proof}

\begin{Corollary}\label{corollary:caracterisation_nilpotent_permutation_product}
	Let $\forest{A}$ be a finite forest and $B$ a permutation.
	Then $\forest{A} \times B$ is equal to $|B| \forest{A}$.
\end{Corollary}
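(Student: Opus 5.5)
The plan is to derive the corollary from Lemma~\ref{lemma:caracterisation_nilpotent_full_product}: a permutation has no transient vertices, so its unroll is very simple. We have $\forest{A} \times B = \forest{A} \treeProd \unrollP{B}$. Since $B$ is a permutation, $\nilPart{B}$ is empty and $\fullPart{B} = B$. Every vertex of $B$ is cyclic, so $\Di{i}{B}$ is empty for $i \ge 1$. Hence $\unrollP{B} = \unroll{B}$.

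The first step is to describe $\unroll{B}$ explicitly. There is one unroll tree per cyclic vertex, so $|B|$ trees in total. Each cyclic vertex $u$ has depth $0$ and has exactly one predecessor in $B$, namely the previous vertex on its cycle. Therefore, in $\unroll{B}$, the vertex $(u,k)$ has exactly one child $(v,k+1)$, where $f_B(v)=u$. So each unroll tree is the infinite path $\tree{p}_\infty$, with exactly one vertex at every depth. This gives $\unrollP{B} = |B|\,\tree{p}_\infty$.

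The second step is to compute the products. By distributivity of $\treeProd$ over $+$ (the semiring structure), $\forest{A} \treeProd |B|\tree{p}_\infty = |B|(\forest{A} \treeProd \tree{p}_\infty)$. It then suffices to show $\tree{t} \treeProd \tree{p}_\infty = \tree{t}$ for every finite tree $\tree{t}$. By Definition~\ref{prodintrees}, the vertices of the product are the pairs $(u,w)$ with $\depth{u}=\depth{w}$. Since $\tree{p}_\infty$ has a unique vertex $w_k$ at each depth $k$, the map $u \mapsto (u,w_{\depth{u}})$ is a bijection onto the vertex set. Edges $(u,u')$ of $\tree{t}$ join depth $k$ to depth $k-1$, and $(w_k,w_{k-1})$ is an edge of $\tree{p}_\infty$. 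So arcs correspond exactly under this bijection, and it is an isomorphism. Summing over the trees of $\forest{A}$ yields $\forest{A}\times B = |B|\forest{A}$.

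No step is a real obstacle. The only point needing care is to check that the new unroll $\unrollP{\cdot}$ coincides with the classical $\unroll{\cdot}$ on permutations, with no extra terms from $\Di{i}{B}$. As a sanity check independent of unrolls, the identity can also be verified directly: $\forest{A}\times \cycle{n}$ has vertices $(a,c)$, and the map $(a,c)\mapsto$ ``copy indexed by $c - \depth{a} \bmod n$'' splits it into $n$ copies of $\forest{A}$. This confirms the result for each cycle, and additivity in $B$ then gives the general case.
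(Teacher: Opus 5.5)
Your proof is correct and follows the paper's argument: apply Lemma~\ref{lemma:caracterisation_nilpotent_full_product}, observe that $\unrollP{B}$ consists of $|B|$ infinite simple paths, and use that the infinite path is the identity for $\treeProd$, which you verify explicitly whereas the paper only asserts it. One small caveat on your optional direct check: if the cycle is labelled so that $f(c)=c+1$, the index that separates the copies is $c+\depth{a} \bmod n$ rather than $c-\depth{a}$.
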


\begin{proof}
	By Lemma~\ref{lemma:caracterisation_nilpotent_full_product}, we have that $\forest{A} \times B$ is equal to $\unrollP{B} \treeProd \forest{A}$.
	Since $B$ is a permutation, its partial-system unroll consists of $|B|$ infinite simple paths.
	In other words, the partial-system unroll of $B$ is equal to $|B|$ times the identity of $(\mathbb{F},+,\treeProd)$.
\end{proof}

\begin{prop}\label{proposition:cond_necessary_for_cancellability}
	Let $A$ be a partial system.
	If $A$ is cancelable, then $A$ contains a dendron.
\end{prop}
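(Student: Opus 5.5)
We argue by contraposition: assume $A$ contains no dendron and build two distinct partial systems $X \neq Y$ with $A \times X = A \times Y$. Write $A = \nilPart{A} + \fullPart{A}$. The hypothesis says that every connected component of $\fullPart{A}$ has a cycle of length at least $2$. If $\fullPart{A}$ is empty, Theorem~\ref{theorem:none_nilpotent_are_cancelable} already yields two finite forests $X \neq Y$ with $A\times X = A \times Y$. So we may assume $\fullPart{A} \neq \emptyset$ and that all its cycle lengths are at least $2$.

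The plan is to take $X = \forest{X}$ and $Y = \forest{Y}$ to be finite forests that differ, yet absorb both parts of $A$ identically. By distributivity,
\[ A \times \forest{X} = \nilPart{A}\times\forest{X} + \fullPart{A}\times \forest{X}, \]
and similarly for $\forest{Y}$. So we need two equalities, $\nilPart{A}\times\forest{X} = \nilPart{A}\times\forest{Y}$ and $\fullPart{A}\times\forest{X} = \fullPart{A}\times\forest{Y}$.

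\emph{The nilpotent part.} By Proposition~\ref{theorem:caracterization_of_counter-example_of_cancelation}, the first equality holds as soon as $\cut{\unrollP{\forest{X}}}{\depth{\nilPart{A}}} = \cut{\unrollP{\forest{Y}}}{\depth{\nilPart{A}}}$. The path construction preceding Theorem~\ref{theorem:none_nilpotent_are_cancelable} achieves this: take $\forest{X}$ to be paths of depths $2D+2$ and $2D$, and $\forest{Y}$ two paths of depth $2D+1$. We will choose $D$ large, at least $\depth{\nilPart{A}}$ and also at least $\depth{\fullPart{A}}$ plus the cycle lengths.

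\emph{The full part.} By Lemma~\ref{lemma:caracterisation_nilpotent_full_product}, $\fullPart{A}\times \forest{X} = \forest{X}\treeProd \unrollP{\fullPart{A}}$. Note that the unroll trees of $\fullPart{A}$ are not all isomorphic to infinite paths; otherwise $\fullPart{A}$ would be a permutation. A permutation with no fixed point already works, by Corollary~\ref{corollary:caracterisation_nilpotent_permutation_product}, only if $|\forest{X}| = |\forest{Y}|$ — and our paths do satisfy $(2D+3)+(2D+1) = 2(2D+2)$. In general, though, we need $\forest{X}\treeProd\unrollP{\fullPart{A}} = \forest{Y}\treeProd\unrollP{\fullPart{A}}$. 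For finite trees, the product $\tree{x}\treeProd\tree{u}$ only sees $\cut{\tree{u}}{\depth{\tree{x}}}$. Equality would therefore follow if the multiset $\Di{\cdot}{}$-decomposition of the paths matches level by level, \ie\ if $\unrollP{\forest{X}}$ and $\unrollP{\forest{Y}}$ agree after identifying trees by their product with the unroll trees of $\fullPart{A}$.

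This is exactly where the absence of dendrons must enter, and it is the main obstacle. Periodicity of each unroll tree with period $p \ge 2$ means that the unroll trees of $\fullPart{A}$ are permuted by shifting one level down along the cycle. Paths of depths $n$ and $n'$ therefore give isomorphic products whenever the sum over the cycle is invariant under the shift.

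Because of this, I would rather choose $\forest{X}$ and $\forest{Y}$ not as arbitrary paths but as \emph{partial systems with cycles}. The alternative construction is $X = \cycle{p}\times Z$ versus $Y = Z'$, mirroring the FDDS non-injectivity argument of Theorem~\ref{th:cara_poly_inj}. In $\FDDS$, a system with no dendron satisfies $\fullPart{A}\times \cycle{1} = \fullPart{A}\times$ (something different) via the cycle-length gcd trick. Concretely, with $L$ the lcm of the cycle lengths of $A$, all of which are at least $2$, I would try
\[ X = \cycle{L} + \varrho\text{-free padding}, \qquad Y = L\,\cycle{1}\text{-replaced suitably}, \]
and verify through the unroll morphism that $\unroll{\fullPart{A}}\treeProd \unroll{\cycle{L}}$ and the corresponding product for $Y$ coincide. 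The first thing to check is the simplest choice, $X = \cycle{L}$ and $Y = \cycle{L}+\cycle{L}$ modified. Failing that, I would fall back on directly invoking the FDDS result of Theorem~\ref{th:cara_poly_inj} for $P = \fullPart{A}X$ to obtain FDDS $X \neq Y$ with $\fullPart{A}X = \fullPart{A}Y$. I would then check that the nilpotent part also agrees: by Lemma~\ref{lemma:caracterisation_nilpotent_full_product}, $\nilPart{A}\times X = \nilPart{A}\treeProd\unrollP{X}$, which requires $\cut{\unroll{X}}{\depth{\nilPart{A}}} = \cut{\unroll{Y}}{\depth{\nilPart{A}}}$. Arranging this compatibility — for instance by taking $X, Y$ permutations with $|X| = |Y|$, so that Corollary~\ref{corollary:caracterisation_nilpotent_permutation_product} gives equality on $\nilPart{A}$ automatically — is the step I expect to require the real work.
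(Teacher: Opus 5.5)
You handle the case $\fullPart{A}=\emptyset$ exactly as the paper does, but the case $\fullPart{A}\neq\emptyset$ is not proved. Your main plan, using finite forests as witnesses, cannot work. By Lemma~\ref{lemma:Partial_withg_full_cancelable_over_nil}, once $\fullPart{A}\neq\emptyset$ the equality $A\times\forest{X}=A\times\forest{Y}$ forces $\forest{X}=\forest{Y}$. You also misread Corollary~\ref{corollary:caracterisation_nilpotent_permutation_product}: for a permutation $B$ it says $\forest{X}\times B=|B|\,\forest{X}$. So for $A=\cycle{2}$ you get $A\times\forest{X}=2\forest{X}\neq 2\forest{Y}=A\times\forest{Y}$ for your two path forests. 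The vertex-count identity $(2D+3)+(2D+1)=2(2D+2)$ plays no role. The ``$\cycle{L}$ plus padding'' idea is not a definite construction either. The only viable part of your proposal is the fallback, which is the paper's route: take FDDS $X\neq Y$ with $\fullPart{A}X=\fullPart{A}Y$, then check that $X$ and $Y$ also agree against $\nilPart{A}$.

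You stop exactly at that check, calling it ``the real work'', so the argument is incomplete. The missing step is short, and there are two ways to close it.

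\emph{Permutation witnesses (the paper's route).} The paper takes the permutations $X\neq Y$ constructed in \cite{article_arbre}. Since $\fullPart{A}\neq\emptyset$, counting vertices in $\fullPart{A}X=\fullPart{A}Y$ forces $|X|=|Y|$. Corollary~\ref{corollary:caracterisation_nilpotent_permutation_product} then gives $\nilPart{A}\times X=|X|\,\nilPart{A}=\nilPart{A}\times Y$, hence $A\times X=A\times Y$.

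If you want explicit permutations, use this fact: a connected component $C$ with cycle length $\ell$ satisfies $C\times\cycle{\ell}=\ell\,C=C\times\ell\,\cycle{1}$. Let $\ell_1,\dots,\ell_k\geq 2$ be the distinct cycle lengths of $\fullPart{A}$. Split the formal expansion of $\prod_i(\cycle{\ell_i}-\ell_i\cycle{1})$ into its positive and negative terms to get two permutations $X,Y$. The factor indexed by the cycle length of a given component annihilates that component, and additive cancellation in $\FDDS$ then gives $\fullPart{A}X=\fullPart{A}Y$. Finally $X\neq Y$, because their numbers of fixed points differ by $\prod_i(-\ell_i)\neq 0$.

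\emph{Arbitrary FDDS witnesses (your fallback).} Even with witnesses supplied by Theorem~\ref{th:cara_poly_inj}, nothing needs arranging. From $\fullPart{A}X=\fullPart{A}Y$ you get $\unroll{\fullPart{A}}\treeProd\unroll{X}=\unroll{\fullPart{A}}\treeProd\unroll{Y}$. Theorem~\ref{th:injPolyUnrolls} then gives $\unroll{X}=\unroll{Y}$, hence $\unrollP{X}=\unrollP{Y}$. Lemma~\ref{lemma:caracterisation_nilpotent_full_product} yields $\nilPart{A}\times X=\nilPart{A}\treeProd\unrollP{X}=\nilPart{A}\treeProd\unrollP{Y}=\nilPart{A}\times Y$. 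So the cut condition you were worried about holds automatically, in fact for the full unrolls.
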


\begin{proof}
	We reason by contraposition.
	Suppose that $A$ does not contain a dendron.
	Then, either $\fullPart{A}$ is empty and therefore $A$ is not cancelable by Theorem~\ref{theorem:none_nilpotent_are_cancelable}.
	Or $\fullPart{A}$ is nonempty.
    In this case, \cite{article_arbre} introduces two distinct permutations $\mathsf{X}$ and $\mathsf{Y}$ an such that $\fullPart{A} \times \mathsf{X} = \fullPart{A} \times \mathsf{Y}$. 
    And since $\mathsf{X}$ and $\mathsf{Y}$ are permutations we have $\nilPart{A} \times \mathsf{X} = \nilPart{A} \times \mathsf{Y}$ by Corollary~\ref{corollary:caracterisation_nilpotent_permutation_product}.
    This implies that $A \times \mathsf{X} = A  \times \mathsf{Y}$ with $\mathsf{X} \neq \mathsf{Y}$.
\end{proof}

We now show that if a partial system $A$ contains a dendron, then it is cancelable.
Let $X$ and $Y$ be two partial systems.
As stated previously, $A \times X = A\times Y$ implies:
	\begin{equation}\label{eq:grande_eg}
		\fullPart{X} \times \bigg(\fullPart{A} + \nilPart{A}\bigg) + \nilPart{X} \times \bigg(\fullPart{A} + \nilPart{A}\bigg) = \fullPart{Y} \times \bigg(\fullPart{A} + \nilPart{A}\bigg) + \nilPart{Y} \times \bigg(\fullPart{A} + \nilPart{A}\bigg).
	\end{equation}

From this equality, it follows that $\fullPart{A} \times \fullPart{X} = \fullPart{A} \times \fullPart{Y}$.
Now, by Theorem~\ref{th:cara_poly_inj}, the fact that $A$ contains a dendron implies that $\fullPart{X} = \fullPart{Y}$.
We deduce from~\eqref{eq:grande_eg} that
\[
\nilPart{X} \times (\fullPart{A} + \nilPart{A}) = \nilPart{Y} \times (\fullPart{A} + \nilPart{A}).
\]
In order to show that this equality implies $\nilPart{X} = \nilPart{Y}$, we prove that partial systems containing cycles are cancelable with respect to finite forests.
That is, if $A \times \forest{X} = A \times \forest{Y}$ with $\fullPart{A} \neq \emptyset$, then $\forest{X} = \forest{Y}$.
The proof relies on the total order on trees that is compatible with the product $\treeProd$.
In order to use this order simply, we introduce the following lemma.

\begin{Lemma}\label{lemma:unroll_distributivity_over_product_partial}
	The function $\unrollP{X}$ is a morphism between $(\Partial,+,\times)$ and $(\unrollP{\Partial},+,\treeProd)$.
\end{Lemma}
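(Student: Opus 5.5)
We need to show that $\unrollP{A+B} = \unrollP{A} + \unrollP{B}$ and $\unrollP{A \times B} = \unrollP{A} \treeProd \unrollP{B}$ for all partial systems $A,B$. Additivity is immediate from the definition, since both $\nilPart{\cdot}$ and $\fullPart{\cdot}$ are additive and $\Di{i}{\cdot}$ acts componentwise. For the product, we split $A = \nilPart{A} + \fullPart{A}$ and $B = \nilPart{B} + \fullPart{B}$ and expand $A \times B$ into the four terms written before Lemma~\ref{lemma:Di_distribut_with_tree_product}. On the other side we expand $\unrollP{A} \treeProd \unrollP{B}$ by distributivity of $\treeProd$. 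It then suffices to match the terms pairwise, using additivity of $\unrollP{\cdot}$. To make this work, we first observe that the decomposition is natural. A finite forest times anything is a finite forest, since it lies in the ideal $\mathbb{F}_f$. The product $\fullPart{A} \times \fullPart{B}$ is an FDDS. Hence $\nilPart{A \times B} = \nilPart{A}\times\nilPart{B} + \nilPart{A}\times\fullPart{B} + \fullPart{A}\times\nilPart{B}$ and $\fullPart{A\times B} = \fullPart{A}\times\fullPart{B}$.

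We then treat the three kinds of terms separately. (i) For $\nilPart{A} \times \nilPart{B}$ we use Lemma~\ref{lemma:unroll_distributivity_over_product_nil}. (ii) For $\fullPart{A}\times\fullPart{B}$ we use the known morphism property $\unroll{\fullPart{A}\times\fullPart{B}} = \unroll{\fullPart{A}} \treeProd \unroll{\fullPart{B}}$ from~\cite{article_arbre}. We must also handle the extra transient trees $\Di{i}{\cdot}$ with $i \ge 1$, which here means trees rooted at non-cyclic vertices at depth $i$. For this I would show directly that the subtree rooted at a vertex $(a,b)$ of the product is the $\treeProd$-product of the subtree of $A$ rooted at $a$ with the unroll tree of $B$ rooted at the copy of $b$ at depth $d(a)$, and symmetrically. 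Concretely, I would regard $\unrollP{X}$ as the multiset of "in-trees rooted at every vertex", with the convention that cyclic vertices contribute their infinite unroll tree. Under this view, $\unrollP{X}$ is exactly the multiset over all vertices $v$ of $X$ of the tree $\tau_X(v)$ of backward walks ending at $v$, where a cyclic vertex gives its unroll tree and the tree obtained is infinite. (iii) For mixed terms such as $\nilPart{A}\times\fullPart{B}$, Lemma~\ref{lemma:caracterisation_nilpotent_full_product} gives $\nilPart{A}\times\fullPart{B} = \nilPart{A}\treeProd\unrollP{\fullPart{B}}$. We then need the unroll of this forest to equal $\unrollP{\nilPart{A}}\treeProd\unrollP{\fullPart{B}}$. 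For this, apply $\Di{i}{\cdot}$ together with Lemma~\ref{lemma:Di_distribut_with_tree_product}, and use $\Di{j}{\unrollP{\fullPart{B}}} \subseteq \unrollP{\fullPart{B}}$ in the right multiset sense, mimicking the reindexing computation in the proof of Lemma~\ref{lemma:reformule}.

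I would actually organize the whole proof uniformly through the vertex-indexed description rather than case by case. Write $\unrollP{X} = \sum_{v \in X} \tau_X(v)$. We then need to prove the tree identity $\tau_{A\times B}((a,b)) = \tau_A(a) \treeProd \tau_B(b)$, as a bijection between backward walks of equal length. A backward walk of length $k$ ending at $(a,b)$ in $A\times B$ is exactly a pair of backward walks of length $k$ ending at $a$ and at $b$, and this correspondence respects the parent relation. Summing the identity over all pairs $(a,b)$, and using bilinearity of $\treeProd$ over sums, gives the claim.

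The main obstacle is proving that this vertex-indexed description agrees with the paper's definition $\sum_i \Di{i}{A}$ with $\Di{0}{A} = \nilPart{A}+\unroll{\fullPart{A}}$. In particular, the multiplicities coming from the cyclic part must match. Each cyclic vertex must contribute exactly one unroll tree, and each transient vertex of $\fullPart{A}$ must contribute exactly its finite tree, counted once over the levels $i \ge 1$. I would check this carefully, using the generalized $\dt{\fullPart{A}}$ defined as the trees rooted at transient preimages of cyclic nodes. Once that bookkeeping is settled, the walk bijection finishes the proof.
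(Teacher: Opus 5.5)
Your proposal is correct, and its final uniform version is essentially the paper's proof. The paper also works with the vertex-indexed description of $\unrollP{X}$ (vertices $(x,k)$ at depth $k$) and shows that $((a,b),k)\mapsto((a,k),(b,k))$ is an isomorphism $\unrollP{A\times B}\to\unrollP{A}\treeProd\unrollP{B}$; this is exactly your bijection between backward walks of length $k$ to $(a,b)$ and pairs of such walks to $a$ and $b$, done globally rather than root by root.

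The bookkeeping you flag is only asserted in the paper: that this description agrees with $\sum_i\Di{i}{A}$, keeping only pairs $(x,k)$ whose $k$-th iterate is defined. You should also add, as the paper does, that the empty system and the fixed point $\cycle{1}$ are sent to the empty forest and the infinite path, which are the identities of $(\unrollP{\Partial},+,\treeProd)$.
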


\begin{proof}
	By definition, $\unrollP{X}$ distributes over addition.
	Furthermore, still by definition, the partial-system unroll of the empty system is the empty forest and that of the fixed point is the infinite simple path.
	Therefore, the images of the additive and multiplicative identities of $(\Partial,+,\times)$ are respectively the additive and multiplicative identities of $(\unrollP{\Partial},+,\treeProd)$.
	It remains to show that $\unrollP{X}$ distributes over the product.

	Let $A$ and $B$ be two partial systems.
	In this proof, the partial-system unrolls as well as $A$ and $B$ are not taken up to isomorphism.
	We consider $\unrollP{A \times B}$ with vertex set $V$ and edge set $E$.
	Similarly, we consider $\unrollP{A} \treeProd \unrollP{B}$ with vertex set $V'$ and edge set $E'$.
	We show that $\unrollP{A \times B}$ is equal to $\unrollP{A} \treeProd \unrollP{B}$.
	For this, we define a bijective function between $V$ and $V'$ and prove that it is an isomorphism between $(V,E)$ and $(V',E')$.

	By definition, we have that $V$ is the set of vertices $\big((a,b),k\big)$ with $a$ and $b$ vertices of $A$ and $B$ and $k\ge 0$ an integer.
	Consider the bijective function $f$ that maps each vertex $\big((a,b),k\big)$ to $\big((a,k),(b,k)\big)$.
	Let $f(V)$ denote the set of images of the vertices of $V$ under $f$.
	We directly have that $f$ is a bijection between $V$ and $f(V)$.
	It remains to show that $f(V) = V'$.
	For this, recall that $\unrollP{A}$ and $\unrollP{B}$ have as vertex sets respectively all possible pairs $(a,k)$ and $(b,k)$ with $a$ and $b$ vertices of $A$ and $B$ respectively and $k\ge 0$ an integer.
	Furthermore, the vertices $(a,k)$ and $(b,k)$ have depth $k$ in $\unrollP{A}$ and $\unrollP{B}$ respectively.
	This implies that $V' = \{\big((a,k),(b,k)\big) \mid a \in A, b \in B, k\in \N\}$.
	Therefore, $f(V) = V'$.

	We now show that $f$ is an isomorphism between $\unrollP{A \times B}$ and $\unrollP{A} \treeProd \unrollP{B}$.
	Consider $e= \bigg(\big((a,b),k\big), \big((a',b'),k-1\big)\bigg)$.
	Then, by definition of partial-system unrolls, $e$ is an edge of $E$ if and only if $\big((a,b),(a',b')\big)$ is an edge of $A \times B$ and $k\ge1$.
	And by definition of the direct product, $\big((a,b),(a',b')\big)$ is an edge of $A \times B$ if and only if $(a,a')$ is an edge of $A$ and $(b,b')$ is an edge of $B$.
	Again, by definition of partial-system unrolls, $\big((a,k),(a',k-1)\big)$ is an edge of $\unrollP{A}$ if and only if $(a,a')$ is an edge of $A$ and $k\ge1$.
	Similarly, $\big((b,k),(b',k-1)\big)$ is an edge of $\unrollP{B}$ if and only if $(b,b')$ is an edge of $B$ and $k\ge1$.
	Therefore, by definition of the tree product, $\bigg(\big((a,k),(b,k)\big),\big((a',k-1),(b',k-1)\big)\bigg)$ is an edge of $E'$ if and only if $(a,a')$ is an edge of $A$, $(b,b')$ is an edge of $B$ and $k\ge1$.
	We conclude that $\bigg(\big((a,b),k\big), \big((a',b'),k-1\big)\bigg)$ is an edge of $E$ if and only if $\biggl(f\bigg(\big((a,b),k\big)\bigg), f\bigg(\big((a',b'),k-1\big)\bigg)\biggr)$ belongs to $E'$.
	The claim follows.
\end{proof}




\begin{Lemma}\label{lemma:Partial_withg_full_cancelable_over_nil}
	Let $A$ be a partial system such that $\fullPart{A} \neq \emptyset$.
	Let $\forest{X}$ and $\forest{Y}$ be two finite forests.
	If $A \times \forest{X} = A \times \forest{Y}$ then $\forest{X} = \forest{Y}$.
\end{Lemma}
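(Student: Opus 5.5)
The plan is to move the equation to the unroll side, where the infinite branch coming from a cycle of $A$ makes $\unrollP{A}$ deep enough for Lemma~\ref{lemme:cancelFiniForest} to give full (not merely truncated) cancellation. We then return to the forests using the injectivity of $\unrollP{\cdot}$ on $\Nilpotent$.

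First, apply Lemma~\ref{lemma:unroll_distributivity_over_product_partial} to $A \times \forest{X} = A \times \forest{Y}$. This gives
\[
\unrollP{A} \treeProd \unrollP{\forest{X}} = \unrollP{A} \treeProd \unrollP{\forest{Y}}.
\]
Let $D$ be the maximum of $\depth{\forest{X}}$ and $\depth{\forest{Y}}$. Since $\forest{X}$ and $\forest{Y}$ are finite forests, $\unrollP{\forest{X}}$ and $\unrollP{\forest{Y}}$ are finite forests of depth at most $D$. The tree product is levelwise: a vertex $(u,v)$ of $\tree{t}_1 \treeProd \tree{t}_2$ requires $\depth{u}=\depth{v}$. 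Hence, for any forest $\forest{F}$ of depth at most $D$ and any forest $\forest{G}$,
\[
\forest{G} \treeProd \forest{F} = \cut{\forest{G}}{D} \treeProd \forest{F}.
\]
Consequently the equality above becomes
\[
\cut{\unrollP{A}}{D} \treeProd \unrollP{\forest{X}} = \cut{\unrollP{A}}{D} \treeProd \unrollP{\forest{Y}}.
\]
This reduction is the step I expect to need the most care, since the order and cancellation tools were stated for unroll trees and finite forests. Cutting makes everything finite, so Lemma~\ref{lemme:cancelFiniForest} applies without any subtlety about infinite trees.

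Next, because $\fullPart{A} \neq \emptyset$, $\unrollP{A}$ contains at least one unroll tree of $\fullPart{A}$, and this tree has an infinite branch. Therefore $\cut{\unrollP{A}}{D}$ has depth exactly $D$. Lemma~\ref{lemme:cancelFiniForest} then yields
\[
\cut{\unrollP{\forest{X}}}{D} = \cut{\unrollP{\forest{Y}}}{D}.
\]
Since both forests have depth at most $D$, these cuts are the forests themselves, so $\unrollP{\forest{X}} = \unrollP{\forest{Y}}$.

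Finally, Proposition~\ref{proposition:unroll_is_isomorphism} states that $\unrollP{\cdot}$ is injective on $\Nilpotent$. Hence $\forest{X} = \forest{Y}$.
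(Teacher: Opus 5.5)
Your proof is correct, but it takes a different route from the paper's. Both proofs first pass to $\unrollP{A} \treeProd \unrollP{\forest{X}} = \unrollP{A} \treeProd \unrollP{\forest{Y}}$.

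\textbf{The paper's route.} It works tree by tree, in these steps:
\begin{enumerate}
\item It takes the smallest tree $\tree{t}$ of maximum depth in the product, for the order compatible with $\treeProd$.
\item It writes $\tree{t} = \tree{a}\treeProd\tree{x} = \tree{a}\treeProd\tree{y}$, with $\tree{a}$, $\tree{x}$, $\tree{y}$ the smallest suitable trees of the factors.
\item It applies Lemma~\ref{lemma:casi_annulable} to these single trees. Since $\tree{a}$ is at least as deep as $\tree{x}$ and $\tree{y}$, this gives $\tree{x} = \tree{y}$.
\item It removes these trees and recurses.
\end{enumerate}

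\textbf{Your route.} You note that $\treeProd$ is levelwise, so multiplying by a forest of depth at most $D$ only sees $\cut{\unrollP{A}}{D}$. This cut has depth exactly $D$ because of the infinite branch coming from $\fullPart{A}$. One application of Lemma~\ref{lemme:cancelFiniForest} at the forest level then yields $\unrollP{\forest{X}} = \unrollP{\forest{Y}}$ directly.

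\textbf{What your version gains.} It removes the order-theoretic peeling. The paper claims that the minimal maximum-depth tree of the product is the product of minimal trees of the factors. That claim is justified only briefly and must be reused at every step of the recursion. Your truncation identity, by contrast, follows immediately from the definition of $\treeProd$. Truncating also avoids any question about applying the lemma to a forest of infinite depth.

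\textbf{What it costs.} You need the cited lemma in its forest form, with a left factor that mixes trees of different depths. The paper only uses it for a single tree $\tree{a}$. Since the lemma is stated for forests, your use is legitimate.

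The final appeal to Proposition~\ref{proposition:unroll_is_isomorphism} is the same in both proofs.
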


\begin{proof}
	Suppose that $A \times \forest{X} = A \times \forest{Y}$.
	Then, by Lemma~\ref{lemma:unroll_distributivity_over_product_partial}, we obtain $\unrollP{A} \treeProd \unrollP{X} = \forest{B} = \unrollP{A} \treeProd \unrollP{Y}$.

	Consider $\tree{t}$ the smallest tree, according to the order compatible with the product, of $\forest{B}$ with maximum depth.

	We then have $\tree{a} \tree{x} = \tree{t} = \tree{a} \tree{y}$ with $\tree{a}$ the smallest tree of $\unrollP{A}$ of depth at least $\depth{\tree{t}}$ and $\tree{x}$ and $\tree{y}$ the smallest trees of $\unrollP{\forest{x}}$ and $\unrollP{\forest{y}}$ respectively of depth $\depth{\tree{t}}$.
	Indeed, we know that $\depth{\tree{x}} = \depth{\tree{t}} = \depth{\tree{y}}$ since $\unrollP{A}$ contains infinite trees.
	Furthermore, Lemma~\ref{lemma:casi_annulable} implies that $\cutT{x}{\depth{\tree{a}}} = \cutT{y}{\depth{\tree{a}}}$.
	Since the depth of $\tree{a}$ is greater than or equal to that of $\tree{x}$ and $\tree{y}$, $\cutT{x}{\depth{\tree{a}}} = \tree{x}$ and $\cutT{y}{\depth{\tree{a}}} = \tree{y}$.
	It follows that $\tree{x} = \tree{y}$ and consequently, $\unrollP{A} \treeProd (\unrollP{\forest{x}} - \tree{x}) = \unrollP{A} \treeProd (\unrollP{\forest{y}} - \tree{y})$.

	By applying this reasoning recursively, we deduce that $\unrollP{\forest{x}} = \unrollP{\forest{y}}$.
	From Proposition~\ref{proposition:unroll_is_isomorphism}, we conclude that $\forest{x} = \forest{y}$.
\end{proof}

Thus, returning to our equation
\[
\nilPart{X} \times (\fullPart{A} + \nilPart{A}) = \nilPart{Y} \times (\fullPart{A} + \nilPart{A}),
\]
Lemma~\ref{lemma:Partial_withg_full_cancelable_over_nil} implies that $\nilPart{X} = \nilPart{Y}$ since $\fullPart{A}$ contains a dendron (and is thus nonempty).
We conclude that if $A$ contains a dendron, then $\fullPart{X} = \fullPart{Y}$ and $\nilPart{X} = \nilPart{Y}$ for all partial systems $X$ and $Y$ such that $A \times X = A \times Y$.
This implies that $A$ is cancelable.
Combining this result with Proposition~\ref{proposition:cond_necessary_for_cancellability}, we conclude the proof of the following theorem.

\begin{Theorem}\label{theorem:caracterisation_cancellable_in_partial}
	Let $A$ be a partial system.
	Then $A$ is cancelable if and only if $A$ contains a dendron.
\end{Theorem}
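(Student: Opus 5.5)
The proof assembles the two directions already established in this subsection.

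For the forward implication, I invoke Proposition~\ref{proposition:cond_necessary_for_cancellability} directly, argued by contraposition. If $A$ contains no dendron, there are two cases. If $\fullPart{A}=\emptyset$, then $A$ is a finite forest, and Theorem~\ref{theorem:none_nilpotent_are_cancelable} produces distinct forests $\forest{X}\neq\forest{Y}$ with $A\times\forest{X}=A\times\forest{Y}$. If $\fullPart{A}\neq\emptyset$ but contains no dendron, the permutations $\mathsf{X}\neq\mathsf{Y}$ from~\cite{article_arbre} satisfy $\fullPart{A}\times\mathsf{X}=\fullPart{A}\times\mathsf{Y}$. Corollary~\ref{corollary:caracterisation_nilpotent_permutation_product} then gives $\nilPart{A}\times\mathsf{X}=|\mathsf{X}|\nilPart{A}=|\mathsf{Y}|\nilPart{A}=\nilPart{A}\times\mathsf{Y}$. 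Here I need to check that $|\mathsf{X}|=|\mathsf{Y}|$, which follows by counting vertices in $\fullPart{A}\times\mathsf{X}=\fullPart{A}\times\mathsf{Y}$ since $\fullPart{A}$ is nonempty. Summing the two equalities gives a counterexample to cancellability.

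For the converse, assume $A$ contains a dendron. Necessarily the dendron lies in $\fullPart{A}$, since dendrons have a cycle. Take $X,Y$ with $A\times X=A\times Y$ and expand both sides as in~\eqref{eq:grande_eg}. The key separation step is that $\fullPart{\cdot}$ is a homomorphism to $\FDDS$. Hence the components with a cycle on each side are exactly $\fullPart{A}\times\fullPart{X}$ and $\fullPart{A}\times\fullPart{Y}$, because a product involving a finite forest is a finite forest by the ideal lemma. So these agree. Theorem~\ref{th:cara_poly_inj}, applied to the degree-one polynomial $\fullPart{A}X$, whose nonconstant coefficient contains a dendron, gives $\fullPart{X}=\fullPart{Y}$. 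Cancelling the now-identical summands $\fullPart{X}\times A$ from both multisets of components, which is valid since addition is disjoint union of multisets, leaves $A\times\nilPart{X}=A\times\nilPart{Y}$. Lemma~\ref{lemma:Partial_withg_full_cancelable_over_nil} applies because $\fullPart{A}\neq\emptyset$, and yields $\nilPart{X}=\nilPart{Y}$. Therefore $X=\fullPart{X}+\nilPart{X}=Y$.

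The main obstacle is the separation step. I must justify that the cyclic and acyclic parts of each side are determined individually, so that additive cancellation of $\fullPart{X}\times\nilPart{A}$-type terms is legitimate. This comes from additive cancellativity of multisets of connected components together with the fact that $\fullPart{\cdot}$ commutes with products. Everything else is a direct citation of earlier results.
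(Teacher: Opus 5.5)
Your proof is correct and follows the paper's own argument. Necessity goes through Proposition~\ref{proposition:cond_necessary_for_cancellability}: Theorem~\ref{theorem:none_nilpotent_are_cancelable} handles the finite-forest case, and the cited permutation witnesses together with Corollary~\ref{corollary:caracterisation_nilpotent_permutation_product} handle the case $\fullPart{A}\neq\emptyset$; sufficiency isolates $\fullPart{A}\times\fullPart{X}=\fullPart{A}\times\fullPart{Y}$, applies Theorem~\ref{th:cara_poly_inj}, and finishes with Lemma~\ref{lemma:Partial_withg_full_cancelable_over_nil}. Your check that $|\mathsf{X}|=|\mathsf{Y}|$ (by counting vertices, since $\fullPart{A}$ is nonempty) and your justification of the separation and additive-cancellation steps fill in details the paper leaves implicit.
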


We can extend this theorem to characterize injective polynomials over partial systems.

\begin{Theorem}\label{th:cara_poly_inj_partiel}
	Let $P = \sum_{i=0}^{m} A_i X^i$ be a polynomial over partial systems.
	Then, $P$ is injective if and only if there exists $i>0$ such that $A_i$ contains a dendron.
\end{Theorem}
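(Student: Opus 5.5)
The plan is to prove the two directions separately: sufficiency by a reduction to the FDDS case plus a cut argument on unrolls, and necessity by an explicit pair of permutations, in the spirit of Proposition~\ref{proposition:cond_necessary_for_cancellability}.

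\emph{Sufficiency.} Suppose some $A_k$ with $k>0$ contains a dendron, and $P(X)=P(Y)$. Since $\fullPart{\cdot}$ is a semiring homomorphism $\Partial\to\FDDS$, applying it gives $\sum_i \fullPart{A_i}\fullPart{X}^i=\sum_i\fullPart{A_i}\fullPart{Y}^i$. Here $\fullPart{A_k}$ contains a dendron, so Theorem~\ref{th:cara_poly_inj} yields $\fullPart{X}=\fullPart{Y}$.

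For the forest parts, apply the morphism $\unrollP{\cdot}$ of Lemma~\ref{lemma:unroll_distributivity_over_product_partial}. This gives $\sum_i \unrollP{A_i}\treeProd\unrollP{X}^i=\sum_i\unrollP{A_i}\treeProd\unrollP{Y}^i$. Fix an integer $D$ larger than the depths of $\nilPart{X}$, $\nilPart{Y}$ and $\nilPart{A_0}$. The cut $\cut{\cdot}{D}$ commutes with $+$ and with $\treeProd$, because the tree product is levelwise; I would check this carefully, but it is immediate from Definition~\ref{prodintrees}. Cutting at depth $D$ therefore gives an equality $Q(\cut{\unrollP{X}}{D})=Q(\cut{\unrollP{Y}}{D})$, where $Q$ is the polynomial over finite forests with coefficients $\cut{\unrollP{A_i}}{D}$.

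Because $A_k$ contains a dendron, $\unrollP{A_k}$ contains an infinite tree, so $\depth{\cut{\unrollP{A_k}}{D}}=D$. Hence $d_{max}=D$, and every coefficient, in particular the constant one, has depth at most $D$. The arguments $\cut{\unrollP{X}}{D}$ and $\cut{\unrollP{Y}}{D}$ also have depth at most $D$. Proposition~\ref{prop:injDesFinis} then gives $\cut{\unrollP{X}}{D}=\cut{\unrollP{Y}}{D}$.

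The summand $\cut{\unroll{\fullPart{X}}}{D}$ equals $\cut{\unroll{\fullPart{Y}}}{D}$ and can be removed. Since $D$ exceeds the depths of $\nilPart{X}$ and $\nilPart{Y}$, the remaining parts are uncut, so $\unrollP{\nilPart{X}}=\unrollP{\nilPart{Y}}$. Proposition~\ref{proposition:unroll_is_isomorphism} then gives $\nilPart{X}=\nilPart{Y}$, hence $X=Y$.

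\emph{Necessity.} Suppose no $A_i$ with $i>0$ contains a dendron. I would look for two distinct permutations $\mathsf{X}$ and $\mathsf{Y}$ of equal size such that $\fullPart{A_i}\mathsf{X}^i=\fullPart{A_i}\mathsf{Y}^i$ for all $i>0$.
\begin{enumerate}
\item If every $\fullPart{A_i}$ with $i>0$ is empty, any two distinct permutations of equal size work, for instance $2\cycle{1}$ and $\cycle{2}$.
\item Otherwise, the proof of the "only if" part of Theorem~\ref{th:cara_poly_inj} in~\cite{poly_inj,Antonio2026} supplies such permutations.
\end{enumerate}
They must have equal size, since equal products force equal unrolls and hence equal sizes. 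Powers of permutations are permutations of size $|\mathsf{X}|^i=|\mathsf{Y}|^i$. Corollary~\ref{corollary:caracterisation_nilpotent_permutation_product} then gives $\nilPart{A_i}\mathsf{X}^i=|\mathsf{X}|^i\nilPart{A_i}=\nilPart{A_i}\mathsf{Y}^i$. The constant term $A_0$ is common to both sides, so $P(\mathsf{X})=P(\mathsf{Y})$ and $P$ is not injective.

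The main obstacle is step 2 of the necessity direction: confirming that the FDDS counterexample can indeed be taken to consist of permutations of equal size. A natural attempt is $\mathsf{X}=L\,\cycle{1}$ versus $\mathsf{Y}=\cycle{L}$, with $L$ a common multiple of the cycle lengths appearing in the coefficients. This must be verified against the cited construction, since $K\times\cycle{L}$ and $L\cdot K$ have equal unrolls but need not be equal as FDDS. If it fails, I would instead take the permutations directly from the cited proof and only add the size argument.
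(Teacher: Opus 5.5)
Your proposal is correct and follows the paper's proof essentially step for step. For sufficiency you use $\fullPart{\cdot}$ with Theorem~\ref{th:cara_poly_inj} for the total part, then the partial-system unroll with Proposition~\ref{prop:injDesFinis} and Proposition~\ref{proposition:unroll_is_isomorphism} for the forest part; for necessity you use $2\cycle{1}$ versus $\cycle{2}$, or the cited FDDS permutations, combined with Corollary~\ref{corollary:caracterisation_nilpotent_permutation_product}. Cutting at depth $D$ before invoking Proposition~\ref{prop:injDesFinis} is a welcome tightening, since the paper applies that proposition directly to forests containing infinite trees. You were also right to distrust $L\cycle{1}$ versus $\cycle{L}$, which already fails for the coefficient $\cycle{2}+\cycle{3}$, so taking the permutations from the cited construction, as the paper does, is the right call.
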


\begin{proof}
	$(\Leftarrow)$ Suppose that there exists $j>0$ such that $A_j$ contains a dendron.
	Let $X$ and $Y$ be two partial systems such that $P(X) = P(Y)$.
	Since $\fullPart{P(X)} = \sum_{i=0}^{m} \fullPart{A_i} \fullPart{X}^i$, Theorem~\ref{th:cara_poly_inj} implies that $\fullPart{X} = \fullPart{Y}$.
	Furthermore, we have $\unrollP{P(X)} = \unrollP{P(Y)}$.
	Thus, since $\unrollP{P(X)} = \sum_{i=0}^{m} \unrollP{A_i} \unrollP{X}^i$ and since $\unrollP{A_j}$ contains an infinite tree, Proposition~\ref{prop:injDesFinis} implies that $\unrollP{X} = \unrollP{Y}$.
	Therefore $\unrollP{X} - \unrollP{\fullPart{X}} = \unrollP{Y} - \fullPart{Y}$ and Proposition~\ref{proposition:unroll_is_isomorphism} allows us to conclude that $\nilPart{X} = \nilPart{Y}$ and also that $X = Y$.

	$(\Rightarrow)$ Suppose that $A_i$ does not contain a dendron for any integer $i>0$.
	Then two cases are possible.
	Either $\fullPart{A_i} = \emptyset$ for every integer $i>0$.
	Consider $X = \cycle{1} + \cycle{1}$ and $Y = \cycle{2}$.
	Then $|X|^k = |Y|^k$ for every integer $k\ge 0$.
	Therefore, since all $A_i$ with $i>0$ are finite forests, Corollary~\ref{corollary:caracterisation_nilpotent_permutation_product} implies that $P(X) = P(Y)$.
	Hence $P$ is not injective.

	The other case is that $\fullPart{A_i} \neq \emptyset$ for at least one $i>0$.
	Then, using the permutations $X$ and $Y$ from section~3.3 of~\cite{Antonio2026}, it follows that $\nilPart{P(X)} = \nilPart{P(Y)}$ by Corollary~\ref{corollary:caracterisation_nilpotent_permutation_product} since $X$ and $Y$ are permutations of the same size, and that $\fullPart{P(X)} = \fullPart{P(Y)}$ by Theorem~17 of~\cite{Antonio2026}.
	We conclude that $P(X) = P(Y)$, hence $P$ is not injective.
\end{proof}

	\section{Solving $P(X) = B$ for partial systems}\label{sec:eqPartiel}

	In this section, we will prove that solving equations of the form $P(X) = B$ in the case of partial systems is no harder than solving $P(X) = B$ in the case of FDDS.
	More precisely, we will prove that if we have a polynomial-time algorithm for solving $P(X) = B$ with $P$ a polynomial over FDDS and $B$ an FDDS, then we also have one for solving $P'(X) = B'$ in polynomial time with $P'$ a polynomial over partial systems constructed by adding finite forests to the coefficients of $P$ and $B'$ a partial system such that $\fullPart{B'} = B$.

	Let $P = \sum_{i=0}^{m} A_i X^i$ be a polynomial over partial systems and let $B$ be a partial system.
	Directly, we observe that solving $P(X) = B$ is equivalent to solving $P(X) - A_0 = B - A_0$ if $A_0$ is a submultiset of $B$, otherwise no solution exists.
    
	Therefore, in what follows, we only consider polynomials without a constant term.
	We will begin by showing that if at least one coefficient of $P$ contains a cycle and there exists a solution $X$ to $P(X) = B$, then we can find $\nilPart{X}$ in polynomial time (Section~\ref{subsection:coef_full}).
	Then we will prove that if all coefficients of $P$ are finite forests, then we solve $P(X) = B$ in polynomial time (Section~\ref{section:division_in_Nilpotent}).

		\subsection{Solving $P(X) = B$ when some coefficient has a nonzero total part}\label{subsection:coef_full}

Let $P = \sum_{i=1}^{m} A_i X^i$ be a polynomial over partial systems without constant term such that one of the coefficients, say $A_i$, contains a cycle.
Let $B$ be a partial system such that the equation $P(X) = B$ has a solution.
In order to show that we can recover the finite forest part of the solution, we start by extending Lemma~\ref{lemma:Partial_withg_full_cancelable_over_nil} to the case of polynomials.
In other words, we show that all solutions of $P(X) = B$ have the same finite forest part.
After that, we only need to prove that it is possible to construct this forest in polynomial time.

\begin{Lemma}\label{lemma:inj_des_finis}
	Let $P = \sum_{i=1}^{m} A_i X^i$ be a polynomial over partial systems without constant term such that there exists $i$ with $A_i$ containing a cycle.
	Let $X,Y$  be partial systems.
	If $P(X) = P(Y)$ then $\nilPart{X} = \nilPart{Y}$.
\end{Lemma}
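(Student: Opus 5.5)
We pass to unrolls with Lemma~\ref{lemma:unroll_distributivity_over_product_partial}. The identity $P(X) = P(Y)$ becomes
\[
\sum_{i=1}^{m} \unrollP{A_i} \treeProd \unrollP{X}^i = \sum_{i=1}^{m} \unrollP{A_i} \treeProd \unrollP{Y}^i .
\]
Since $\unrollP{\cdot}$ is additive, $\unrollP{X} = \unroll{\fullPart{X}} + \Op{R}(X)$, where $\Op{R}(X)$ is the finite forest made of $\nilPart{X}$ together with all trees of $X$ rooted at vertices of depth at least $1$. Note that $\Op{R}(X) = \unrollP{\nilPart{X}} + \sum_{i\ge1}\Di{i}{\fullPart{X}}$, where the second sum collects the transient trees of $\fullPart{X}$.

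The plan is not to compare $\fullPart{X}$ and $\fullPart{Y}$ directly, since $P$ need not be injective on FDDS. Instead we isolate the finite trees on both sides and run the peeling argument of Lemma~\ref{lemma:Partial_withg_full_cancelable_over_nil}, now for polynomials.

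First step: locate the finite trees. A tree product $\tree{t}_1 \treeProd \tree{t}_2$ is finite exactly when one factor is finite, and its depth is the smaller of the two depths. The infinite trees of $\unrollP{P(X)}$ therefore come from $\fullPart{P(X)} = \sum_i \fullPart{A_i}\fullPart{X}^i$. The finite trees come from products in which at least one factor is a finite tree of $\unrollP{A_i}$ or of $\unrollP{X}$.

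Second step: compare maximal finite depths. The coefficient $A_{i_0}$ contains a cycle, so $\unrollP{A_{i_0}}$ has an infinite tree $\tree{a}$. Hence for every finite tree $\tree{x}$ of $\unrollP{X}$, the product $\tree{a} \treeProd \tree{x}^{i_0}$, with the remaining factors chosen infinite whenever $\fullPart{X}\neq\emptyset$, has depth exactly $\depth{\tree{x}}$. I would use this to argue by induction from the top depth downwards that the finite parts $\Op{R}(X)$ and $\Op{R}(Y)$ coincide level by level.

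At the maximum finite depth $D$ of $\unrollP{P(X)}$, two kinds of trees can appear. Some are products containing a finite tree of depth $D$ from some $\unrollP{A_i}$; these depend only on the total parts and on the coefficients. Others are products $\tree{a}\treeProd\tree{x}\treeProd(\text{infinite factors})$ with $\tree{x}$ finite of depth $D$. To separate the two contributions and cancel, I would use the total order compatible with $\treeProd$ together with cancellability of unroll trees, exactly as in the proof of Lemma~\ref{lemma:Partial_withg_full_cancelable_over_nil}. Take the smallest tree of maximum finite depth and factor it through the smallest infinite tree of $\unrollP{A_{i_0}}$. Lemma~\ref{lemma:casi_annulable} then forces the matching trees $\tree{x}$ and $\tree{y}$ to be equal. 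Remove these contributions and recurse.

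The main obstacle is that $\fullPart{X}$ and $\fullPart{Y}$ may differ, since $P$ may be non-injective on FDDS. Then both their infinite unrolls and their transient trees $\Di{i}{\fullPart{X}}$, $i\ge1$, feed into the finite layer. To handle this I would first prove that $\fullPart{P(X)}=\fullPart{P(Y)}$ forces $\unroll{\fullPart{X}} = \unroll{\fullPart{Y}}$. This follows from Theorem~\ref{th:injPolyUnrolls} applied to the total parts: $\unroll{\cdot}$ is a morphism and $\unroll{\fullPart{A_{i_0}}}\neq\emptyset$. Equal unrolls give equal depth profiles of the transient trees, so $\sum_{i\ge1}\Di{i}{\fullPart{X}} = \sum_{i\ge1}\Di{i}{\fullPart{Y}}$. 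The contributions coming from the total parts then agree on both sides and can be subtracted.

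After subtraction the peeling yields $\Op{R}(X)=\Op{R}(Y)$, and therefore $\unrollP{\nilPart{X}}=\unrollP{\nilPart{Y}}$. Proposition~\ref{proposition:unroll_is_isomorphism} then gives $\nilPart{X}=\nilPart{Y}$. The delicate point is justifying that the subtraction is legitimate: the products mixing finite parts of $X$ with the infinite part $\unroll{\fullPart{X}}$ must match the corresponding ones for $Y$. This holds because the infinite parts are equal, so the whole identity becomes a polynomial identity in the finite unknown $\Op{R}(X)$ over a fixed infinite coefficient. Such an identity can be handled by Proposition~\ref{prop:injDesFinis} after cutting at a large enough depth.
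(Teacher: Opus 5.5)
Your argument is correct and uses the same ingredients as the paper: Lemma~\ref{lemma:unroll_distributivity_over_product_partial}, Theorem~\ref{th:injPolyUnrolls} on the total parts, the fact that $\unroll{\fullPart{X}}$ determines the transient trees, Proposition~\ref{prop:injDesFinis} and Proposition~\ref{proposition:unroll_is_isomorphism}. The difference is that you apply them in the opposite order.

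The paper first invokes Proposition~\ref{prop:injDesFinis} on the whole unrolled identity to get $\unrollP{X}=\unrollP{Y}$. It then reads off the infinite trees and uses Theorem~\ref{th:injPolyUnrolls} to get $\unroll{\fullPart{X}}=\unroll{\fullPart{Y}}$, hence $\unrollP{\fullPart{X}}=\unrollP{\fullPart{Y}}$, and subtracts.

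You fix the infinite part first. The identity then becomes a polynomial in the finite unknown $\Op{R}(X)$ with fixed coefficients $\forest{K}_j=\sum_{i\ge j}\binom{i}{j}\unrollP{A_i}\treeProd\unroll{\fullPart{X}}^{i-j}$. Since $\forest{K}_{i_0}$ contains $\unrollP{A_{i_0}}$, it contains an infinite tree. Cut everything at any $n\ge\max(\depth{\Op{R}(X)},\depth{\Op{R}(Y)})$: the cut commutes with $+$ and $\treeProd$, and leaves $\Op{R}(X)^j$ unchanged for $j\ge1$. This puts you exactly in the hypotheses of Proposition~\ref{prop:injDesFinis}.

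Your ordering is more careful than the paper's. Proposition~\ref{prop:injDesFinis} is stated for unknowns that are forests of finite trees. The paper applies it directly to $\unrollP{X}$, which contains infinite trees as soon as $\fullPart{X}\neq\emptyset$, so it relies on an unstated extension; its route is shorter. Since this step carries the whole proof, write out the binomial expansion and the choice of $n$ explicitly rather than leaving them as a closing remark.

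You should drop the peeling paragraph. It is not needed once you invoke Proposition~\ref{prop:injDesFinis}, and as written it is not sound, for two reasons.
\begin{enumerate}
\item At the maximal finite depth $D$, a product containing a depth-$D$ finite tree of some $\unrollP{A_i}$ may also contain depth-$D$ finite trees of $\Op{R}(X)$. So those products do not depend only on the total parts.
\item The smallest tree of depth $D$ need not factor through an infinite tree of $\unrollP{A_{i_0}}$. With several monomials and cross terms, the single-product argument of Lemma~\ref{lemma:Partial_withg_full_cancelable_over_nil} does not transfer without substantial extra work.
\end{enumerate}
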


\begin{proof}
	Suppose that $P(X) = P(Y)$.
	Then, by Lemma~\ref{lemma:unroll_distributivity_over_product_partial}
	\[
	\sum_{i=1}^{m} \unrollP{A_i} \treeProd \unrollP{X}^i = \sum_{i=1}^{m} \unrollP{A_i} \treeProd \unrollP{Y}^i.
	\]
	Thus, by Proposition~\ref{prop:injDesFinis}, it follows that $\unrollP{X} = \unrollP{Y}$.
	Indeed, since one of the $\unrollP{A_i}$ contains a tree of infinite depth, Proposition~\ref{prop:injDesFinis} implies that $\sum_{i=1}^{m} \unrollP{A_i} \treeProd \unrollP{X}^i$ is injective over $(\mathbb{F},+,\treeProd)$.
	From there, by Proposition~\ref{proposition:unroll_is_isomorphism}, in order to show that $\nilPart{X} = \nilPart{Y}$, it suffices to prove that $\unrollP{\nilPart{X}} = \unrollP{\nilPart{Y}}$.
	Or equivalently, that $\unrollP{\fullPart{X}} = \unrollP{\fullPart{Y}}$.
	Remark that if $\unroll{\fullPart{X}} = \unroll{\fullPart{Y}}$ then $\unrollP{\fullPart{X}} = \unrollP{\fullPart{Y}}$.
	Indeed, we have that $\unrollP{\fullPart{A}}$ is the sum of $\unroll{\fullPart{A}}$ with the partial-system unroll of each finite tree rooted in the predecessor vertices of the root of a tree in $\unroll{\fullPart{A}}$.
	Thus, in order to prove our claim, it suffices to show that $\unroll{\fullPart{X}} = \unroll{\fullPart{Y}}$.

	For this, remark that by hypothesis $\sum_{i=1}^{m} \unrollP{A_i} \treeProd \unrollP{X}^i$ and $\sum_{i=1}^{m} \unrollP{A_i} \treeProd \unrollP{Y}^i$ have the same infinite trees.
	Since, by definition of the product, these are derived from the product of infinite trees of the $\unrollP{A_i}$ and $\unrollP{X}^i$ or $\unrollP{Y}^i$, and since every infinite tree of $\unrollP{A}$ comes from $\unroll{\fullPart{A}}$, we deduce that
	\[
	\sum_{i=1}^{m} \unroll{\fullPart{A_i}} \treeProd \unroll{\fullPart{X}}^i = \sum_{i=1}^{m} \unroll{\fullPart{A_i}} \treeProd \unroll{\fullPart{Y}}^i.
	\]
	Thus, by Theorem~\ref{th:injPolyUnrolls}, we conclude that $\unroll{\fullPart{X}} = \unroll{\fullPart{Y}}$.
\end{proof}

In order to show that we can find the finite forest part of a solution to the equation $P(X) = B$, we prove that we can efficiently solve the equation $\sum_{i=1}^{m} \unrollP{A_i} \treeProd \unrollP{X}^i = \unrollP{B}$.
For this purpose, we would like to use the algorithm introduced in~\cite{poly_inj,Antonio2026} for solving equations over finite forests in polynomial time.
But for this, we need to extend Theorem 47 of~\cite{Antonio2026} to the case of partial-system unrolls.

\begin{prop}\label{prop:polyUnrollP2PolyForestFini}
	Let $P = \sum_{i=1}^{m} A_i X^i$ be a polynomial over partial-system, let $B$ be a partial-system and $\alpha$ the number of trees in $\unroll{B}$.
	Let $n \ge 2 \cdot \alpha^2 + \depth{B} + 1$ be an integer.
	Then, there exists a finite forest $\forest{y}$ of depth $n$ such that $\sum_{i=0}^{m}\cut{\unrollP{A_i}}{n} \forest{y}^i = \cut{\unrollP{B}}{n}$ if and only if there exists an partial-system $X$ such that $\unrollP{P(X)} = \unrollP{B}$.
	Furthermore, if such an $X$ exists, then $\cut{\unrollP{X}}{n} = \forest{y}$.
\end{prop}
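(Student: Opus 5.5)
The plan is to reduce to Theorem~47 of~\cite{Antonio2026}, which handles the FDDS case: with $Q = \sum_i \unroll{A_i} \treeProd \forest{X}^i$ and $n \ge 2\alpha^2 + \depth{B} + 1$, a cut solution $\forest{y}$ of $\sum_i \cut{\unroll{A_i}}{n}\forest{y}^i = \cut{\unroll{B}}{n}$ exists iff $\unroll{B}$ is in the image of $Q$ over unrolls, and in that case $\forest{y} = \cut{\unroll{X}}{n}$. The partial-system unroll splits as $\unrollP{C} = \unroll{\fullPart{C}} + \forest{R}(C)$, where $\forest{R}(C)$ is a finite forest of depth at most $\depth{C} \le n$. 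Here $\forest{R}(C)$ collects $\nilPart{C}$, its trees rooted at depth $\ge 1$, and the finite trees hanging off transient preimages of cyclic nodes. So the only infinite part is the classical unroll, and cutting at depth $n$ leaves the finite part untouched. I would also use that $\treeProd$ is levelwise, so $\cut{\forest{F}\treeProd\forest{G}}{n} = \cut{\forest{F}}{n}\treeProd\cut{\forest{G}}{n}$. Hence cutting commutes with evaluating polynomials.

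$(\Leftarrow)$ Suppose $X$ satisfies $\unrollP{P(X)} = \unrollP{B}$. By Lemma~\ref{lemma:unroll_distributivity_over_product_partial},
\[
\sum_{i} \unrollP{A_i} \treeProd \unrollP{X}^i = \unrollP{B}.
\]
Cutting at $n$ and using the commutation above, $\forest{y} = \cut{\unrollP{X}}{n}$ is a solution. Its depth is $n$ as soon as $\fullPart{X} \neq \emptyset$, which holds when $\fullPart{B} \ne \emptyset$. The degenerate case where everything is finite has depth at most $\depth{B} < n$ and needs a small separate remark.

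$(\Rightarrow)$ and the uniqueness claim take most of the work. Given $\forest{y}$, I separate infinite from finite information by depth. Every tree of $\cut{\unrollP{B}}{n}$ reaching depth exactly $n$ comes from $\unroll{\fullPart{B}}$, because $\depth{B} < n$. In a product, a tree reaches depth $n$ only if all its factors do. So the depth-$n$ trees of the equation form the subequation
\[
\sum_i \cut{\unroll{\fullPart{A_i}}}{n}\,\forest{y}_{\ge n}^i = \cut{\unroll{\fullPart{B}}}{n},
\]
where $\forest{y}_{\ge n}$ denotes the trees of $\forest{y}$ of depth $n$. Theorem~47 of~\cite{Antonio2026} then gives an FDDS $X_0$ with $\sum_i \unroll{A_i}\unroll{X_0}^i = \unroll{B}$ and $\cut{\unroll{X_0}}{n} = \forest{y}_{\ge n}$. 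The bound on $n$ is exactly what this theorem needs, since $\alpha$ counts the trees of $\unroll{B}$.

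For the finite part, the remaining trees $\forest{y}_{<n}$ must equal $\unrollP{\forest{F}}$ plus the transient forest $\forest{R}(X_0)$ for some finite forest $\forest{F}$. I would recover $\forest{F}$ with Algorithm~\ref{algo:inverse_unroll}, which inverts the isomorphism of Proposition~\ref{proposition:unroll_is_isomorphism}. Setting $X = X_0 + \forest{F}$, the identity $\unrollP{P(X)} = \unrollP{B}$ should follow by injectivity, using Lemma~\ref{lemma:inj_des_finis} and Proposition~\ref{prop:injDesFinis}. Uniqueness ($\cut{\unrollP{X}}{n} = \forest{y}$) comes from the injectivity of Proposition~\ref{prop:injDesFinis} over forests of depth at most $n$, together with $\depth{\cut{\unrollP{A_j}}{n}} = n$ for some $j$ with $\fullPart{A_j}\neq\emptyset$.

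The main obstacle is showing that the shallow part $\forest{y}_{<n}$ really is an unroll compatible with $X_0$. The finite trees hanging in $\unrollP{X}$ are not arbitrary; they must match the transients of the chosen $X_0$. My first attempt is to subtract $\forest{R}(X_0)$, as forced by Theorem~47, and then apply InverseUnroll. I would argue that if this subtraction fails, then the injectivity of Proposition~\ref{prop:injDesFinis} forces $\forest{y}$ itself to be inconsistent, which contradicts the assumption that $\forest{y}$ solves the equation.
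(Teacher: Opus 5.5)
Your $(\Leftarrow)$ direction is sound, and cutting directly (cuts commute with $+$ and $\treeProd$) is more direct than the paper's detour through Theorem~47. The degenerate case you set aside is in fact a failure of the statement as written. For example, with $P(X)=\cycle{1}X$ and $B=X=\varrho$, no $\forest{y}$ of depth exactly $n$ exists, so one must read ``depth at most $n$''. Your uniqueness argument via Proposition~\ref{prop:injDesFinis} is correct when some $\fullPart{A_j}\neq\emptyset$. Isolating the depth-$n$ subequation before invoking Theorem~47 is exactly what the paper does.

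The gap is the step you flag yourself. You need $\forest{y}_{<n}$ minus $\forest{R}(X_0)$ to equal $\unrollP{\forest{F}}$ for some finite forest $\forest{F}$, and your fallback cannot deliver this. Proposition~\ref{prop:injDesFinis} only says the cut equation has at most one solution of depth at most $n$. It says nothing about whether that solution is the cut of a genuine partial-system unroll. So when the subtraction or \texttt{InverseUnroll} fails, there is no contradiction to derive.

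In fact the $(\Rightarrow)$ direction is false, so no argument can close this step. Write $\pi_k$ for the path with $k$ arcs (as a tree, or as the partial system whose last vertex has no image) and $\pi_\infty$ for the infinite path. Take $P(X)=(\cycle{1}+\varrho)X$ and $B=\cycle{1}+\pi_1+\varrho$. Then $\alpha=1$ and $\depth{B}=1$, so any $n\ge 4$ is allowed, with $\cut{\unrollP{A_1}}{n}=\pi_n+\pi_0$ and $\cut{\unrollP{B}}{n}=\pi_n+\pi_1+2\pi_0$. Since $\pi_0\treeProd\tree{t}=\pi_0$ for every tree $\tree{t}$, the forest $\forest{y}=\pi_n+\pi_1$ has depth $n$ and satisfies $(\pi_n+\pi_0)\treeProd(\pi_n+\pi_1)=\pi_n+\pi_1+2\pi_0$.

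However, $\unrollP{P(X)}=\unrollP{X}+|X|\pi_0$. So $\unrollP{P(X)}=\unrollP{B}$ forces $|X|=2$ and $\unrollP{X}=\pi_\infty+\pi_1$. This is impossible: the vertex whose tree is $\pi_1$ needs a preimage other than itself, and the only other vertex is a fixed point. In your notation, $X_0=\cycle{1}$, $\forest{R}(X_0)=\emptyset$ and $\forest{y}_{<n}=\pi_1$, which is not an unroll because every nonempty $\unrollP{\forest{F}}$ contains $\pi_0$.

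The paper's proof glosses over precisely this point with the unjustified ``hence $\cut{\unrollP{X}}{n}=\forest{y}$''. What is actually provable is $(\Leftarrow)$ together with uniqueness. That is all Algorithm~\ref{algo:find_nilpart} relies on, since it checks the result afterwards with \texttt{InverseUnroll}, which returns $\perp$ on $\pi_1$ here.
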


\begin{proof}
    $(\Rightarrow)$ Suppose that there exists a finite forest $\forest{y}$ of depth $n$, such that $\sum_{i=0}^{m}\cut{\unrollP{A_i}}{n} \forest{y}^i = \cut{\unrollP{B}}{n}$.
	Let $\forest{Z}$ be the submultiset of $\forest{y}$ containing only the trees of depth $n$.
	Then, we have $\sum_{i=0}^{m}\cut{\unroll{A_i}}{n} \forest{z}^i = \cut{\unroll{B}}{n}$.
	Indeed, no finite tree of the $\unrollP{A_i}$ and $\unrollP{B}$ has depth $n$.
	Thus, by Theorem 47 of~\cite{Antonio2026}, it follows that there exists an FDDS $X$ such that $\cutUn{X} = \forest{z}$ and hence $\cut{\unrollP{X}}{n} = \forest{y}$.

	$(\Leftarrow)$ Suppose that there exists an FDDS $X$ such that $\unrollP{P(X)} = \unrollP{B}$.
	Then, in particular $\unroll{P(X)} = \unroll{B}$ since the set of infinite trees of $\unrollP{P(X)}$ and of $\unrollP{B}$ is contained in $\unroll{P(X)}$ and $\unroll{B}$ respectively.

	The direction follows by Theorem 47 of~\cite{Antonio2026}.
\end{proof}

Building on Proposition~\ref{prop:polyUnrollP2PolyForestFini}, we derive Algorithm~\ref{algo:find_nilpart}, able of finding the finite forest part $\nilPart{X}$ of a solution $X$ to $P(X) = B$, if a solution exists.
For this, it suffices to compute each $\cut{\unrollP{A_i}}{n}$ and $\cut{\unrollP{B}}{n}$ with $n$ greater than or equal to $2 \cdot |B|^2 + \max(\depth{\nilPart{B}},\depth{\fullPart{B}})$.
Then, we solve the resulting equation over finite trees.
Let $\forest{X}$ denote the forest thus obtained.
Since the unroll and the cut operations are morphisms, if there exists a solution to $P(X) = B$, then we know that $\forest{X}$ is a well-defined forest.

After that, we compute each $\cut{\unrollP{\fullPart{A_i}}}{n}$ and $\cut{\unrollP{\fullPart{B}}}{n}$ and solve the associated equation.
Let $\forest{T}$ denote the forest thus obtained.
Observe that if there exists $X$ such that $P(X) = B$ then $\sum_{i=1}^{m} \fullPart{A_i} \fullPart{X}^i = \fullPart{B}$.
Consequently, we have $\sum_{i=1}^{m} \unrollP{\fullPart{A_i}} \unrollP{\fullPart{X}}^i = \unrollP{\fullPart{B}}$.
And by Proposition~\ref{prop:polyUnrollP2PolyForestFini}, this equation has a solution if and only if the equation $\sum_{i=1}^{m} \cut{\unrollP{\fullPart{A_i}}}{n} \cut{\unrollP{\fullPart{X}}}{n}^i = \cut{\unrollP{\fullPart{B}}}{n}$ also has one.
This means that $\forest{T}$ is well-defined.

Now, it suffices to remove $\forest{T}$ from $\forest{X}$ to obtain $\forest{F}$.
Remark that if there indeed exists a solution $X$ to $P(X) = B$, then Proposition~\ref{prop:injDesFinis} implies $\cut{\unrollP{X}}{n} = \forest{X}$.
But also that $\cut{\unrollP{\fullPart{X}}}{n} = \forest{T}$, since $n$ is large enough that all finite trees of $\unrollP{X}$ have depth strictly less than $n$.
Therefore:
\[
\cut{\unrollP{\nilPart{X}}}{n} = \unrollP{\nilPart{X}} = \forest{X} - \forest{T} = \forest{F}.
\]

Finally, it suffices to verify that $\forest{F}$ is indeed the partial-system unroll of a finite forest and to reconstruct said forest using Algorithm~\ref{algo:inverse_unroll}.
The correctness follows from Proposition~\ref{proposition:unroll_is_isomorphism}.

This procedure is formalized in Algorithm~\ref{algo:find_nilpart}.

\begin{algorithm}[t]
	\caption{\texttt{FindNilPart}$(P, B)$}\label{algo:find_nilpart}
	\begin{algorithmic}[1]
		\Require a polynomial $P = \sum_{i=1}^{m} A_i X^i$ over partial systems (with some $A_i$ containing a cycle), a partial system $B$
		\Ensure a finite forest $\nilPart{X}$ where $X$ is a solution to $P(X) = B$, or $\perp$
		\State let $n \gets 2 \cdot |B|^2 + \max(\depth{\nilPart{B}},\depth{\fullPart{B}})$,
		\State \textbf{// Solve over all partial-system unrolls}
		\For{$i = 1, \ldots, m$}
		\State compute $\cut{\unrollP{A_i}}{n}$,
		\EndFor
		\State compute $\cut{\unrollP{B}}{n}$,
		\State solve $\sum_{i=1}^{m} \cut{\unrollP{A_i}}{n} \cdot \forest{Y}^i = \cut{\unrollP{B}}{n}$ using algorithm~ of Section~5 of~\cite{Antonio2026},
		\If{no solution exists}
		\State \Return $\perp$,
		\EndIf
		\State let $\forest{X}$ be the solution,
		\State \textbf{// Solve over full-part unrolls only}
		\For{$i = 1, \ldots, m$}
		\State compute $\cut{\unrollP{\fullPart{A_i}}}{n}$,
		\EndFor
		\State compute $\cut{\unrollP{\fullPart{B}}}{n}$,
		\State solve $\sum_{i=1}^{m} \cut{\unrollP{\fullPart{A_i}}}{n} \cdot \forest{Y}^i = \cut{\unrollP{\fullPart{B}}}{n}$ using algorithm~ of Section~5 of~\cite{Antonio2026},
		\If{no solution exists}
		\State \Return $\perp$,
		\EndIf
		\State let $\forest{T}$ be the solution,
		\State \textbf{// Extract nilpotent part}
		\State $\forest{F} \gets \forest{X} - \forest{T}$,
		\State \Return \texttt{InverseUnroll}$(\forest{F})$.
	\end{algorithmic}
\end{algorithm}

Since all the operations described above can be performed in polynomial time, as the chosen $n$ is polynomial in the size of $B$, we conclude the following result.

\begin{prop}\label{prop:sol_nil}
	Let $P = \sum_{i=1}^{m} A_i X^i$ be a polynomial over partial systems without constant term such that there exists $i$ with $A_i$ containing a cycle.
	If there exists $X$ such that $P(X) = B$, then we can construct $\nilPart{X}$ in polynomial time.
\end{prop}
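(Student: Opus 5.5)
The plan is to show that Algorithm~\ref{algo:find_nilpart} is correct and runs in polynomial time; the argument mostly assembles the preceding discussion. Fix a solution $X$ of $P(X)=B$ and set $n = 2|B|^2 + \max(\depth{\nilPart{B}},\depth{\fullPart{B}})$. Since the number $\alpha$ of trees in $\unroll{B}$ is at most $|B|$, this $n$ satisfies the hypothesis of Proposition~\ref{prop:polyUnrollP2PolyForestFini}, up to the harmless $+1$, which we add if needed.

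\textbf{Step 1 (the first equation).} By Lemma~\ref{lemma:unroll_distributivity_over_product_partial} we have $\unrollP{P(X)} = \unrollP{B}$. Cutting at depth $n$ is compatible with $+$ and $\treeProd$, so the cut equation $\sum_i \cut{\unrollP{A_i}}{n}\forest{Y}^i = \cut{\unrollP{B}}{n}$ has the solution $\cut{\unrollP{X}}{n}$. Some $A_i$ contains a cycle, so some $\cut{\unrollP{A_i}}{n}$ has depth $n$, which is the maximum depth present. Proposition~\ref{prop:injDesFinis} then applies over forests of depth at most $n$, and the solution is unique. Hence the polynomial-time solver of~\cite{Antonio2026} returns exactly $\forest{X} = \cut{\unrollP{X}}{n}$.

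\textbf{Step 2 (the second equation).} Since $\fullPart{\cdot}$ is a homomorphism, $\sum_i \fullPart{A_i}\fullPart{X}^i = \fullPart{B}$. The same argument as in Step~1 shows that the second solve returns $\forest{T} = \cut{\unrollP{\fullPart{X}}}{n}$. There is one caveat: if $\fullPart{A_i}$ is empty for an index with $A_i\neq\emptyset$, injectivity still comes from an index whose $A_i$ has a cycle, so the argument goes through unchanged.

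\textbf{Step 3 (extracting the finite part).} The key depth bound is that every finite tree in $\unrollP{X}$ has depth strictly less than $n$, so cutting leaves it intact. This should follow from $|X|\le|B|$ and $\depth{X}\le \depth{B}$, whenever $X$ is nonempty and the coefficient with a cycle gives vertices of $X$ a copy in $B$ at the same or greater depth. Granting it, $\unrollP{X} = \unrollP{\nilPart{X}} + \unrollP{\fullPart{X}}$ and the cuts commute with the sum. Therefore
\[
\forest{X} - \forest{T} = \cut{\unrollP{\nilPart{X}}}{n} = \unrollP{\nilPart{X}},
\]
and Algorithm~\ref{algo:inverse_unroll} recovers $\nilPart{X}$ from it, by Proposition~\ref{proposition:unroll_is_isomorphism}.

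\textbf{Complexity.} Because $n=O(|B|^2)$, every cut forest has polynomial size. Computing $\cut{\unrollP{A_i}}{n}$ means summing, for each depth up to $n$, the subtrees rooted there, together with the $n$-cut of the unroll trees, which is polynomial in $|A_i|$ and $n$. The two calls to the forest solver are polynomial, the subtraction is trivial, and \texttt{InverseUnroll} is polynomial.

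The main obstacle I expect is the depth bound in Step~3: showing rigorously that $\depth{\nilPart{X}}$ and the transient depths of $\fullPart{X}$ are below $n$. Without it, truncated finite trees would be misattributed. I would prove it by taking a cycle vertex $c$ of some $A_i$ and noting that, for a vertex $x$ of depth $d$ in $X$, the pair $(c,x,\dots,x)$ in $A_iX^i$ has depth at least $d$ (in a forest component, height is preserved). Hence $d\le \max(\depth{\nilPart{B}},\depth{\fullPart{B}}) < n$.
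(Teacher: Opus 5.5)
Your proposal is correct and follows essentially the same route as the paper. Both arguments establish the correctness of Algorithm~\ref{algo:find_nilpart}: the two solves return the unique forests $\cut{\unrollP{X}}{n}$ and $\cut{\unrollP{\fullPart{X}}}{n}$ of depth at most $n$ by Proposition~\ref{prop:injDesFinis}, and \texttt{InverseUnroll} then recovers $\nilPart{X}$ from their difference.

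Your diagonal-vertex argument with $(c,x,\dots,x)$ supplies the depth bound $\depth{\nilPart{X}} < n$, which the paper only asserts. Note that this bound is needed only for $\cut{\unrollP{\nilPart{X}}}{n} = \unrollP{\nilPart{X}}$. The identity for the difference itself already follows from the additivity of the unroll and of the cut.
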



We can now conclude that, if we have a polynomial-time algorithm to solve equations over total systems, then we have a polynomial-time algorithm to solve equations over partial systems.

\begin{Theorem}
	Let $P = \sum_{i=1}^{m} A_i X^i$ be a polynomial over partial systems without constant term such that there exists $i$ with $A_i$ containing a cycle.
	If we can efficiently solve the equation $\sum_{i=1}^{m} \fullPart{A_i} \fullPart{X}^i = \fullPart{B}$, then we can solve $P(X) = B$ in polynomial time.
    
\end{Theorem}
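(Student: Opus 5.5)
The plan is to decompose any solution $X$ as $\fullPart{X} + \nilPart{X}$ and recover the two parts separately. The total part comes from the assumed algorithm; the finite-forest part comes from Proposition~\ref{prop:sol_nil}. We then verify the candidate by a direct computation. Any constant term has already been removed as explained at the start of this section. The key observation is that $\fullPart{\cdot}$ is a semiring homomorphism from $(\Partial,+,\times)$ to $(\FDDS,+,\times)$. Hence, if $P(X)=B$, then $\sum_{i=1}^{m}\fullPart{A_i}\fullPart{X}^i = \fullPart{B}$. So the total part of any solution is a solution of the FDDS equation, and by hypothesis such a solution can be computed efficiently.

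The steps, in order, are as follows. First, run the assumed algorithm on $\sum_{i}\fullPart{A_i}Z^i=\fullPart{B}$. If it reports no solution, return $\perp$, which is correct by the homomorphism argument. Otherwise let $Z$ be the returned FDDS. Second, run Algorithm~\ref{algo:find_nilpart} on $(P,B)$. If it returns $\perp$, no solution exists. Otherwise let $\forest{F}$ be its output; by Proposition~\ref{prop:sol_nil} this is done in polynomial time. Third, set $X=Z+\forest{F}$. Compute $P(X)$ and compare it with $B$ up to isomorphism. If they agree, output $X$; otherwise output $\perp$.

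The main obstacle is justifying that this final check is complete: if some solution $X^\ast$ exists, then $Z+\forest{F}$ must be one. Two facts are needed.
\begin{enumerate}
\item $\forest{F}=\nilPart{X^\ast}$. This holds because Lemma~\ref{lemma:inj_des_finis} shows that all solutions share the same finite-forest part, and Proposition~\ref{prop:sol_nil} recovers it.
\item $Z=\fullPart{X^\ast}$. If some $A_i$ with $i>0$ has a total part containing a dendron, Theorem~\ref{th:cara_poly_inj} makes the total equation injective, so $Z$ is forced. In general, though, the total equation may have several solutions. I would then argue from Lemma~\ref{lemma:inj_des_finis}'s proof: $\unrollP{X}$ is uniquely determined by Proposition~\ref{prop:injDesFinis}, since some $\unrollP{A_i}$ contains an infinite tree. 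Moreover, $P(Z+\forest{F})$ depends on $Z$ only through products of $Z$ with powers of $Z$ and $\forest{F}$ and with the coefficients. By Lemma~\ref{lemma:caracterisation_nilpotent_full_product}, the forest components of such products are of the form $\forest{F}\treeProd\unrollP{Z}$ and are therefore determined by $\unrollP{Z}=\unrollP{\fullPart{X^\ast}}$. The total components are $\sum_i\fullPart{A_i}Z^i=\fullPart{B}$.
\end{enumerate}
Hence $P(Z+\forest{F})$ and $P(X^\ast)$ agree both on their total parts and on their finite-forest parts, which gives $P(Z+\forest{F})=B$.

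The running time is polynomial. The assumed algorithm and Algorithm~\ref{algo:find_nilpart} are polynomial, and $|X|\le|B|$ for any solution, so we may abort if $|Z+\forest{F}|>|B|$. Once $|X|\le|B|$ is ensured, computing $P(X)$ and testing its isomorphism with $B$ are polynomial: $B$ is a union of functional-graph components and trees, and isomorphism of such graphs is decidable in polynomial time by canonical encoding of rooted trees and cycles.
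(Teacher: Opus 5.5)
Your proposal is correct and follows the paper's proof. You recover $\nilPart{X}$ with Algorithm~\ref{algo:find_nilpart}, take an arbitrary solution $Z$ of the total equation, and show $P(Z+\forest{F})=B$: by Lemma~\ref{lemma:caracterisation_nilpotent_full_product} the forest components depend on $Z$ only through $\unrollP{Z}$, and the total components sum to $\fullPart{B}$.

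The step worth stating more sharply is $\unrollP{Z}=\unrollP{\fullPart{X^\ast}}$. It comes from unrolling the total equation, which both $Z$ and $\fullPart{X^\ast}$ satisfy, and applying Theorem~\ref{th:injPolyUnrolls}, then passing from $\unroll{\cdot}$ to $\unrollP{\cdot}$ as at the end of the proof of Lemma~\ref{lemma:inj_des_finis}. It does not come from uniqueness of $\unrollP{X}$ among solutions of $P(X)=B$, which cannot be applied to $Z+\forest{F}$ before you know it is a solution.
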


\begin{proof}
	We show that there exists $X$ a solution to $P(X) = B$ if and only if $P(\forest{X} + Y) = B$ with $\forest{X}$ the forest described in the discussion preceding Proposition~\ref{prop:sol_nil} and $Y$ a solution to $\sum_{i=1}^{m} \fullPart{A_i} \fullPart{X}^i = \fullPart{B}$.

	$(\Leftarrow)$ Suppose that $P(\forest{X} + Y) = B$.
	Then, directly, there exists $X$ such that $P(X) = B$.

	$(\Rightarrow)$ Suppose that there exists $Z$ such that $P(X) = B$.
	Then, by Lemma~\ref{lemma:inj_des_finis}, it follows that all solutions of $P(X) = B$ are of the form $\nilPart{Z} + Y$ with $Y$ an FDDS.

	Furthermore, since there exists $i$ such that $\fullPart{A_i} \neq \emptyset$, we also have that $Y$ is a solution to $\sum_{i=1}^{m} \fullPart{A_i} \fullPart{X}^i = \fullPart{B}$.
	We now show that all solutions $Y$ of $\sum_{i=1}^{m} \fullPart{A_i} \fullPart{X}^i = \fullPart{B}$ are such that $P(\nilPart{Z} + Y) = B$.

    Due to the choice of $Y$, we have $\sum_{i=1}^{m} \fullPart{A_i} \fullPart{Z}^i = \sum_{i=1}^{m} \fullPart{A_i} Y^i$. 
    Therefore, $\sum_{i=1}^{m} \unroll{\fullPart{A_i}} \unroll{\fullPart{Z}}^i = \sum_{i=1}^{m} \unroll{\fullPart{A_i}} \unroll{Y}^i$. 
    Consequently, Theorem~\ref{th:injPolyUnrolls} implies that $\unroll{\fullPart{Z}} = \unroll{Y}$ and thus $\unrollP{\fullPart{Z}} = \unrollP{Y}$.

	Let $i>0$ be an integer.
	We show that $(\nilPart{Z} + Y)^i - Y^i = X^i - \fullPart{Z}^i$.
	Equivalently, we prove that
	\[
	\sum_{j=0}^{i-1} \binom{i}{j} \nilPart{Z}^j \times Y^{i-j} = \sum_{j=0}^{i-1} \binom{i}{j}\nilPart{Z}^j \times \fullPart{Z}^{i-j}.
	\]
	For this, observe that from Lemma~\ref{lemma:caracterisation_nilpotent_full_product}, it follows that $\nilPart{Z}^j \times Y^{i-j} = \nilPart{Z}^j \treeProd \unrollP{Y}^{i-j}$.
	Similarly, $\nilPart{Z}^j \times \fullPart{Z}^{i-j} = \nilPart{Z}^j \treeProd \unrollP{\fullPart{Z}}^{i-j}$.
    we have
	\[
	\nilPart{Z}^j \times Y^{i-j} = \nilPart{Z}^j \times \fullPart{Z}^{i-j}.
	\]
	Thus, the claim follows by summation.

	At this point, we deduce that $\nilPart{P(Z)} = \nilPart{P(\nilPart{Z} + Y)}$ and $\fullPart{P(Z)} = \fullPart{P(\nilPart{Z} + Y)}$.
	Therefore, $P(Z) = P(\nilPart{Z} + Y)$.
\end{proof}

	\subsection{Solving $P(X) = B$ when all coefficients are finite forests} \label{section:division_in_Nilpotent}

Let $P = \sum_{i=1}^{m} \forest{A}_i \times X^i$ be a polynomial where all the $\forest{A}_i$ coefficients are finite forests and let $B$ be a partial system.
In this subsection, we show that we can solve the equation $P(X) = B$ in polynomial time with respect to the sum of the sizes of the $\forest{A}_i$ and $B$.
Remark that if there exists a partial system $X$ such that $P(X) = B$, then $\fullPart{B} = \emptyset$.
This is why, in the rest of this section, we further assume that $B$ is a finite forest, which we will therefore denote $\forest{B}$.

Since all coefficients of $P$ and $\forest{B}$ are finite forests, we would like to use algorithm~ of Section~5 of~\cite{Antonio2026} which solves polynomial equations over finite forests in polynomial time.

However, since this algorithm is based on polynomials whose product operation is $\treeProd$ and not $\times$, we need to modify the polynomial $P$.
Two approaches are possible.
Either we modify $P$ via the expressions of Lemmas~\ref{lemma:caracterisation_nilpotent_product} and~\ref{lemma:caracterisation_nilpotent_full_product}, or we apply the partial-system unroll to $P$.
We choose here to employ the second method.
To argue the correctness of this approach, we present the following result.

\begin{Lemma}
	Let $P = \sum_{i=1}^{m} \forest{A}_i \times X^i$ be a polynomial over finite forests.
	Consider $Q = \sum_{i=1}^{m} \unrollP{\forest{A}_i} \treeProd \forest{X}^i$ the polynomial over partial-system unrolls associated with $P$.
	Let $Y$ be a partial system and $\forest{B}$ a finite forest.
	Then, $P(Y) = \forest{B}$ if and only if $Q(\unrollP{Y}) = \unrollP{\forest{B}}$.
\end{Lemma}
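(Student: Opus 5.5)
The plan is to prove both directions through Lemma~\ref{lemma:unroll_distributivity_over_product_partial} (the partial-system unroll is a semiring morphism) together with the injectivity of $\unrollP{\cdot}$.

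\emph{Forward direction.} Suppose $P(Y) = \forest{B}$. Applying $\unrollP{\cdot}$ to both sides and using additivity and multiplicativity gives
\[
\unrollP{P(Y)} = \sum_{i=1}^{m} \unrollP{\forest{A}_i} \treeProd \unrollP{Y}^i = Q(\unrollP{Y}),
\]
so $Q(\unrollP{Y}) = \unrollP{\forest{B}}$. This direction is immediate.

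\emph{Backward direction.} Suppose $Q(\unrollP{Y}) = \unrollP{\forest{B}}$. The same morphism computation rewrites this as $\unrollP{P(Y)} = \unrollP{\forest{B}}$. It remains to conclude $P(Y) = \forest{B}$, i.e.\ that $\unrollP{\cdot}$ is injective on the relevant elements.

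First I show that $P(Y)$ is a finite forest. Each term $\forest{A}_i \times Y^i$ with $i \ge 1$ lies in $\Nilpotent$, since $\Nilpotent$ is an ideal of $\Partial$. Hence $P(Y) \in \Nilpotent$, and so is $\forest{B}$ by hypothesis. Proposition~\ref{proposition:unroll_is_isomorphism} states that $\unrollP{\cdot}$ is a bijection from $\Nilpotent$ onto $\unrollP{\Nilpotent}$, so $\unrollP{P(Y)} = \unrollP{\forest{B}}$ gives $P(Y) = \forest{B}$.

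\emph{Main obstacle.} The only delicate point is the case where $Y$ has a nonempty total part. Then $\unrollP{Y}$ contains infinite trees, while $\unrollP{\forest{B}}$ has none. This causes no difficulty in the argument above: injectivity is applied to $P(Y)$, not to $Y$, and $P(Y)$ is a finite forest regardless of $Y$ by the ideal property.

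I would also check the edge case where $P(Y)$ is empty, for instance when $Y = \emptyset$. Both unrolls are then the empty forest, and the bijection still applies, since the empty forest belongs to $\Nilpotent$ and $\unrollP{\emptyset} = \emptyset$.
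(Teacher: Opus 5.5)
Your proof is correct and follows the paper's own argument. The forward direction applies the morphism property of $\unrollP{\cdot}$ (Lemma~\ref{lemma:unroll_distributivity_over_product_partial}); the backward direction rewrites $Q(\unrollP{Y})$ as $\unrollP{P(Y)}$, notes that $P(Y)$ is a finite forest by the ideal property, and concludes by injectivity from Proposition~\ref{proposition:unroll_is_isomorphism}, where your direct rewriting is slightly cleaner than the paper's superfluous partition of $\unrollP{\forest{B}}$ into pieces $\unrollP{\forest{B}}_i$.
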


\begin{proof}
	$(\Rightarrow)$ Suppose that $P(Y) = \forest{B}$.
	Then, by Lemma~\ref{lemma:unroll_distributivity_over_product_partial}, it follows that $\sum_{i=1}^{m} \unrollP{\forest{A}_i} \treeProd \unrollP{Y}^i = \unrollP{\forest{B}}$.
	Therefore, $Q(\unrollP{Y}) = \unrollP{\forest{B}}$.

	$(\Leftarrow)$ Suppose that $Q(\unrollP{Y}) = \unrollP{\forest{B}}$.
	Then, we can partition $\unrollP{\forest{B}}$ into $\unrollP{\forest{B}}_1 + \cdots + \unrollP{\forest{B}}_m$ such that $\unrollP{\forest{A}_i} \treeProd \unrollP{Y}^i = \unrollP{\forest{B}}_i$ for all $i$ and $\sum_{i=1}^{m} \unrollP{\forest{B}}_i = \unrollP{\forest{B}}$.
    Since, by Lemma~\ref{lemma:unroll_distributivity_over_product_partial},we have  $\unrollP{\forest{A}_i} \treeProd \unrollP{Y}^i = \unrollP{\forest{A}_i \times Y^i}$ for all $i$, it follows that 
    \[\sum_{i=1}^{m} \unrollP{\forest{A}_i \times Y^i} = \mathfrak{U}\bigg(\sum_{i=1}^{m} \forest{A}_i \times Y^i\bigg)  = \unrollP{\forest{B}}.\] 

    Finally, since $\forest{A}_i$ is a finite forest for all $i$, it is also the case for $\forest{A}_i \times Y^i$ for all $i$.
	Thus Proposition~\ref{proposition:unroll_is_isomorphism}, implies that $\sum_{i=1}^{m} \forest{A}_i \times Y^i = \forest{B}$.
	We conclude that $P(X) = \forest{B}$.
\end{proof}

Thus, using algorithm~ of Section~5 of~\cite{Antonio2026}, we solve $Q(\forest{X}) = \forest{B}$ in polynomial time; let $\forest{F}$ denote the solution found.
Remark that $\forest{F}$ is a forest, cut at depth $\depth{P}$, the maximum depth of the $\depth{\forest{A}_i}$.
Furthermore, $\forest{F}$ is not necessarily the partial-system unroll of a partial system, as shown in Figure~\ref{fig:contre_example_divison_unroll}.

\begin{figure}[t]
	\centering
	\includegraphics[page=3,scale=0.84]{figs}
	\caption{Example of division of a partial-system unroll with a quotient that is not the cut of a partial-system unroll.}
	\label{fig:contre_example_divison_unroll}
\end{figure}

Observe that two cases are possible.
First case: $\depth{B} < \depth{P}$.
Then, Proposition~\ref{prop:injDesFinis} implies that $\forest{F}$ is the unique forest such that $Q(\forest{F}) = \forest{B}$.
Therefore, we solve $P(X) = \forest{B}$ by reconstructing, if possible, the forest $\forest{X}$ such that $\unrollP{\forest{X}} = \forest{F}$.
This can be done in polynomial time with respect to the size of $\forest{F}$, and hence with respect to the size of $\forest{B}$, as explained in the discussion preceding Proposition~\ref{theorem:caracterization_of_counter-example_of_cancelation}.
The second case is $\depth{B} = \depth{P}$. 
The difficulty in this case is that even if there exists a partial-system $X$ such that $P(X) = \forest{B}$, we have $\unrollP{X}$ might not be equal to $\forest{F}$. 
Indeed, $\forest{F} = \cut{\unrollP{X}}{\depth{P}}$  and we need to explain how we reconstruct the partial system $X$.
This is the subject of what follows.

Let $X$ be a partial system and $d\ge 1$ an integer.
Consider $\tree{t}$ a tree of $\unrollP{X}$.
In order to infer properties that would allow us to construct a partial system from a forest $\forest{F}$ of depth $d$, we analyze the elements that constitute $\unrollP{X}$.
Remark that two cases are then possible.

Either $\tree{t}$ is rooted in a vertex of depth $0$ in $\unrollP{X}$, and thus $\tree{t}$ is a tree of $\nilPart{X} + \unroll{\fullPart{X}}$, or this is not the case.
In the latter case, the root of $\tree{t}$ has a successor $u$ in $X$.
Again, two cases are possible.
The vertex $u$ may be a vertex of a cycle of $X$, in which case $\tree{t}$ is a tree of $\dt{\unroll{\fullPart{X}}}$.
Or this is not the case, implying that there exists a finite tree $\tree{t}'$ rooted in $u$ containing all predecessors of $u$ in $X$.
And since, by definition of the partial-system unroll, the tree $\tree{t}'$ belongs to $\unrollP{X}$, it follows that $\tree{t}$ is in $\dt{\unrollP{X}}$.
From these last two cases, we deduce that there exists a tree $\tree{t}'$ of $\unrollP{X}$ such that $\dt{\cut{\tree{t}'}{d}}$ contains $\cut{\tree{t}}{d-1}$.
From this reasoning, we deduce the following lemma.

\begin{Lemma}\label{lemma:closure}
	Let $d\ge1$ be an integer and let $\forest{F}$ be a finite forest of depth at most $d$.
	If there exists a partial system $X$ such that $\cut{\unrollP{X}}{d} = \forest{F}$, then $\dt{\forest{F}}$ is a submultiset of $\cut{\unrollP{X}}{d-1}$.
\end{Lemma}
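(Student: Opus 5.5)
We prove the statement by matching trees directly. Each tree of $\dt{\forest{F}}$ will be identified with a cut tree of $\unrollP{X}$, and we check that the identification is injective, so that multiplicities are respected. Since $\forest{F} = \cut{\unrollP{X}}{d}$, every tree of $\dt{\forest{F}}$ has the form $\cut{\tree{s}}{d-1}$, where $\tree{s}$ is the subtree of some tree $\tree{t}'$ of $\unrollP{X}$ rooted at a predecessor $v$ of the root of $\tree{t}'$. Here we use that the $\mathcal{D}$ operation commutes with cuts, in the form $\dt{\cut{\tree{t}'}{d}} = \cut{\dt{\tree{t}'}}{d-1}$. It therefore suffices to show $\dt{\unrollP{X}} \subseteq \unrollP{X}$ as multisets, and then apply $\cut{\cdot}{d-1}$, which is additive, to both sides.

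To prove this inclusion we work with the non-isomorphic model of $\unrollP{X}$. Its trees are indexed by three kinds of roots: roots of $\nilPart{X}$, cycle vertices (giving the trees of $\unroll{\fullPart{X}}$), and every other vertex $w$ of $X$ lying at depth at least $1$. The tree indexed by $w$ is the finite tree of its predecessors, coming from $\Di{i}{X}$ for $i \ge 1$. Consider a child $v$ of the root $r$ of a tree $\tree{t}'$; the subtree of $\tree{t}'$ at $v$ is isomorphic to the tree of all predecessors of $v$ in $X$. We split into cases.
\begin{enumerate}
\item If $v$ is a non-cyclic vertex, then this subtree is exactly the finite tree that $\unrollP{X}$ indexes by $v$. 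This covers children of roots in $\nilPart{X}$ and in $\Di{i}{X}$, as well as non-cyclic children in $\unroll{\fullPart{X}}$.
\item If $v$ is the cyclic child $(c',k+1)$ of a root $(c,k)$ in $\unroll{\fullPart{X}}$, then the subtree at $v$ is isomorphic to the unroll tree rooted at $c'$. This follows from the definition of the unroll, since the unroll tree at any cycle vertex looks the same from every depth along its infinite branch.
\end{enumerate}

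The map $v \mapsto$ (the tree of $\unrollP{X}$ indexed by $v$, or by $c'$ in the cyclic case) is injective on the set of depth-one vertices of $\unrollP{X}$, so the inclusion holds as multisets. This is the step we expect to be the main obstacle. Each non-cyclic $v$ is a child of exactly one root, namely $f_X(v)$. For the cyclic case we need that in $\unroll{\fullPart{X}}$ each cycle vertex $c'$ arises as the cyclic child of exactly one root, the one at $f_X(c')$. To make this precise, we would state it as a bijection between the multiset of depth-one vertices and a submultiset of the indexing vertices of $\unrollP{X}$.
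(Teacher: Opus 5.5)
Your proof is correct. It uses the same correspondence as the paper's sketch preceding the lemma: the tree of $\unrollP{X}$ at a vertex $v$ reappears as the child subtree at $v$ of the tree at $f_X(v)$, with the same cyclic/non-cyclic case split. However, you run that correspondence in the other direction.

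The paper starts from a tree $\tree{t}$ of $\unrollP{X}$ whose root has a successor $u$. It then places $\cut{\tree{t}}{d-1}$ inside $\dt{\cut{\tree{t}'}{d}}$, where $\tree{t}'$ is the tree at $u$. Read literally, this maps trees to child positions, so it gives the reverse containment (for trees not rooted at depth $0$). It also leaves implicit both the multiplicities and the cyclic child of each root of $\unroll{\fullPart{X}}$.

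You instead map child positions to trees, which is the direction the statement actually asserts. You reduce to the single inclusion $\dt{\unrollP{X}} \subseteq \unrollP{X}$ through $\dt{\cut{\tree{t}'}{d}} = \cut{\dt{\tree{t}'}}{d-1}$, which is where $d \ge 1$ is used. You get the multiplicities from the fact that $f_X$ is a partial function.

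The step you flag as the main obstacle is immediate. The depth-one vertices of $\unrollP{X}$ correspond bijectively to the vertices of $X$ that have an image, so your map lands exactly on those indices and you in fact obtain $\dt{\unrollP{X}} = \unrollP{X} - \nilPart{X}$. That identity is worth stating explicitly, for two reasons:
\begin{itemize}
\item it contains both directions at once;
\item for an FDDS it immediately gives $\dt{\forest{F}} = \cut{\forest{F}}{d-1}$. This is the forward direction of the later characterisation for FDDS, which the paper proves by a contradiction argument through gluings.
\end{itemize}

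Only the wording needs tightening in two places. For cyclic $v$, the right object is the tree of backward paths ending at $v$, not ``the tree of all predecessors''. In case 2, the root is $(c,0)$, and the precise fact is that the subtree at $(c',j)$ is a depth-shift of the unroll tree at $(c',0)$ for every $j$.
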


We now show that the converse implication also holds.
For this, we introduce Algorithm~\ref{algo:roll_partiel} which proceeds by ``gluing'' trees onto connected components.
In order to formally define this \emph{gluing} operation, we need to introduce some preliminary notions.

\begin{Definition}[Upper and lower parts]
	A \emph{super-leaf} of depth $k$ of a partial system $A$ is the trees of $\dt{A}$ having depth $A$. 

	The multiset of \emph{upper parts} of a partial system $A$ is, in our context, the multiset of super-leaves (of any depth $k$) rooted in the vertices of $A$ at depth $0$.
	The multiset of \emph{lower parts} of $A$ is $\cut{\nilPart{A}}{d-1}$ with $d$ the depth of $A$.
\end{Definition}
Remark that, in other words, the upper parts of a partial system are $\dt{A}$.
Figure~\ref{fig:partie_b_et_h} gives an example of super-leaves, lower parts and upper parts.

\begin{Definition}[Gluing and ungluing]\label{def:collage}
	Let $\tree{t}$ be a tree and $X$ a connected PDDS such that the lower part of $\tree{t}$ is equal to an upper part $h$ of $X$. 
	We call \emph{gluing} of $\tree{t}$ onto $X$ by $h$ the connected partial system obtained by replacing $h$ with $\tree{t}$ in $X$.
	When a tree is glued on an upper part, we say that the upper part has been \emph{filled}.

\end{Definition}

\begin{figure}[t]
	\centering
	\includegraphics[page=4]{figs}
	\caption{Illustrations of the notions of super-leaves, upper parts and lower parts of a partial system.
		The parts outlined in red are the lower parts, those in blue are the upper parts and those in green are the super-leaves of depth $2$.}\label{fig:partie_b_et_h}
\end{figure}

\begin{figure}[t]
	\centering
	\includegraphics[page=5]{figs}
	\caption{Illustration of a gluing operation.
		We glue the tree whose lower part is outlined in red onto the tree whose upper part is outlined in blue.
		The result is the tree to the right of the arrow.}\label{fig:collage1}
\end{figure}
Figure~\ref{fig:collage1} gives an example of a gluing operation.

Like previously said, the goal of Algorithm~\ref{algo:roll_partiel} is to start from a forest $\forest{F}$ of depth $d$ and to build, by a succession of gluings, a PDDS $X$ such that $\cut{\unrollP{X}}{d} = \forest{F}$.
It starts by computing two multisets, namely $\mathcal{H}$ and $\mathcal{B}$, respectively the multiset of upper parts of $\forest{F}$ and the multiset of its lower parts.
After that, it checks that $\mathcal{H}$ is a submultiset of $\mathcal{B}$, according to Lemma~\ref{lemma:closure}.
Then, it defines a variable $X$ initially containing $\forest{F}$.
Now, we enter the core of the procedure.
It takes an element $h$ of $\mathcal{H}$ and $b$ of $\mathcal{B}$ such that $h = b$, the parts it glues.
Then, it selects a tree $\tree{t}$ of $X$ having lower part $b$ and a connected component $C$ of $X$ having upper part $h$.
It fills $h$ with $b$, forming a connected component $C'$, and removes $h$ from $\mathcal{H}$, $b$ from $\mathcal{B}$, and $\tree{t}$ and $C$ from $X$.
It repeats these steps while $\mathcal{H}$ is not empty.
Figure~\ref{fig:run_roll_partiel} shows an example of a run of this algorithm.

\begin{figure}[p]
	\centering
	\begin{adjustbox}{trim=0 {6.1cm} 0 0, clip, width=\textwidth}
		\includegraphics[page=8]{figs}
	\end{adjustbox}
\end{figure}

\begin{figure}[t]
	\centering
	\begin{adjustbox}{trim=0 0 0 {22cm}, clip, width=\textwidth}
		\includegraphics[page=8]{figs}
	\end{adjustbox}
	\caption{Example of execution of Algorithm~\ref{algo:roll_partiel} on the forest given at the top of the page.
		The result is the partial system at the bottom of the page.
		Each iteration of the algorithm is separated by a dashed line.
		The blue vertices are those that were glued during the different iterations.
		To go from the second-to-last line to the final result, we simply glued all the leaves.}
	\label{fig:run_roll_partiel}
\end{figure}

\begin{algorithm}[t]
	\caption{\texttt{PartialRoll}$(\forest{F})$}\label{algo:roll_partiel}
	\begin{algorithmic}[1]
		\State let $\mathcal{H}=\dt{\forest{F}}$ and $\mathcal{B}= \cut{\forest{F}}{d-1}$,
		\If{$\mathcal{H}$ is not a submultiset of $\mathcal{B}$}
		\State \Return $\perp$,
		\EndIf
		\State $X \gets \forest{F}$,
		\While{$\mathcal{H}$ is not empty}\label{algo:roll_partiel:loop}
		\State let $h$ be an element of $\mathcal{H}$ and $C$ its associated connected component,
		\State\label{algo:roll_partiel:C} remove $h$ from $\mathcal{H}$ and $C$ from $X$,
		\State\label{algo:roll_partiel:t} let $b$ be an element of $\mathcal{B}$ equal to $h$ and $\tree{t}$ the finite tree associated with it,
		\State remove $b$ from $\mathcal{B}$ and $\tree{t}$ from $X$,
		\State\label{algo:roll_partiel:C'} let $C'$ be the connected component resulting from the gluing of $\tree{t}$ onto $C$ by $h$,
		\State add $C'$ to $X$,
		\EndWhile\label{algo:roll_partiel:Endloop}
		\State \Return $X$.
	\end{algorithmic}
\end{algorithm}

We show that Algorithm~\ref{algo:roll_partiel} is correct.
In other words, we prove that it returns a PDDS if and only if there exists a PDDS $X$ such that $\cut{\unrollP{X}}{d} = \forest{F}$, with $\forest{F}$ the input forest and $d$ the depth of $\forest{F}$.
For this, observe that it is sufficient to prove that:
\begin{enumerate}
    \item the algorithm terminates, and
    \item the PDDS returned respects the expected condition ($\cut{\unrollP{X}}{d} = \forest{F}$).
\end{enumerate}
Let us first discuss termination.

Observe that an easy induction proof allows us to show that, at the $i$-th iteration of the loop spanning lines~\ref{algo:roll_partiel:loop}--\ref{algo:roll_partiel:Endloop} of Algorithm~\ref{algo:roll_partiel}, the multiset $\mathcal{H}$ contains all the upper parts of the original $\forest{F}$ that are not yet filled.

Note, in addition, that at the $i$-th iteration of this loop, the multiset $\mathcal{B}$ contains $\cut{\nilPart{X}}{d-1}$, with $X$ the current result, since we have already removed the lower part that we just glued, and since only trees have a lower part.

Thanks to these observations, we deduce that an induction proof allows us to prove that, after each iteration of the loop, $\mathcal{H}$ is a submultiset of $\mathcal{B}$, and additionally that $\mathcal{B} = \cut{\nilPart{X}}{d-1}$, with $X$ the PDDS under construction. 
In other words, at the end of each iteration, we can glue a tree of $\nilPart{X}$ onto each element of $\mathcal{H}$.

We conclude that we can perform each operation of the loop of Algorithm~\ref{algo:roll_partiel}, and therefore that it terminates.
In addition, given that each operation in the loop can be done in polynomial time, and that the number of iterations is polynomial in the number of trees in $\forest{F}$, it follows that the algorithm runs in polynomial time.

Thus, as previously stated, to prove Theorem~\ref{th:roll_partiel_poly}, it is sufficient to establish that the condition $\cut{\unrollP{X}}{d} = \forest{F}$ is satisfied.
This is the purpose of Theorem~\ref{th:unroll_partiel_poly}.

\begin{Theorem}\label{th:roll_partiel_poly}
	Algorithm~\ref{algo:roll_partiel} runs in polynomial time with respect to the size of its inputs.
	Thus, solving $P(\forest{X}) = \forest{B}$ is in polynomial time with respect to the sum of the sizes of the $\forest{A}_i$ and $\forest{B}$ where $P = \sum_{i=1}^{m} \forest{A}_i \times \forest{X}^i$ is a polynomial over finite forests and $\forest{B}$ is a finite forest.
\end{Theorem}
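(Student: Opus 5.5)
The theorem has two parts: a running-time claim about Algorithm~\ref{algo:roll_partiel}, and the resulting polynomial-time procedure for $P(\forest{X}) = \forest{B}$ over finite forests. I treat them in that order and reduce the second to the first together with results already stated.

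\textbf{Running time of the algorithm.} Let $\forest{F}$ have depth $d$. Computing $\mathcal{H} = \dt{\forest{F}}$ and $\mathcal{B} = \cut{\forest{F}}{d-1}$ takes linear time. Both multisets have at most $|\forest{F}|$ elements, each of size at most $|\forest{F}|$. The submultiset test needs tree isomorphism on rooted trees, which is polynomial (for instance by AHU canonical encodings), applied to each pair, so it costs $O(|\forest{F}|^2)$ isomorphism tests.

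Each iteration of the loop removes one element from $\mathcal{H}$, so there are at most $|\forest{F}|$ iterations. Inside one iteration we locate $h$ and its component $C$, find some $b\in\mathcal{B}$ isomorphic to $h$ together with its tree $\tree{t}$, and perform the gluing. The gluing replaces the subtree $h$ of $C$ by $\tree{t}$ through the isomorphism between $h$ and the lower part of $\tree{t}$. All of this is polynomial.

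The one point needing care is the size of $X$. A gluing identifies the vertices of $h$ with those of the lower part of $\tree{t}$, so it never increases the number of vertices. Hence $|X|\le|\forest{F}|$ throughout, and every iteration is polynomial in $|\forest{F}|$. Termination was argued before the statement, using the invariant that $\mathcal{H}\subseteq\mathcal{B}$ and $\mathcal{B}=\cut{\nilPart{X}}{d-1}$.

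\textbf{Solving the equation.} Given $P=\sum_i\forest{A}_i\times\forest{X}^i$ and $\forest{B}$, first reject if $\fullPart{\forest{B}}\neq\emptyset$. Then compute the $\unrollP{\forest{A}_i}$ and $\unrollP{\forest{B}}$, which are polynomial in size since unrolls of finite forests are finite. Solve $Q(\forest{Y})=\unrollP{\forest{B}}$ with the algorithm of Section~5 of~\cite{Antonio2026}, obtaining $\forest{F}$. By the preceding lemma, $P(Y)=\forest{B}$ if and only if $Q(\unrollP{Y})=\unrollP{\forest{B}}$. We then split on depths.

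\begin{itemize}
\item If $\depth{\forest{B}}<\depth{P}$, Proposition~\ref{prop:injDesFinis} gives uniqueness of $\forest{F}$, so we run Algorithm~\ref{algo:inverse_unroll} on $\forest{F}$ and check the result.
\item If $\depth{\forest{B}}=\depth{P}$, Lemma~\ref{lemma:casi_annulable} shows that any solution $X$ satisfies $\cut{\unrollP{X}}{\depth{P}}=\forest{F}$. We run Algorithm~\ref{algo:roll_partiel} on $\forest{F}$.
\end{itemize}

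By Theorem~\ref{th:unroll_partiel_poly}, Algorithm~\ref{algo:roll_partiel} returns $X$ with $\cut{\unrollP{X}}{d}=\forest{F}$ whenever such an $X$ exists. We then verify $P(X)=\forest{B}$ by direct multiplication, which is polynomial because $|P(X)|=|\forest{B}|$ when a solution exists.

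\textbf{Main obstacle.} The hardest step is making sure that \emph{any} $X$ with the right cut actually solves the equation. This is where Proposition~\ref{theorem:caracterization_of_counter-example_of_cancelation} enters: products with a finite forest of depth $\depth{\forest{A}_i}\le\depth{P}$ depend only on the cut at that depth. Applying this to each power term $\forest{A}_i\times X^i$, via the morphism of Lemma~\ref{lemma:unroll_distributivity_over_product_partial}, shows $P(X)=P(X')$ whenever the cuts of $\unrollP{X}$ and $\unrollP{X'}$ agree. The final verification step makes the procedure correct even if this argument turns out to be weaker than expected.
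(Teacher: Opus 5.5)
Your proposal is correct and follows the paper's route. It bounds the number of loop iterations and the cost of each gluing, gets correctness from Theorem~\ref{th:unroll_partiel_poly}, and reduces to $Q(\forest{Y})=\unrollP{\forest{B}}$ with the case split on $\depth{\forest{B}}$ versus $\depth{P}$; your invariant $|X|\le|\forest{F}|$ and your argument that any $X$ with the right cut solves $P(X)=\forest{B}$ fill in details the paper leaves implicit. One small correction: in the case $\depth{\forest{B}}=\depth{P}$, Lemma~\ref{lemma:casi_annulable} only covers a single product, so for a general polynomial you should instead deduce $\cut{\unrollP{X}}{\depth{P}}=\forest{F}$ from $Q(\unrollP{X})=Q(\cut{\unrollP{X}}{\depth{P}})$ and the injectivity of Proposition~\ref{prop:injDesFinis} on forests of depth at most $\depth{P}$, exactly as you do in the other case.
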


\begin{Theorem}\label{th:unroll_partiel_poly}
	Let $d\ge1$ be an integer and let $\forest{F}$ be a finite forest of depth at most $d$.
	Then, there exists a partial system $X$ such that $\cut{\unrollP{X}}{d} = \forest{F}$ if and only if $\dt{\forest{F}}$ is a submultiset of $\cut{\forest{F}}{d-1}$.
\end{Theorem}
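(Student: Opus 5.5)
The necessity direction ($\Rightarrow$) is exactly Lemma~\ref{lemma:closure}, so I will only need to prove sufficiency. My plan is to show that, whenever $\dt{\forest{F}}$ is a submultiset of $\cut{\forest{F}}{d-1}$, the output $X$ of Algorithm~\ref{algo:roll_partiel} satisfies $\cut{\unrollP{X}}{d} = \forest{F}$. The termination discussion preceding Theorem~\ref{th:roll_partiel_poly} already shows that the loop can always be executed under this hypothesis. What remains is correctness of the output, which I will establish through a loop invariant.

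\textbf{Invariant.} Let $X$ be the current system and $\mathcal{H}$ the current multiset of unfilled upper parts. I want to show that
\[
\cut{\unrollP{X}}{d} = \forest{F} - \mathcal{H}' ,
\]
where $\mathcal{H}'$ is the multiset of trees of $\forest{F}$ that are cuts of depth-$(d-1)$ subtrees hanging at depth $1$, i.e.\ the copies of $\cut{\tree{t}}{d-1}$ that a later gluing will still have to account for. A cleaner equivalent bookkeeping is to track the multiset
\[
\cut{\unrollP{X}}{d} + \mathcal{B} - \mathcal{H},
\]
and show it always equals $\forest{F}$ plus $|\mathcal{B}|-|\mathcal{H}|$ copies of the lower parts. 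I will commit to the following concrete formulation.

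\textbf{Key step.} Gluing a tree $\tree{t}$, whose lower part is $b=h$, onto a component $C$ at the upper part $h$ does two things to $\cut{\unrollP{\cdot}}{d}$.
\begin{enumerate}
\item The tree $\tree{t}$ no longer appears as a root tree. Every tree of $\unrollP{C}$ rooted at a vertex of $h$ is unchanged up to depth $d$, since $h$ and $\cut{\tree{t}}{d-1}$ coincide.
\item Only one new tree is created: the tree of $\unrollP{C'}$ rooted at the root of $\tree{t}$. Its cut at depth $d$ is exactly $\cut{\tree{t}}{d}=\tree{t}$, because $\forest{F}$ has depth at most $d$.
\end{enumerate}
So the old copy of $\tree{t}$ disappears and is replaced by an equal copy. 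The trees rooted strictly inside $\tree{t}$ now have an extra ancestor path, but cutting at depth $d$ erases no information. The only change is that the subtree previously attached to $h$ had depth $\le d-1$; now it is $\tree{t}$-sized, and the unroll trees rooted at vertices of $\tree{t}$ at depth $\geq 1$ of $\tree{t}$ are unchanged. Hence $\cut{\unrollP{X}}{d}$ is invariant under a gluing, except that the trees rooted at vertices of $C$ above $h$ are now cut at $d$ with the deeper material of $\tree{t}$ available. Those trees are themselves trees of $\forest{F}$, since they are glued-in copies of trees whose cuts were the lower parts.

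\textbf{Main obstacle.} The delicate point is the start: $X=\forest{F}$ itself does \emph{not} satisfy $\cut{\unrollP{X}}{d}=\forest{F}$. Indeed $\unrollP{\forest{F}}$ contains, in addition, all subtrees rooted at depth $\ge1$. I would therefore prove the stronger invariant
\[
\cut{\unrollP{X}}{d} = \forest{F} + \sum_{i\ge 1}\Di{i}{\nilPart{X}}' ,
\]
where the prime denotes restriction to subtrees of unfilled upper parts. Each gluing removes exactly the contribution $\Di{1}{}$ of the filled $h$, because $h$ is now identified with an existing tree counted once. At termination $\mathcal{H}=\emptyset$, so every depth-$1$ subtree of every tree has been identified with a tree of $\forest{F}$, and the extra sum vanishes. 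Making this counting rigorous, i.e.\ showing that each tree of $\Di{i}{\cdot}$ for $i\geq 2$ is likewise absorbed because its own depth-$1$ parent was filled, is where I expect the work to lie. I would handle it by induction on $i$, using $\Di{i}{} = \Di{1}{\Di{i-1}{}}$.
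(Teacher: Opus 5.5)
Your overall architecture is the paper's. The direction ($\Rightarrow$) follows from Lemma~\ref{lemma:closure} together with $\cut{\unrollP{X}}{d-1}=\cut{\forest{F}}{d-1}$. The direction ($\Leftarrow$) is an invariant along Algorithm~\ref{algo:roll_partiel}. Write $T_X(w)$ for the tree of $\unrollP{X}$ rooted at a vertex $w$; there is one such tree per vertex. Stated correctly, the invariant you are reaching for is $\cut{\unrollP{X}}{d}=\forest{F}+\sum_{h\in\mathcal{H}}\unrollP{h}$, with $\mathcal{H}$ the current unfilled upper parts. It is equivalent to the paper's $\cut{\forest{S}_i(X)}{d}=\forest{F}$, because the vertices of $X$ split into original roots and vertices of unfilled upper parts.

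The gap is the inductive step, which is the only substantive part: it is not established, and what you write about it is wrong.
\begin{itemize}
\item A gluing deletes the vertices of $h$. So the trees ``rooted at a vertex of $h$'' do not stay unchanged; they vanish. Hence $\cut{\unrollP{X}}{d}$ is \emph{not} invariant: it loses exactly $\unrollP{h}$, i.e.\ all subtrees of $h$, not merely ``the $\Di{1}{}$ contribution of $h$''.
\item No new tree is created at the root $t$ of $\tree{t}$. Gluing gives $t$ a successor, not new predecessors.
\item $\cutT{t}{d}=\tree{t}$ fails in general, since the current $\tree{t}$ may already carry glued material.
\item When $\tree{t}=C$, the gluing closes a cycle and $T_X(t)$ becomes infinite. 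Your key step never considers this case.
\item Restricting the correction term to $\nilPart{X}$ breaks once a cyclic component still has unfilled upper parts. It must be indexed by $\mathcal{H}$.
\end{itemize}

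What must actually be shown is that $\cut{T_X(w)}{d}$ is unchanged for every surviving vertex $w$. The only preimage set that changes is that of the successor $r$ of the root of $h$, where that root is replaced by $t$. Prove $\cut{T_{\mathrm{new}}(w)}{e}=\cut{T_{\mathrm{old}}(w)}{e}$ for all surviving $w$ by induction on $e\le d$. At $r$ you use $\cut{T_{\mathrm{new}}(t)}{e-1}=\cut{T_{\mathrm{old}}(t)}{e-1}=\cut{h}{e-1}$, which holds because $e-1\le d-1$ and $\cutT{t}{d-1}=h$. This is where the hypothesis enters. Inducting on the cut depth, rather than on $i$, is what handles $t$ recurring inside its own unfolding after a cycle is closed. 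Your text instead suggests the trees above $h$ ``now see deeper material'', which is exactly what the cut at depth $d$ excludes.

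Conversely, your planned induction on $i$ via $\Di{i}{}=\Di{1}{\Di{i-1}{}}$ is unnecessary. Every vertex of depth $\ge 1$ in $\forest{F}$ lies in exactly one upper part and disappears when that part is filled. Moreover $\sum_{h\in\dt{\forest{F}}}\unrollP{h}=\sum_{i\ge1}\Di{i}{\forest{F}}$ is precisely the excess of $\unrollP{\forest{F}}$ over $\forest{F}$.

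Alternatively, you can bypass the algorithm entirely. Take one vertex per tree of $\forest{F}$, and define the transition function from an injective matching of $\dt{\forest{F}}$ into $\cut{\forest{F}}{d-1}$. Then prove $\cut{T_X(v)}{e}=\cut{\tree{t}_v}{e}$ by induction on $e$.
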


\begin{proof}
	Let $X$ be the partial system resulting from the execution of Algorithm~\ref{algo:roll_partiel} on $\forest{F}$.
	The construction of $X$ implies that its vertex set corresponds to the set of roots of the trees of $\forest{F}$.
	Indeed, a vertex of $X$ is either a root and hence also a root of $\forest{F}$, or the root of a tree associated with a lower part of $\forest{F}$ that was glued onto an upper part of $\forest{F}$.
	For brevity, such roots are said to be glued in $X$.

    Let $i\ge 0$ be an integer and let $\forest{R}_i(X)$ be the trees of $\unrollP{X}$ whose root was glued in $X$ during an iteration $j$ such that $j \le i$.
	Finally, let $\forest{S}_i(X) = \nilPart{X} + \forest{R}_i(X)$. 
    Note that, when $i$ is greater than the number of iterations of the loop in lines~\ref{algo:roll_partiel:loop}--\ref{algo:roll_partiel:Endloop} of Algorithm~\ref{algo:roll_partiel}, then $\forest{S}_i(X) = \unrollP{X}$. 
    
	We prove by induction on the number of iterations that $\cut{\forest{S}_i(X)}{d} = \forest{F}$.

	If no gluing has been performed (case $i=0$), then $\forest{S}_i(X) = \forest{F}$ since the value of $\nilPart{X}$ at the $0$-th iteration is $\forest{F}$.
	Let $i\ge0$ be an integer.
	Suppose that the property holds for $i$.
	We show that it holds for $i + 1$.
	Let $C$ be the connected component and $\tree{t}$ the tree as defined on lines~\ref{algo:roll_partiel:C} and~\ref{algo:roll_partiel:t} of Algorithm~\ref{algo:roll_partiel} respectively.
	Let $t$ be the root of $\tree{t}$ and let $s$ be the vertex of $C$ in which we root $\tree{t}$, \ie, the vertex $s_2$ from Definition~\ref{def:collage}.
	Remark that the set of vertices of $\cut{\forest{S}_{i+1}(X)}{d}$ not containing any vertex of the form $(s,j)$ with $0 \le j \le d$ remains unchanged between $\forest{S}_{i}(X)$ and $\forest{S}_{i+1}(X)$.
	Indeed, the modification does not affect them since either the vertex $s$ is not in their connected component, or the distance between $s$ and them is strictly greater than $d$.

	Furthermore, by definition of the gluing operation, the lower part of $\tree{t}$ is equal to the upper part of $C$ that we removed.
	Therefore, the unroll tree $\tree{s}$ rooted in $(s,0)$ and cut at depth $d$ is isomorphic between the two iterations, implying that the same holds for the trees $\cut{\tree{s}}{j}$ for every integer $j$ between $0$ and $d$.
	The property is thus hereditary and $\cut{\unrollP{X}}{d} = \forest{F}$.
\end{proof}

\begin{Theorem}
	Let $d \ge 1$ be an integer and let $\forest{F}$ be a finite forest of depth at most $d$. 
	Then, there exists an FDDS $X$ such that $\cut{\unrollP{X}}{d} = \forest{F}$ if and only if $\dt{\forest{F}} = \cut{\forest{F}}{d-1}$.
\end{Theorem}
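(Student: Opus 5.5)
Both directions come from tracking how the trees of $\unrollP{X}$ are produced when $X$ is an FDDS, that is, when $\nilPart{X}=\emptyset$. We then reuse Theorem~\ref{th:unroll_partiel_poly} and Algorithm~\ref{algo:roll_partiel}, with the observation that equality (rather than inclusion) of multisets forces the output to have no finite-forest part.

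\textbf{Necessity.} Let $X$ be an FDDS with $\cut{\unrollP{X}}{d}=\forest{F}$. Since $\nilPart{X}=\emptyset$, every tree of $\unrollP{X}$ is either an unroll tree of $\unroll{X}$ or a tree of $X$ rooted at a vertex of positive depth. We match the trees of $\cut{\unrollP{X}}{d-1}$ bijectively with the trees of $\dt{\forest{F}}$, reusing the case analysis preceding Lemma~\ref{lemma:closure}.
\begin{itemize}
\item Each tree rooted at $(u,k)$ in $\unrollP{X}$ has its root's successor $(f_X(u),k-1)$ in $\unrollP{X}$; $f_X(u)$ always exists because $X$ is total. Cutting at depth $d-1$, it appears in $\dt{\cdot}$ of the tree of $\cut{\unrollP{X}}{d}$ containing that successor.
\item Conversely, every tree of $\dt{\forest{F}}$ is such a subtree.
\end{itemize}
Counting per vertex of $X$, each vertex $v$ of $X$ contributes exactly one tree (the unroll rooted at $v$, cut at $d-1$) to $\cut{\forest{F}}{d-1}$. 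That same tree also occurs once in $\dt{\forest{F}}$, namely below the tree rooted at $f_X(v)$. Hence the two multisets coincide. Lemma~\ref{lemma:closure} already gives the inclusion; the only new point is the reverse inclusion, which fails for partial systems exactly because roots of $\nilPart{X}$ have no successor.

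\textbf{Sufficiency.} Assume $\dt{\forest{F}}=\cut{\forest{F}}{d-1}$. In particular the inclusion holds, so Theorem~\ref{th:unroll_partiel_poly} (via Algorithm~\ref{algo:roll_partiel}) produces a PDDS $X$ with $\cut{\unrollP{X}}{d}=\forest{F}$. We must show this $X$ is total. By the loop invariant established before Theorem~\ref{th:roll_partiel_poly}, $\mathcal{B}=\cut{\nilPart{X}}{d-1}$ at each step, and each iteration removes one element from each of $\mathcal{H}$ and $\mathcal{B}$. Since the two multisets initially have the same size, $\mathcal{B}$ is empty when $\mathcal{H}$ is. Therefore $\nilPart{X}=\emptyset$, i.e.\ every component has acquired a cycle through gluings, and $X$ is an FDDS.

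\textbf{Main obstacle.} The subtle step is the invariant: we need that gluing never consumes a tree whose lower part is still needed, and that when a tree is glued onto its own component the result closes a cycle rather than failing. I would address this by noting that line~\ref{algo:roll_partiel:t} allows $\tree{t}$ to lie in $C$ itself, which creates the cycle. If that step of the invariant proves delicate, an alternative is to count directly: the number of roots of $\nilPart{X}$ equals $|\cut{\forest{F}}{d-1}|-|\dt{\forest{F}}|=0$, because every root of $\forest{F}$ is glued exactly once.
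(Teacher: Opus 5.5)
Your proof is correct and rests on the same counting as the paper's: each vertex of $X$ having a successor accounts for exactly one tree of $\dt{\forest{F}}$, so $\cut{\forest{F}}{d-1}$ exceeds $\dt{\forest{F}}$ by exactly $\cut{\nilPart{X}}{d-1}$, and for sufficiency you, like the paper, apply this to the system produced by Algorithm~\ref{algo:roll_partiel}. In the necessity direction your explicit per-vertex bijection is cleaner than the paper's ``we cannot glue'' contradiction, and because it holds for every PDDS $X$ with $\cut{\unrollP{X}}{d}=\forest{F}$, your fallback count already settles sufficiency without needing the loop invariant.
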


\begin{proof}
	$(\Rightarrow)$ Let us assume that there exists an FDDS $X$ such that $\cut{\unrollP{X}}{d} = \forest{F}$. 
	Then, by Theorem~\ref{th:unroll_partiel_poly} we have that $\dt{\forest{F}}$ is a submultiset of $\cut{\forest{F}}{d-1}$. 
	To prove that $\dt{\forest{F}} = \cut{\forest{F}}{d-1}$, it remains to show the reverse inclusion. 
	For this, assume by contradiction that $\dt{\forest{F}}$ is a proper submultiset of $\cut{\forest{F}}{d-1}$.
	Then, there exists an element of $\cut{\forest{F}}{d-1}$ whose multiplicity is greater than in $\dt{\forest{F}}$.
	Therefore, we cannot glue each element of $\cut{\forest{F}}{d-1}$ to an element of $\dt{\forest{F}}$.
	We conclude that $\nilPart{Y} \neq \emptyset$ for each possible partial system $Y$ such that $\cut{\unrollP{Y}}{d} = \forest{F}$.
	This implies that $X$ is not an FDDS, a contradiction.

	$(\Leftarrow)$ Let us assume that $\dt{\forest{F}} = \cut{\forest{F}}{d-1}$.
	Therefore, by Theorem~\ref{th:unroll_partiel_poly}, there exists a partial system $X$ such that $\cut{\unrollP{X}}{d} = \forest{F}$.
	We prove that $X$ is an FDDS.
	For this, note that if $\nilPart{X} \neq \emptyset$, then there exists an element of $\cut{\forest{F}}{d-1}$ that is not glued to any element of $\dt{\forest{F}}$.
	But since each element of $\dt{\forest{F}}$ is glued to an element of $\cut{\forest{F}}{d-1}$ in $X$, this implies that some element of $\cut{\forest{F}}{d-1}$ has greater multiplicity than in $\dt{\forest{F}}$, a contradiction.
\end{proof}

\subsection{Solving $P( \forest{X} ) = \forest{B}$}

While Algorithm~\ref{algo:roll_partiel} solves, in polynomial time with respect to the size of the inputs, equations of the form $\sum_{i=0}^{m} \forest{A}_i X^i = \forest{B}$ in $(\Partial,+,\times)$, it does not do so if we restrict ourselves to the pseudo-semiring $(\mathbb{F}_f,+,\times)$.
Indeed, some finite forests can be the partial-system unroll of a partial system with cycles without being that of a forest.

\begin{Example}
	Consider the forest consisting solely of a path of depth at least $1$.
	Then this forest can be rolled into a fixed point.
	However, it is not the unroll of a tree since the unroll of a tree of depth at least $1$ contains at least two connected components.
\end{Example}

We will therefore now refine our approach to search only for solutions that are finite forests.

In order to cover this case, we construct a graph that models the different possible gluings.
In other words, given a forest $\forest{F}$, we construct a graph where each vertex represents a tree $\tree{t}$ of $\forest{F}$ and is composed of a lower port, labeled by the lower part of $\tree{t}$, and several upper ports, labeled by the upper parts of $\tree{t}$.
See Figure~\ref{fig:graphe_de_collage} for an example.
We add edges between a lower port and an upper port if and only if they have the same label.
This graph is called the \emph{gluing graph} of $\forest{F}$.
An example of such a graph is given in Figure~\ref{fig:graphe_de_collage}.
Remark that in this graph, some vertices have no upper port.
In other words, they are the ``leaves'' of this graph.
Our goal is now to prove the following theorem.

\begin{Theorem}\label{th:cara_sol_arbre}
	Let $\forest{F}$ be a finite forest of depth $d$.
	And let $\mathcal{H}$ and $\mathcal{B}$ be the multisets of upper and lower parts respectively.
	Then, there exists a finite forest $\forest{X}$ such that $\cut{\unrollP{\forest{X}}}{d} = \forest{F}$ if and only if the following three conditions are satisfied:
	\begin{enumerate}
		\item\label{p1} $\mathcal{H}$ is a submultiset of $\mathcal{B}$,
		\item\label{p2} there exist $b_1,\ldots,b_n$ elements of $\mathcal{B}$ having strictly greater multiplicity in $\mathcal{B}$ than in $\mathcal{H}$,
		\item\label{p3} for every vertex $u$ of the gluing graph of $\forest{F}$, there exists a path to a vertex $v$ whose lower port is one of the $b_i$.
	\end{enumerate}
\end{Theorem}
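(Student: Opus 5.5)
We will prove both directions by reducing to Theorem~\ref{th:unroll_partiel_poly} and to the structure of the rolled system produced by Algorithm~\ref{algo:roll_partiel}. The guiding observation is this. In any partial system $X$ with $\cut{\unrollP{X}}{d} = \forest{F}$, every vertex of $X$ corresponds to a tree of $\forest{F}$ (its cut unroll tree). Every non-root vertex corresponds to a gluing of its tree's lower part onto an upper part of the tree of its successor. The roots of $\nilPart{X}$ are exactly the trees whose lower part is left unmatched. Hence $X$ determines a choice of edges in the gluing graph of $\forest{F}$: each upper port is used exactly once and each lower port at most once. Conversely, such a matching determines a rolled system.

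\emph{Necessity.} Let $\forest{X}$ be a finite forest with $\cut{\unrollP{\forest{X}}}{d} = \forest{F}$.
\begin{enumerate}
\item Condition~\ref{p1} follows from Lemma~\ref{lemma:closure}.
\item For condition~\ref{p2}, note that $\forest{X}$ is nonempty, since $\forest{F}$ is nonempty. The lower parts of the trees rooted at the roots of $\forest{X}$ are never glued onto an upper part. Counting multiplicities, these unmatched lower parts are exactly the surplus of $\mathcal{B}$ over $\mathcal{H}$, so they supply the required $b_i$.
\item For condition~\ref{p3}, take a vertex $u$ of the gluing graph, i.e.\ a tree of $\forest{F}$, realised by some vertex $x$ of $\forest{X}$. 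Follow successors of $x$ in $\forest{X}$. Because $\forest{X}$ has no cycle, this walk ends at a root $r$. Each step $x \mapsto f(x)$ is an edge of the gluing graph, between the lower port of the tree of $x$ and an upper port of the tree of $f(x)$. This yields a path from $u$ to the vertex of $r$, whose lower port is one of the unmatched $b_i$.
\end{enumerate}
One should make the orientation of paths in the gluing graph precise so that it agrees with this direction.

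\emph{Sufficiency.} Assume the three conditions hold. We construct a matching that assigns to every upper port a lower port with the same label, such that each lower port is used at most once and the unmatched lower ports are exactly one copy of each surplus element. We then roll $\forest{F}$ according to this matching, exactly as in Algorithm~\ref{algo:roll_partiel}. The proof of Theorem~\ref{th:unroll_partiel_poly} shows that the result $X$ satisfies $\cut{\unrollP{X}}{d} = \forest{F}$ for \emph{any} valid sequence of gluings. The remaining task is to choose the matching so that $X$ has no cycle, i.e.\ so that the functional graph "tree $\mapsto$ tree it is glued onto" is acyclic.

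To build such a matching, we grow a rooted forest of used gluing-graph edges by a BFS-like procedure. Start from the vertices whose lower port carries a surplus label $b_i$, and declare those lower ports unmatched; these become the roots. Repeatedly, take a free upper port of a vertex already reached and match it to the lower port of a not-yet-reached vertex with the same label. Condition~\ref{p3} guarantees that every vertex is eventually reached, which keeps the constructed relation acyclic. Condition~\ref{p1} and label multiplicity counting ensure that the numbers of upper and lower ports of each label match exactly once the roots are removed.

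The main obstacle is to show that this greedy process never gets stuck. We must rule out reaching a state where every remaining upper port's label only appears on lower ports of already-reached vertices, while some vertices are still unreached. The approach will be to argue by exchange: each label has exactly as many matchable lower ports as upper ports, and by condition~\ref{p3} every unreached vertex has a path to a root. This path can be used to swap an existing assignment so that it points to the unreached vertex, in the spirit of augmenting paths, without creating a cycle.
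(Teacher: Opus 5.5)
Your necessity argument is correct and is essentially the paper's: the paper reads the three conditions off the acyclic gluing plan of $\forest{X}$, and you read them off $\forest{X}$ directly. The gap is in sufficiency, which is the real content of the theorem.

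First, the initialization is wrong as written. You may not declare every vertex whose lower label is a surplus label a root. For each label $\ell$, exactly $\mathrm{mult}_{\mathcal{B}}(\ell)-\mathrm{mult}_{\mathcal{H}}(\ell)$ lower ports of label $\ell$ stay unmatched (not ``one copy'' per surplus label), and \emph{which} ones you choose matters. Take $\forest{F}$ to be a path of depth $1$ plus an isolated vertex, so $d=1$, $\mathcal{H}=\{\bullet\}$ and $\mathcal{B}=\{\bullet,\bullet\}$. Both trees have lower part $\bullet$, only the path has an upper port, and all three conditions hold. The path of depth $1$ is a solution $\forest{X}$, and its root corresponds to the path tree. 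Declaring both trees roots leaves the path's upper port unfillable. Choosing the isolated vertex as root makes your BFS stop at once with the path unreached, and the only completion is the path glued into its own upper port, a loop. So your repair step must be able to change the set of roots, not only re-point existing gluings.

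Second, the claim that the process never gets stuck is only announced. Condition~3 does not give it directly, because it speaks about paths in the whole gluing graph, not about what your greedy has reached. With correct root counts, a stall means all upper ports of reached vertices are filled, while the unreached set $W$ is self-contained: for every label, $W$ has as many lower ports as upper ports. Hence every completion inside $W$ creates a cycle. Condition~3 then only yields a gluing-graph edge leaving $W$, or a surplus label inside $W$. You still have to specify the re-routing (detaching the subtree of the current filler of the target port, or exchanging a root). You also have to exhibit a quantity that strictly improves, so that the procedure terminates.

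This is exactly the argument the paper supplies, working from the other end. It starts from the possibly cyclic gluing plan produced by Algorithm~\ref{algo:roll_partiel} and breaks cycles one at a time. Given a cycle vertex $s$ and a path $\rho$ of the gluing graph from $s$ to a vertex with a surplus lower label, it takes:
\begin{itemize}
\item the last vertex $u$ of $\rho$ on the cycle;
\item its successor $v$ on $\rho$ and its successor $v'$ in the plan;
\item the vertex $u'$ currently filling the corresponding upper port of $v$.
\end{itemize}
It then replaces $u\to v'$ and $u'\to v$ by $u\to v$ and $u'\to v'$. When $u$ is the end of $\rho$, $u'$ is a current root with the same surplus label, and $u$ becomes a root. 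Each swap either decreases the number of cycles or moves strictly forward along $\rho$, which gives termination. Your proposal needs an exchange-plus-potential argument of this kind before sufficiency is proved.
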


In order to prove this theorem, we introduce the notion of \emph{gluing plan} of a graph.
A gluing plan of a gluing graph is a subgraph containing all vertices of the gluing graph and in which each upper port has exactly one incoming edge and each lower port has at most one outgoing edge.
We will show that each possible solution of Algorithm~\ref{algo:roll_partiel} describes a gluing plan and vice versa.
An example is given in Figure~\ref{fig:graphe_de_collage}.

\begin{figure}[t]
	\centering
	\includegraphics[page=6]{figs}
	\caption{Example of construction of a gluing graph.
		We start from the trees to the left of the central arrow.
		Each tree outlined in blue is an upper part and those outlined in red are the lower parts.
		Each vertex of the graph on the right models the tree whose name is at the center of the circle.
		The bold edges describe a gluing plan.}\label{fig:graphe_de_collage}
\end{figure}

\begin{Lemma}\label{lemma:sol_si_plan}
	Let $\forest{F}$ be a forest of depth $d$.
	Then, there exists $X$ a PDDS such that $\cut{\unrollP{X}}{d} = \forest{F}$ if and only if the gluing graph of $\forest{F}$ admits a gluing plan.
	Furthermore, $X$ is a forest if and only if this gluing plan is acyclic.
	Finally, we can convert between the two representations in polynomial time.
\end{Lemma}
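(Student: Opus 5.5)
The plan is to build an explicit correspondence between runs of Algorithm~\ref{algo:roll_partiel} (more generally, between partial systems $X$ with $\cut{\unrollP{X}}{d} = \forest{F}$) and gluing plans of the gluing graph of $\forest{F}$. We prove the two directions separately, and the acyclicity statement then comes out of the same correspondence.

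\emph{From a plan to a system.} Given a gluing plan, we perform the gluings it prescribes. For every edge from the lower port of the vertex of tree $\tree{t}$ to an upper port $h$ of the vertex of tree $\tree{t}'$, we glue $\tree{t}$ onto the component containing $\tree{t}'$ by $h$. Each upper port has exactly one incoming edge, so every element of $\mathcal{H}=\dt{\forest{F}}$ is filled exactly once. Each lower port has at most one outgoing edge, so each tree is glued at most once. This is exactly one admissible run of the loop of Algorithm~\ref{algo:roll_partiel}, with the plan fixing the choices of $b$ and $\tree{t}$ at each iteration. The inductive argument in the proof of Theorem~\ref{th:unroll_partiel_poly} never used which equal-labelled $b$ was chosen: it only used that the glued lower part equals the removed upper part, so that the unroll tree rooted at $(s,0)$ cut at depth $d$ is preserved. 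Hence the resulting $X$ satisfies $\cut{\unrollP{X}}{d} = \forest{F}$.

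\emph{From a system to a plan.} Conversely, let $X$ be a PDDS with $\cut{\unrollP{X}}{d}=\forest{F}$. The trees of $\forest{F}$ are in bijection with the vertices of $X$, namely the cut unroll trees rooted at $(v,0)$. For a vertex $v$ with successor $u=f_X(v)$, the lower part of the tree of $v$ equals the upper part of the tree of $u$ corresponding to the predecessor $v$; this is the content of the reasoning preceding Lemma~\ref{lemma:closure}. We therefore add the edge from the lower port of $v$ to that upper port of $u$. Each upper port of $u$ corresponds to exactly one predecessor $v$ of $u$, so it receives exactly one edge. Each vertex has at most one successor, so each lower port has at most one outgoing edge. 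This gives a gluing plan.

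\emph{Acyclicity and complexity.} Under this correspondence, the digraph on trees induced by the plan's edges is exactly the transition graph of $X$ (edge $v\to f_X(v)$). Therefore $X$ has no cycle, i.e.\ $X$ is a finite forest, if and only if the plan is acyclic. Both conversions are linear in the number of trees of $\forest{F}$ together with isomorphism tests of cut trees, so they run in polynomial time.

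The main obstacle is the plan-to-system direction. It requires the multiset bookkeeping to be precise: the bijection between upper ports of a tree and predecessors in the glued system must remain valid when several upper parts are isomorphic. It also requires that gluing along a cycle of the plan produces a cycle in $X$, not an inconsistency. To handle this, I will first make the identification of vertices of $X$ with roots of trees of $\forest{F}$ (as in the proof of Theorem~\ref{th:unroll_partiel_poly}) completely explicit, and then argue edge by edge on that labelled structure.
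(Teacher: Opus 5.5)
Your proposal is correct and follows essentially the same route as the paper. For the system-to-plan direction, you join the lower port of the tree of $v$ to the upper port of the tree of $f_X(v)$ corresponding to $v$, which the paper phrases as joining ports whose labels have the same root. For the plan-to-system direction, you perform the prescribed gluings and reuse the inductive argument of Theorem~\ref{th:unroll_partiel_poly}, exactly as the paper does. Your identification of the plan's induced digraph with the transition graph of $X$ gives both directions of the acyclicity claim more cleanly than the paper's brief remarks.

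For the obstacle you flag, the simplest rigorous route is to define $X$ directly on the trees of $\forest{F}$, with $f_X$ read off the plan, and then prove by induction on $k \le d$, simultaneously for all vertices, that the unroll tree of each vertex cut at depth $k$ is $\cut{\tree{t}}{k}$ for its tree $\tree{t}$. This avoids the circularity an iteration-by-iteration argument runs into when a gluing closes a cycle.
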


\begin{proof}
	$(\Rightarrow)$ Let $X$ be a partial system such that $\cut{\unrollP{X}}{d} = \forest{F}$.
	We explain how to construct a gluing plan from $X$.
	Consider $\tree{t}$ a tree of $\unrollP{X}$ with root $(u,0)$.
	For each such tree, we create a vertex with a label of lower port $\cut{\tree{t}}{d-1}$; we say this label has root $(u,0)$.
	Furthermore, this vertex will have one upper port per tree $\tree{t}'$, with root $(v,1)$, of $\dt{\tree{t}}$, with label $\cut{\tree{t}'}{d-1}$, and this label will have root $(v,0)$.
	From this, we deduce that the gluing graph of $\forest{F}$ admits a gluing plan.

	Consider the graph with vertex set constructed as described above from the trees of $\unrollP{X}$.
	Furthermore, we add an edge between a lower port and an upper port if and only if their labels have the same root.
	Therefore, the definition of partial-system unrolls implies that each upper port has in-degree exactly $1$ and each lower port has out-degree at most $1$.
	Furthermore, if $X$ is a forest, then this graph is also acyclic.
	Moreover, the trees in the labels of the two endpoints of each edge are isomorphic.
	Therefore, we take each port label up to isomorphism; the graph we have created is a gluing plan.

	Note that the operations described above can be performed in polynomial time and that the time to create the gluing graph of $\forest{F}$ is polynomial in the size of $\forest{F}$.
	Indeed, the cost of creating each vertex of the gluing graph of $\forest{F}$ is polynomial.
	And the number of ports (upper and lower combined) is bounded by the number of vertices of $\forest{F}$.
	Therefore, since checking that two labels are equal can be done in polynomial time, adding all edges of the gluing graph is also polynomial.

	$(\Leftarrow)$ Suppose now that the gluing graph of $\forest{F}$ admits a gluing plan.
	We can apply the inverse operation of the one described above to form a tree from a vertex of a gluing plan.
	Thus, if we perform the gluings described in the plan, \ie, we glue a lower part of a tree onto an upper part if and only if their associated ports are connected by an edge, a reasoning similar to that in the proof of Theorem~\ref{th:unroll_partiel_poly} implies that the partial system $X$ that we create satisfies $\cut{\unrollP{X}}{d} = \forest{F}$.
	Furthermore, if this plan is acyclic, then $X$ is also acyclic, since we perform no gluing that creates a cycle.
	Finally, the gluings can be performed in polynomial time.
\end{proof}

Building on Lemma~\ref{lemma:sol_si_plan}, we deduce that in order to find in polynomial time a forest $\forest{X}$ such that $\cut{\unrollP{\forest{X}}}{d} = \forest{F}$ where $\forest{F}$ is a forest of depth $d$, it suffices to find a gluing plan of the gluing graph of $\forest{F}$ and perform the gluings it describes.
In other words, we have a functional polynomial-time reduction between the two problems.
Furthermore, this same lemma implies that in order to show that if there exists $\forest{X}$ such that $\cut{\unrollP{\forest{X}}}{d} = \forest{F}$ where $\forest{F}$ is a forest of depth $d$, then the three conditions of Theorem~\ref{th:cara_sol_arbre} are satisfied.

\begin{Lemma}\label{th:cara_sol_arbre:cond_nec}
	Let $\forest{F}$ be a finite forest of depth $d$.
	And let $\mathcal{H}$ and $\mathcal{B}$ be the multisets of upper and lower parts respectively.
	If there exists $\forest{X}$ such that $\cut{\unrollP{\forest{X}}}{d} = \forest{F}$, then the three conditions of Theorem~\ref{th:cara_sol_arbre} are satisfied.
\end{Lemma}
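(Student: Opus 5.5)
The plan is to pass through Lemma~\ref{lemma:sol_si_plan}. Assume a finite forest $\forest{X}$ satisfies $\cut{\unrollP{\forest{X}}}{d} = \forest{F}$. Then the gluing graph of $\forest{F}$ admits a gluing plan $\Pi$, and $\Pi$ is acyclic because $\forest{X}$ is a forest. More concretely, I will use the plan built in the $(\Rightarrow)$ direction of that proof. There, each vertex of $\Pi$ corresponds to a tree of $\unrollP{\forest{X}}$ rooted at $(u,0)$, and each edge joins the lower port of the vertex associated with $u$ to the upper port of the vertex associated with the successor of $u$ in $\forest{X}$. Each of the three conditions will be read off from this plan.

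For condition~\ref{p1}, I count ports. In $\Pi$ every upper port has exactly one incoming edge, and that edge comes from a lower port carrying the same label. Each lower port has out-degree at most one, so distinct upper ports are matched to distinct lower ports. This is an injection from the upper ports into the lower ports that preserves labels. Since the multiset of labels of upper ports is $\mathcal{H}$ and that of lower ports is $\mathcal{B}$, we get $\mathcal{H}\subseteq\mathcal{B}$. This also recovers Lemma~\ref{lemma:closure}.

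For conditions~\ref{p2} and~\ref{p3}, I look at the lower ports with out-degree $0$ in $\Pi$. By construction, these correspond exactly to trees of $\unrollP{\forest{X}}$ whose root $(u,0)$ has $u$ with no successor in $\forest{X}$, that is, to the roots of the trees of $\forest{X}$. Such ports exist because $\forest{X}$ is nonempty; if $\forest{F}$ is empty, the statement is vacuous. Let $b_1,\ldots,b_n$ be the labels of these unmatched lower ports. Every other lower port is matched to a distinct upper port with the same label. So for each label, its multiplicity in $\mathcal{B}$ equals its multiplicity in $\mathcal{H}$ plus the number of unmatched lower ports carrying it. Hence each $b_i$ has strictly greater multiplicity in $\mathcal{B}$ than in $\mathcal{H}$, which gives condition~\ref{p2}. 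For condition~\ref{p3}, take any vertex $u$ of the gluing graph, corresponding to some state $x$ of $\forest{X}$. Follow $f_{\forest{X}}$ from $x$. This reaches the root $r$ of its tree in finitely many steps, because $\forest{X}$ is finite and acyclic. Each step is an edge of $\Pi$, and $\Pi$ is a subgraph of the gluing graph, so this walk is a path from $u$ to the vertex associated with $r$, whose lower port carries one of the $b_i$.

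The main obstacle is condition~\ref{p3} if it is interpreted as being about \emph{port labels} rather than specific ports. In the gluing graph, labels are taken up to isomorphism, and a vertex of $\Pi$ stands for a tree up to isomorphism. So I must check that the edge directions of $\Pi$ genuinely follow the successor relation of $\forest{X}$. I also need to settle the orientation of paths: an edge goes from a lower port to an upper port, that is, from a child tree towards its parent tree. To handle this, I will keep the non-isomorphic labelling with roots $(u,0)$ from the proof of Lemma~\ref{lemma:sol_si_plan} throughout the argument, and only at the end pass to labels up to isomorphism.
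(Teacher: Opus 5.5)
Your proposal is correct and follows essentially the same route as the paper. You obtain an acyclic gluing plan from Lemma~\ref{lemma:sol_si_plan}, then read condition~\ref{p1} off the label-preserving matching of upper ports to lower ports, condition~\ref{p2} off the unmatched lower ports, and condition~\ref{p3} off the paths to the roots of the plan; your explicit multiplicity count and your successor walk in $\forest{X}$ simply spell out steps the paper states more tersely.
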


\begin{proof}
	Suppose that there exists $\forest{X}$ such that $\cut{\unrollP{\forest{X}}}{d} = \forest{F}$.
	Then, Lemma~\ref{lemma:sol_si_plan} implies that the gluing graph of $\forest{F}$ admits a gluing plan.
	We show the three conditions in order.
	\begin{enumerate}
		\item By definition of a gluing plan, each upper port has as predecessor a unique lower port with the same label, and this predecessor has only one successor.
		This implies that the labels of upper ports (\ie, the upper parts of $\forest{F}$) are a submultiset of those of lower ports (\ie, the lower parts of $\forest{F}$).
		\item Also, since a gluing plan is acyclic and a lower port has only one upper port as successor, it follows that some lower ports (\ie, lower parts) have no successor.
		Let $b$ be such a label.
		Then, the multiplicity of $b$ in the multiset of lower parts of $\forest{F}$ is strictly greater than that in the multiset of upper parts of $\forest{F}$.
		\item From the previous point, the labels of the lower ports of the roots of the trees of the gluing plan have multiplicity in the multiset of lower parts of $\forest{F}$ that is strictly greater than that in the multiset of upper parts of $\forest{F}$.
		From there, since every vertex of the gluing plan admits a path to the roots of this plan, and since the gluing plan is a subgraph of the gluing graph with the same vertex set, it follows that every vertex $u$ of the gluing graph admits a path to one of the roots of the gluing plan. \qedhere
	\end{enumerate}
\end{proof}

Let $\forest{F}$ be a finite forest of depth $d$.
	Let $\mathcal{H}$ and $\mathcal{B}$ be the multisets of the upper and lower parts of $\forest{F}$, respectively.
	To show the converse direction, that is, that if the three conditions of Theorem~\ref{th:cara_sol_arbre} hold, then there exists a forest $\forest{X}$ such that $\cut{\unrollP{X}}{d} = \forest{F}$, we give an algorithm which builds an acyclic gluing plan if the three conditions hold.
	
	Our procedure starts from $G$ the gluing graph of $\forest{F}$ and begins by testing whether the three conditions of Theorem~\ref{th:cara_sol_arbre} hold.
	If this is not the case, it answers that no solution exists.
	This is correct by Lemma~\ref{th:cara_sol_arbre:cond_nec}.
	
	If these three conditions do hold, then the algorithm builds $E$ the set of lower port labels induced by $\mathcal{B} - \mathcal{H}$ and it applies Algorithm~\ref{algo:roll_partiel} to $\forest{F}$.
	Since condition~\ref{p1} holds, this call returns a partial system $X$ such that $\cut{\unrollP{X}}{d} = \forest{F}$.
	However, the partial system $X$ may contain cycles, so we have to remove them.
	
	For this, we introduce a subprocess that will allow us to swap arcs of $P$, the gluing plan induced by $X$, in order to remove its cycles.
	It takes as input $P$ as well as $s$ a vertex of a cycle of $P$ and $\rho$ a simple path of $G$ joining $s$ to a vertex having a lower port label included in $E$.
	It looks for the vertex $u$ of $\rho \cap C$ being the furthest from $s$, with $C$ the cycle of $P$ containing $s$, as well as the vertices $u'$, $v$ and $v'$ such that $v$ and $v'$ are the successors of $u$ in $\rho$ and $C$ respectively, while $u'$ is a predecessor of $v$ by an upper port with label $p$, with $p$ the label of the lower port of $u$.
	It replaces in $P$ the arcs from $u$ to $v'$ and from $u'$ to $v$ by arcs from $u$ to $v$ and from $u'$ to $v'$, respectively.
	Note that $v$ may be undefined if $u$ is the last vertex of $\rho$, in which case $u'$ is a vertex having no successor in $P$ and having a lower port label included in $E$.
	From there, we simply add an arc from $u'$ to $v'$ and delete the outgoing arc of $u$.
	
	We apply this procedure:
	\begin{enumerate}
		\item\label{c1} either from a new vertex $s'$ of a cycle of $P$ and a new path $\rho'$ if the number of cycles of $P$ has decreased through this swap,
		\item\label{c2} or from the vertex $u$ and the path $\rho$ if we are not in the previous case.
	\end{enumerate}
	An example of an execution of this algorithm is presented in Figure~\ref{fig:roll_partiel_tree_run}.

\begin{figure}[p]
    \centering
    \begin{adjustbox}{trim=0 {11cm} 0 0, clip, width=\textwidth}
        \includegraphics[page=9]{figs}
    \end{adjustbox}
\end{figure}

\begin{figure}[t]
    \centering
    \begin{adjustbox}{trim=0 0 0 {18.2cm}, clip, width=\textwidth}
        \includegraphics[page=9]{figs}
    \end{adjustbox}
    \caption{Example of a run of the algorithm allowing rolling a finite forest in a new finite forest.
    The result is the gluing plan at the bottom of the page.
    Each step is separated by a dotted line. 
    The arcs in bold represent the modification carried out to incorporate the cycles in the trees.}
    \label{fig:roll_partiel_tree_run}
\end{figure}
	
\begin{Lemma}\label{lemma:algo_collage_correct}
		Let $G$ be the gluing graph of a forest $\forest{F}$.
		Our algorithm returns an acyclic gluing plan of $G$ in polynomial time if and only if one exists.
	\end{Lemma}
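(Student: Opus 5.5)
We prove the two directions separately, and then bound the running time.

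\emph{Soundness.} If the three conditions of Theorem~\ref{th:cara_sol_arbre} fail, Lemma~\ref{th:cara_sol_arbre:cond_nec} together with Lemma~\ref{lemma:sol_si_plan} shows that no acyclic gluing plan exists, so answering negatively is correct. Conversely, whenever the algorithm outputs something, that output is a gluing plan. Indeed, the initial $P$ is the plan induced by the output of Algorithm~\ref{algo:roll_partiel}, which is a gluing plan by Lemma~\ref{lemma:sol_si_plan}. Each swap preserves the defining invariants: every upper port keeps exactly one incoming arc, every lower port keeps at most one outgoing arc, and every arc joins ports with equal labels. For the exchange of $(u,v')$, $(u',v)$ into $(u,v)$, $(u',v')$ the labels match, because the lower ports of $u$ and $u'$ both carry the label $p$ of the upper port of $v$. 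The label of $v'$'s upper port is also $p$, since $(u,v')$ was an arc. In the degenerate case, $u'$ is free with label $p$ in $E$, so adding $(u',v')$ and deleting $u$'s outgoing arc keeps $v'$'s in-degree equal to $1$ and frees $u$'s lower port. It therefore suffices to show that the procedure ends with no cycle.

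\emph{Termination and acyclicity.} I will use a potential argument. Condition~\ref{p3} provides, for any cycle vertex $s$, a path $\rho$ in $G$ to a vertex whose lower label lies in $E$; condition~\ref{p2} guarantees $E \neq \emptyset$. Since $u$ is the vertex of $\rho\cap C$ furthest from $s$ along $\rho$, the remainder $\rho'$ of $\rho$ after $u$ is disjoint from $C$. After the swap, $u$ points outside $C$, so $C$ is broken. Either $C$ now hangs off a structure that no longer closes a cycle, in which case the number of cycles strictly decreases and we restart (case~\ref{c1}). Or the swap merged $C$ into another cycle through $u'$, in which case we continue from $u$ along $\rho'$ (case~\ref{c2}). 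In case~\ref{c2} the suffix of $\rho$ still to be processed strictly shortens. I therefore take as potential the pair (number of cycles of $P$, length of the remaining path) in lexicographic order. A cycle can never be created: every swap reroutes a single pair of arcs, and the new arc $(u,v)$ points along $\rho'$, away from the current cycle. The number of cycles is at most $|G|$ and each path has length at most $|G|$, so there are at most $|G|^2$ swaps, each computable in polynomial time. The paths $\rho$ are found by BFS in $G$.

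\emph{Main obstacle.} The delicate point is the claim in case~\ref{c2} that the swap does not create a new cycle elsewhere and that the remaining path stays usable. The cycle through $u'$ now contains $u$, and we must check that $\rho'$ still leads from $u$ to an $E$-labelled vertex and avoids revisiting the cycle just merged into. I would handle this by an induction invariant: after each case~\ref{c2} swap, $u$ lies on a cycle $C'$ and the suffix $\rho'$ meets $C'$ only at $u$. This holds because $\rho'$ was disjoint from $C$, and only $C$ and the cycle through $u'$ were merged. If $\rho'$ meets that other cycle, the next choice of a furthest intersection handles it. When $\rho'$ is exhausted, the degenerate case applies and produces a free lower port of label in $E$, which destroys the cycle, so the cycle count drops. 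Since each step either drops the cycle count or shortens $\rho$, the process ends with an acyclic gluing plan, and Lemma~\ref{lemma:sol_si_plan} converts it to a forest in polynomial time.
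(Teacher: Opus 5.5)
Your check that each swap produces a gluing plan (labels match, in-degrees of upper ports stay $1$, out-degrees of lower ports stay $\le 1$) is correct. The termination/acyclicity argument, however, rests on a wrong picture of what a swap does, so your lexicographic potential is never actually shown to decrease.

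\textbf{The misassigned case.} Every vertex of a gluing plan has out-degree at most one, so each connected component contains at most one cycle. If the swap ``merges $C$ into another cycle through $u'$'', then $v$ lay in a \emph{different} component that carried a cycle. After the swap these components form one connected component with one cycle, so the cycle count drops. That is case~\ref{c1}, not case~\ref{c2}.

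\textbf{The case you omit.} What really triggers case~\ref{c2} is that $v$ lies in a tree hanging off $C$, in the same component as $u$. The forward path in $P$ from $v$ into $C$ and along $C$ up to $u$ survives the swap. It stops at $u$, so it does not use $u\to v'$. It cannot use $u'\to v$, since that would put $v$ on a cycle, and $v\notin C$. Hence the new arc $u\to v$ closes a \emph{new} cycle through $u$ and $v$ while $C$ is destroyed.

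\textbf{Consequences for your argument.} This refutes your claim that a cycle can never be created. It also refutes your invariant that the remaining suffix meets the new cycle only at $u$. That invariant is in fact self-contradictory: once $u\to v$ is an arc and $u$ lies on a cycle, $v$ lies on the same cycle. Worse, the fact that the new cycle contains $v$ is exactly what makes the furthest intersection with $\rho$ advance when the procedure is re-applied from $u$. Under your invariant it would select $u$ and the same $v$ again, and nothing would shrink.

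\textbf{What is missing.} You need a case analysis on the position of $v$, plus a proof that the number of cycles never increases. The paper obtains the latter by showing that the components of $u$ and of $v$ (of $u'$ in the degenerate case) become a single weakly connected component after the swap, which therefore carries at most one cycle. The cases are then:
\begin{itemize}
\item If $v$ is in the component of $u$, the count is unchanged and the new cycle contains $v$. This is case~\ref{c2}, which can occur at most $|\rho|$ times in a row.
\item If $v$ is in another component containing a cycle, two cycles become one.
\item If that other component is a tree, deleting $u'\to v$ splits it into two trees, and the two added arcs join trees without closing a cycle, so $C$ simply disappears.
\item If $v$ is undefined, a successor-free $u'$ with lower label $p$ exists because the free lower-port labels of any gluing plan are exactly $\mathcal{B}-\mathcal{H}$. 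You use this fact without proof. Again $C$ disappears.
\end{itemize}
With this analysis your potential does decrease at every step, and the $|G|^2$ bound follows.
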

	
	\begin{proof}
		$(\Leftarrow)$ Suppose that the gluing graph admits an acyclic gluing plan.
		Then, by Lemma~\ref{lemma:sol_si_plan}, there exists a forest $\forest{X}$ such that $\cut{\unrollP{\forest{X}}}{d} = \forest{F}$, with $\forest{F}$ the forest induced by the gluing graph and $d$ the depth of $\forest{F}$.
		Thus, Lemma~\ref{th:cara_sol_arbre:cond_nec} implies that the three conditions of Theorem~\ref{th:cara_sol_arbre} are satisfied.
		Let us now show that the algorithm terminates.
		
		Note that at each step of the arc swapping procedure, we do not modify the in-degrees of the vertices $v$ and $v'$ and we do not increase the out-degrees of the vertices $u$ and $u'$.
		Thus, each iteration of the swapping procedure produces a gluing plan and we can indeed apply it recursively as in the algorithm.
		From there, to prove termination, it suffices to prove that there is a finite number of calls to this procedure.
		
		Let $P$ be a gluing plan of $G$, let $u,u',v,v'$ as well as $\rho$ be as in the swapping procedure and let $P'$ be the gluing plan produced at the end of the swapping procedure.
		Since the algorithm terminates when we have removed all the cycles of $P$, let us first show that the number of cycles does not increase.
		To do so, we prove that the result of the swapping procedure is a connected component, since a connected component contains at most one cycle.

		Note that if $v$ is well defined then $u$ and $v$ are not in the same cycle by the choice of $u$.
		This means that every simple path towards $u$ in $P$ is also present in $P'$, whether $v$ is defined or not, since we have modified arcs that belonged to none of these paths.
		Thus, since $v'$ is the successor of $u$ in a cycle of $P$, it follows that $P'$ contains a simple path between $v'$ and $u$ and also between $u'$ and $u$, since $P'$ contains an arc from $u'$ to $v'$.
		From there, if $v$ is not defined then the set of vertices having a simple path towards $u'$ in $P$ has a simple path towards $u$ in $P'$.
		This implies that the result is indeed connected.
		If now $v$ is well defined, all the simple paths towards $v$ in $P$ not going through the arc from $u'$ to $v$ are also present in $P'$.
		And since in $P'$ there exists a simple path from $u'$ to $u$ and therefore also from $u'$ to $v$, all the vertices connected to $v$ in $P$ by a simple path going through the arc from $u'$ to $v$ are also connected by a simple path towards $v$ in $P'$.
		In this case too the result is connected.
				
		We now prove that the successive application of the swapping procedure on the vertices of $\rho$ leads to a decrease in the number of cycles.
		For this, let us point out that three cases are possible depending on the existence of the vertex $v$ and on its position relative to $u$.
		
		The first case is that the vertex $v$ is well defined and that it is in the same connected component as $u$ in $P$.
		This means that there exists a simple path from $v$ to $u$ in $P$ and also in $P'$, as justified above.
		And since $P'$ contains the arc from $u$ to $v$, we conclude that $P'$ contains a cycle going through $u$ and $v$.
		This corresponds to case~\ref{c2} of the algorithm and we can apply this reasoning inductively either until $v$ is not in the same connected component as $u$, or until $u$ is the last vertex of $\rho$ and therefore $v$ is no longer defined.

		The second case is that $v$ is defined and that it is not in the same connected component as $u$ in $P$.
		Let us observe that two cases are again possible.
		Either the connected component containing $v$ contains a cycle.
		Since, as shown previously, the result of the swapping procedure is connected and since a connected component contains at most one cycle, it follows that the number of cycles has decreased by one.
		Or the connected component containing $v$ does not contain a cycle.
		Let us observe that it is not possible to create a cycle by deleting arcs.
		Hence, the removal of the arc from $u'$ to $v$ did not create a cycle, while the removal of the one from $u$ to $v'$ removed one.
		We therefore have to show that the addition of the arcs from $u'$ to $v'$ and from $u$ to $v$ does not create a cycle.
		Let us point out that since the connected component containing $v$ does not contain a cycle, the removal of the arc from $u'$ to $v$ produced two disjoint connected components.
		From there, since adding arcs between disjoint connected components cannot create a cycle, it follows that the addition of the two arcs does not create one.	 
		
		The third and last case is that $v$ is not defined.
		Note that the multiset of the lower port labels of the vertices of $P$ having no successor corresponds exactly to $\mathcal{B} - \mathcal{H}$.
		Hence, the vertex $u'$ defined in the arc swapping procedure does exist.
		From there, using the same kind of reasoning as for the previous case, it follows that the number of cycles decreases by one.
		
		We conclude that, starting from a cyclic connected component $D$ of $P$, the algorithm will produce a gluing plan $P'$ containing one cycle fewer than $P$ after at most $|\rho| \le |G|$ iterations.
		Hence, since after a decrease in the number of cycles we apply case~\ref{c1} of the algorithm, which starts again from a new cyclic connected component, we can apply this reasoning by induction and therefore the algorithm terminates after at most $|G|^2$ iterations and produces an acyclic gluing plan.
		Indeed, we have to consider at most $|G|$ simple paths of size at most $|G|$.
		
		$(\Rightarrow)$ 
		Suppose that our algorithm returns a graph.
		This then implies that it has terminated.
		Hence, by the previous point, it has produced an acyclic gluing plan.
	\end{proof}
	
	Lemmas~\ref{lemma:algo_collage_correct},~\ref{lemma:sol_si_plan} and~\ref{th:cara_sol_arbre:cond_nec} allow us to conclude the proof of Theorem~\ref{th:cara_sol_arbre}.
	Indeed, if the three conditions of Theorem~\ref{th:cara_sol_arbre} are satisfied, then our algorithm returns a graph which, by Lemma~\ref{lemma:algo_collage_correct}, is an acyclic gluing plan.
	For the other direction, if there exists an acyclic gluing plan, Lemmas~\ref{lemma:sol_si_plan} and~\ref{th:cara_sol_arbre:cond_nec} imply that the three conditions of Theorem~\ref{th:cara_sol_arbre} are satisfied.
	We thus conclude with the last result of this thesis.
	
	\begin{Theorem}
		Let $\forest{F}$ be a finite forest of depth $d$.
		If it is possible, we build a forest $\forest{X}$ such that $\cut{\unrollP{\forest{X}}}{d} = \forest{F}$ in polynomial time.
	\end{Theorem}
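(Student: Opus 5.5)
The plan is to chain the results just established into one polynomial-time procedure, and to show it both returns a correct forest when one exists and reports failure otherwise. Given $\forest{F}$ of depth $d$, I first build the gluing graph $G$ of $\forest{F}$. For this I compute, for every tree $\tree{t}$ of $\forest{F}$, its lower part $\cut{\tree{t}}{d-1}$ and its upper parts (the cut trees of $\dt{\tree{t}}$). I then add an edge between a lower port and an upper port whenever their labels are isomorphic. This is polynomial: the number of ports is bounded by the number of vertices of $\forest{F}$, and deciding isomorphism of rooted trees is polynomial (e.g.\ via canonical forms). So $G$ has polynomial size, as already observed in the proof of Lemma~\ref{lemma:sol_si_plan}.

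Next, I test the three conditions of Theorem~\ref{th:cara_sol_arbre} directly on $G$.
\begin{enumerate}
\item Condition~\ref{p1} is a multiset-inclusion test on the sorted labels.
\item Condition~\ref{p2} amounts to computing $\mathcal{B}-\mathcal{H}$ and checking that it is nonempty.
\item Condition~\ref{p3} is a reachability question. I mark the vertices whose lower port label lies in $\mathcal{B}-\mathcal{H}$ and run one backward BFS from them; every vertex of $G$ must be reached.
\end{enumerate}
If any test fails, Lemma~\ref{th:cara_sol_arbre:cond_nec} shows that no forest $\forest{X}$ exists, and the algorithm answers so.

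Otherwise, I run Algorithm~\ref{algo:roll_partiel} on $\forest{F}$. By condition~\ref{p1} and Theorem~\ref{th:unroll_partiel_poly}, it returns in polynomial time (Theorem~\ref{th:roll_partiel_poly}) a PDDS $X$ with $\cut{\unrollP{X}}{d}=\forest{F}$. By Lemma~\ref{lemma:sol_si_plan}, I convert $X$ in polynomial time into a gluing plan $P$. I then apply the arc-swapping procedure to $P$, using the paths supplied by condition~\ref{p3}; each path is found by BFS in $G$. By Lemma~\ref{lemma:algo_collage_correct}, this yields an acyclic gluing plan after at most $|G|^2$ swaps, each of polynomial cost. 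Finally, the converse direction of Lemma~\ref{lemma:sol_si_plan} turns the acyclic plan, in polynomial time, into a finite forest $\forest{X}$ with $\cut{\unrollP{\forest{X}}}{d}=\forest{F}$.

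The main point requiring care is the correctness of the cycle-elimination phase, namely that each swap preserves the gluing-plan property and that the number of cycles strictly decreases after at most $|\rho|$ swaps. Lemma~\ref{lemma:algo_collage_correct} already provides this, so the remaining work is bookkeeping. One must check that the polynomial bounds compose: $|G|$ is polynomial in $|\forest{F}|$, the number of swaps is $O(|G|^2)$, and the final gluings reconstruct a forest whose size is bounded by the number of roots of trees of $\forest{F}$.
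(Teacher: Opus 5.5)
Your proposal is correct and follows the paper's own route. You build the gluing graph, reject via Lemma~\ref{th:cara_sol_arbre:cond_nec} if one of the three conditions of Theorem~\ref{th:cara_sol_arbre} fails, and otherwise run Algorithm~\ref{algo:roll_partiel}, read off a gluing plan, remove its cycles by the arc swaps of Lemma~\ref{lemma:algo_collage_correct}, and turn the acyclic plan into a forest with Lemma~\ref{lemma:sol_si_plan}. Your extra details (canonical forms for rooted-tree isomorphism, a backward BFS for condition~\ref{p3}) only make explicit the polynomial bounds the paper asserts; as in the paper, your appeal to Lemma~\ref{lemma:algo_collage_correct} when the conditions hold really rests on its termination argument, which uses only those three conditions.
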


\bibliography{biblio}

\end{document}